\newtheorem{observation}[proposition]{Observation}
\newtheorem{myclaim}{Claim}
\newcommand{\removelatexerror}{\let\@latex@error\@gobble}
\newcommand{\floor}[1]{\left\lfloor #1 \right\rfloor}
\newcommand*{\myproofname}{Proof}
\renewenvironment{framed}[1][\hsize]
   {\MakeFramed{\hsize#1\advance\hsize-\width \FrameRestore}}%
   {\endMakeFramed}
\newcommand{\defproblem}[3]{
\vspace{1mm}
\noindent\fbox{
  \begin{minipage}{\columnwidth}
  \begin{tabular*}{\columnwidth}{@{\extracolsep{\fill}}lr} #1 \\ \end{tabular*}
  {\bf{Input:}} #2  \\
  {\bf{Question:}} #3
  \end{minipage}
}\vspace{1mm}}
\newcommand{\aAssignFull}{\textsc{\textsc{Optimal Assignment}}}
\newcommand{\bAssignFull}{\textsc{Ordered Optimal Assignment}}
\newcommand{\cAssignFull}{\textsc{Subset Optimal Assignment}}
\newcommand{\dAssignFull}{\textsc{Upper-Bounded Optimal Assignment}}
\newcommand{\vAAssignFull}{\textsc{\textsc{Verify Optimal Assignment}}}
\newcommand{\vBAssignFull}{\textsc{Verify Ordered Optimal Assignment}}
\newcommand{\vCAssignFull}{\textsc{Verify Subset Optimal Assignment}}
\newcommand{\vDAssignFull}{\textsc{Verify Upper-Bounded Optimal Assignment}}
\newcommand{\vAAssign}{\textsc{Verify-OA}}
\newcommand{\vCAssign}{\textsc{Verify-SOA}}
\newcommand{\vDAssign}{\textsc{Verify-UOA}}
\newcommand{\vAAssignShort}{\textsc{\textsc{VOA}}}
\newcommand{\vDAssignShort}{\textsc{VUOA}}
\newcommand{\aAssign}{\textsc{\textsc{OA}}}
\newcommand{\cAssign}{\textsc{SOA}}
\newcommand{\dAssign}{\textsc{UOA}}
\newcommand{\coVAAssign}{$\overline{\textsc{\textsc{Verify-OA}}}$}
\newcommand{\coVCAssign}{$\overline{\textsc{Verify-SOA}}$}
\newcommand{\coVDAssign}{$\overline{\textsc{Verify-UOA}}$}
\newcommand{\aoptimality}{optimality}
\newcommand{\boptimality}{ordered optimality}
\newcommand{\coptimality}{subset optimality}
\newcommand{\doptimality}{upper-bounded optimality}
\newcommand{\aoptimal}{optimal}
\newcommand{\boptimal}{ordered optimal}
\newcommand{\coptimal}{subset optimal}
\newcommand{\doptimal}{upper-bounded optimal}
\newcommand{\hc}{\textsc{Hamiltonian Cycle}}
\newcommand{\tildeK}{\widetilde{k}}
\newcommand{\mlassign}{\textsc{Multi-Layered Assignment}}
\newcommand{\Mod}[1]{\ (\mathrm{mod}\ #1)}
\newcommand{\Thcref}[1]{Theorem~\ref{#1}}
\newcommand{\Alcref}[1]{Algorithm~\ref{#1}}
\newcommand{\Prcref}[1]{Proposition~\ref{#1}} 
\newcommand{\Clcref}[1]{Claim~\ref{#1}} 
\newcommand{\Obcref}[1]{Observation~\ref{#1}}   
\newcommand{\Lecref}[1]{Lemma~\ref{#1}}
\newcommand{\Cocref}[1]{Corollary~\ref{#1}}
\newcommand{\Ficref}[1]{Figure~\ref{#1}}
\newcommand{\nagents}{\mathrm{\# agents}}
\newcommand{\nalloc}{\mathrm{\# alloc}}
\newcommand{\nitems}{\mathrm{\# items}}
\newcommand{\Tcref}[1]{T.~\ref{#1}}
\newcommand{\assign}{\textsc{Assignment}}
\newcommand{\MCIS}{\textsc{Multicolored Independent Set}}
\newcommand{\true}{\mathrm{true}}
\newcommand{\false}{\mathrm{false}}
\newcommand{\dd}{\mathrm{d}}
\newcommand{\cc}{\mathrm{c}}
\newcommand{\OO}{\mathcal{O}}
\newcommand{\RR}{\mathcal{R}}
\newcommand{\yes}{\textsf{Yes}}
\newcommand{\no}{\textsf{No}}
\newcommand{\yesinstance}{\yes-instance}
\newcommand{\POLY}{\textsf{P}}
\newcommand{\NP}{\textsf{NP}}
\newcommand{\coNP}{\textsf{coNP}}
\newcommand{\coNPpoly}{\textsf{coNP/poly}}
\newcommand{\FPT}{\textsf{FPT}}
\newcommand{\XP}{\textsf{XP}}
\newcommand{\WO}{\textsf{W[1]}}
\newcommand{\WOH}{\textsf{W[1]}-hard}
\newcommand{\coWOH}{\textsf{coW[1]}-hard}
\newcommand{\NPH}{\textsf{NP}-hard}
\newcommand{\CONPH}{\textsf{coNP}-hard}
\newcommand{\paraH}{\textsf{para-NP}-hard}
\newcommand{\paraCoH}{\textsf{para-coNP}-hard}
\newcommand{\ETH}{\textsf{ETH}}
\tikzstyle{vertex}=[fill=white, draw=black, shape=circle, minimum size=1.2cm]
\tikzstyle{rec}=[fill=white, draw=black, shape=rectangle, minimum width=1.4cm, minimum height=0.7cm]
\tikzstyle{edge}=[->]
\begin{document}\sloppy

% \begin{textblock}{8}(6.4,1)
% \noindent\Huge\textbf{Appendix}
% \end{textblock}

\title{Verification of Multi-Layered Assignment Problems}
\author{Barak Steindl \and Meirav Zehavi}
\authorrunning{Barak Steindl and Meirav Zehavi}

\institute{Ben Gurion University of the Negev, Beer-Sheva, Israel} 

\maketitle

%!TEX root = Main.tex
\begin{abstract}
The class of assignment problems is a fundamental and well-studied class in the intersection of Social Choice, Computational Economics and Discrete Allocation. In a general assignment problem, a group of agents expresses preferences over a set of items, and the task is to allocate items to agents in an ``optimal'' way. A verification variant of this problem includes an allocation as part of the input, and the question becomes whether this allocation is ``optimal''. In this paper, we generalize the verification variant to the setting where each agent is equipped with {\em multiple} incomplete preference lists: Each list (called a {\em layer}) is a ranking of items in a possibly different way according to a different criterion. 

In particular, we introduce three multi-layer verification problems, each corresponds to an optimality notion that weakens the notion of global optimality (that is, pareto optimality in multiple layers) in a different way. Informally, the first notion requires that, for each group of agents whose size is exactly some input parameter $k$, the agents in the group will not be able to trade their assigned items among themselves and benefit in at least $\alpha$ layers; the second notion is similar, but it concerns all groups of size at most $k$ rather than exactly $k$; the third notion strengthens these notions by requiring that groups of $k$ agents will not be part of possibly larger groups that benefit in at least $\alpha$ layers. We study the three problems from the perspective of parameterized complexity under several natural parameterizations such as the number of layers, the number of agents, the number of items, the number of allocated items, the maximum length of a preference list, and more. We present an almost comprehensive picture of the parameterized complexity of the problems with respect to these parameters.
\end{abstract}
%!TEX root = Main.tex
\section{Introduction}
The field of resource allocation problems has been widely studied in recent years. A central class of problems in this field is the class of {\em assignment problems} \cite{RePEc:ecm:emetrp:v:66:y:1998:i:3:p:689-702,Bogomolnaia2001ANS,Abdulkadirog1999HouseAW,10.1007/978-3-540-30551-4_3,RePEc:eee:jetheo:v:52:y:1990:i:1:p:123-135,ijcai2017-12,soton425734,AAAI1817396,gourves:hal-01741519}. In the most general, abstract formulation of an assignment problem (to which we will simply refer as the {\em general assignment problem}), an instance consists of a set of $n$ agents and a set of $m$ items. Each agent (human, company, or any other entity) expresses preferences over a subset of items, and the objective is to allocate items to agents in an ``optimal'' way. A verification variant of the general assignment problem includes, in addition, some allocation as part of the input, and the question becomes whether this allocation is ``optimal''. 

Different notions of optimality have been considered in the literature, but the one that has received the most attention is {\em pareto optimality} (see, e.g., \cite{RePEc:ecm:emetrp:v:66:y:1998:i:3:p:689-702,ijcai2017-12,soton425734}). Intuitively, an assignment $p$ is called {\em pareto optimal} if there is no other assignment $q$ that is at least good as $p$ for all agents and also strictly better than $p$ for at least one agent. An equivalent requirement for an assignment to be pareto optimal is to admit no {\em trading cycle} (see, e.g., Aziz et al.~\cite{ijcai2017-12,soton425734}). Intuitively, an assignment admits a trading cycle if there exists a set of agents who all benefit by exchanging their allocated items among themselves (as indicated by the cycle). It is known to imply that the problem of verifying whether an assignment is pareto optimal can be solved efficiently in polynomial time (see, e.g., Aziz et al.~\cite{ijcai2017-12,soton425734}). Even the seemingly more difficult problem of finding a pareto optimal assignment can also be solved in polynomial time, as shown by Abdulkadiroglu and Sönmez \cite{RePEc:ecm:emetrp:v:66:y:1998:i:3:p:689-702}.

Besides their theoretical interest, these problems (both decision and verification variants) have also practical importance. Algorithms for both variants are applied in a variety of real-world situations, such as assigning jobs to workers, campus houses to students, time stamps to users on a common machine, players to sports teams, graduating medical students to their first hospital appointments, and so on. In particular, algorithms for verifying whether an assignment is optimal are useful in cases where we already have an assignment and we want to check whether it is pareto optimal; if it is not, we may seek a ``strategy'' to improve the assignment (e.g., a trading cycle). For example, when the input is large, finding an optimal assignment may be computationally expensive, so a better choice will be to use some heuristic to find an initial assignment, verify whether it is optimal or not, and proceed accordingly.

In the general assignment problem, each agent has exactly one preference list. In this paper, we consider an extension to the verification variant of the general assignment problem in which the agents may have multiple preference lists (rather than only one).  The preference lists may represent a single subjective criterion according to which each agent ranks the items. However, it may also represent a combination of different such criteria: each agent associates a score to each item per criterion, and a single preference list is derived from some weighted sum of the scores. In many cases, it is unclear how to combine scores associated with criteria of inherently incomparable nature - that is like ``comparing apples with oranges''. Additionally, even if a single list can be forcefully extracted, most data is lost. 

Thus, the classic model seems somewhat restrictive in real world scenarios where people rely on highly {\em varied} aspects to rank other entities. For example, suppose that there are $n$ candidates who need to be assigned to $n$ positions. The recruiters may rank the candidates for each position according to different criteria, such as academic background, experience, impression by the interview, and so on \cite{KINICKI1985117,alderfer1970personal}. Moreover, when assigning campus houses to students, the student may rank the houses by multiple criteria such as their location (how close the house is to their faculty), rent, size etc \cite{wu2016students}. Transferable skills such as verbal communication, adaptability, problem solving and so on may also be factors when making decisions. This motivates the employment of multiple preference lists where each preference list (called a {\em layer}) is defined by a different criterion.

In many real-world scenarios, the preferences of the agents may sometimes depend on external circumstances that may not be completely known in advance such as growth of stocks in the market, natural phenomena, outbreak of pandemics \cite{zeren2020impact,TOPCU2020101691} and so on. In such cases, each layer in our generalized model can represent a possible ``state'' of the world, and we may seek an assignment that is optimal in as many states as possible. For instance, suppose that there is a taxi firm with $n$ taxis and $m$ costumers ($n > m$) that want to be picked at a specific time in future. The ``cost'' of each taxi depends on the time taken to reach the costumer from the starting location of the taxi. Many factors (that may not be completely known a-priori) may affect the total cost such as road constructions, weather, car condition and the availability of the drivers \cite{cools2010assessing,saleh2017study}. The firm may suggest different possible scenarios (each represents a layer). For each scenario, the costumers may be ranked differently by the taxis, and an assignment that is pareto optimal in as many layers as possible will cover most of the scenarios and will give the lowest expected total cost. 

Furthermore, it is not always possible to completely take hold of preferences of some (or all) agents due to lack of information or communication, as well as security and privacy issues \cite{nass2009value,browne2000lives}. In addition, even if it is technically and ethically feasible, it may be costly in terms of money, time, or other resources to gather all information from all the agents \cite{mulder2019willingness}. In these cases, we can ``complete the preferences'' using different assumptions on the agents. As a result, we will have a list of preference profiles that represent different possible states of the world. An assignment that is pareto optimal in as many preference profiles as possible will be pareto optimal with high probability.

Our work is inspired by the work of Chen et al.~\cite{10.1145/3219166.3219168}, who studied the \textsc{Stable Marriage} problem under multiple preference lists. In contrast to assignment problems, the \textsc{Stable Marriage} problem is a {\em two-sided} matching problem, i.e.~it consists of two disjoint sets of agents $A$ and $B$, such that each agent strictly ranks the agents of the opposite set (in assignment problems, only the agents rank the items. The objective in the \textsc{Stable Marriage} problem is to find a matching (called a {\em stable matching}) between $A$ and $B$ such that there do not exist agents $a \in A$ and $b \in B$ that are not matched to each other but rank each other higher than their matched partners. Chen et al.~\cite{10.1145/3219166.3219168} considered an extension of the \textsc{Stable Marriage} problem where there are $\ell$ layers of preferences, and adapted the definition of stability accordingly. Specifically, three notions of stability were defined: {\em $\alpha$-global stability}, {\em $\alpha$-pair stability}, and {\em $\alpha$-individual stability}. In their work, Chen et al.~\cite{10.1145/3219166.3219168} studied the algorithmic complexity of finding matchings that satisfy each of these stability notions.

In our recent work \cite{steindl2020parameterized}, we defined the notion of {\em global optimality}, which (similarly to global stability defined by Chen et al.~\cite{10.1145/3219166.3219168}) extends the notion of pareto optimality to the case where there are multiple layers by requiring an assignment to be pareto optimal in a given number of layers. They studied the parameterized complexity of the problem of finding a globally optimal assignment (in the presence of multiple preference lists), and they showed that it is an extremely hard computational task with no efficient parameterized algorithms (with respect to almost any parameter combination). Two factors cause this hardness: First, in general, finding an optimal assignment is harder than verifying whether an assignment is optimal. Second, the concept of global optimality, which requires ``global agreement'' among the agents on the layers where beneficial trading cannot be performed, may seem too strong. Thus, a natural direction is to consider an adaptation of the verification variant to the multi-layer model, and to weaken the notion of global optimality.

We define three new notions of optimality: {\em $(k,\alpha)$-\aoptimality}, {\em $(k,\alpha)$-\doptimality}, and {\em $(k,\alpha)$-\coptimality}. Intuitively, the first notion requires that each subset of agents of size $k$ does not admit trading cycles (without additional agents) in at least $\alpha$ layers. The second notion is similar, but it additionally applies this condition on all subsets of size at most $k$ rather than only exactly $k$. The third notion requires that each subset of $k$ does not appear in trading cycles, together with possibly other agents, in at least $\alpha$ layers. In contrast to the notion of global optimality, these notions do not require having the same $\alpha$ layers where all agents cannot trade and benefit - each ``small'' subset of agents may have different $\alpha$ layers where the agents in this subset do not admit trading cycles.

The consideration of the parameter $k$ is reasonable: indeed, suppose that some assignment can be improved but only if a large group of agents would exchange their items among themselves. In many cases, such trading may not be simple or feasible since it may require a lot of efforts and organization \cite{dickerson2016position,dickerson2012optimizing}. Thus, we define $k$ as a fixed size or as an upper bound on the size of agent groups for which trading can be performed (in $(k,\alpha)$-\aoptimal\ and $(k,\alpha)$-\doptimal). In contrast, in the definition of $(k,\alpha)$-\coptimality, the parameter $k$ is, essentially, a ``lower bound'' on the size of subsets which do not admit trading cycles. This notion was designed to represent real scenarios where (i) we are not interested in finding short trading cycles but only in finding large ones since they would gain the most benefit and only they might justify changing the status quo, and where (ii) large and complicated trading cycles can be performed \cite{ashlagi2012need}.

Although the verification variant of the assignment problem can be easily solved in polynomial time in the classic single-layer model (see, e.g., Aziz et al.~\cite{ijcai2017-12,soton425734}), similarly to the decision variant, the problem becomes harder when multiple preference lists are taken into account. However, we show that, while some verification variants are still hard with respect to various parameters, they also admit fixed-parameter algorithms rather than mainly hardness results. Our results show that the new verification variants are, in general, much easier to solve than the decision variant in \cite{steindl2020parameterized}. For instance, the verification variants are solvable in $\OO^{*}(2^{n})$ time, but the decision variant is proved to admit no $\OO^{*}(2^{o(k \log{(k)})})$-time algorithm even for $k = n + m + \alpha$ and for $k = n + m + (\ell - \alpha)$ (where $n = \nagents$ and $m = \nitems$), unless Exponential Time Hypothesis (\ETH) fails. In addition, the decision variant is \WOH\ when parameterized by $m + \alpha$ and $m + (\ell - \alpha)$, while the verification variants are $\FPT$ with respect to $m$, $n$, and even $\nalloc$ (the number of allocated items). Furthermore, the decision variant is unlikely to admit a polynomial kernel with respect to $m + \ell$, whereas the verification variants admit polynomial kernels with respect to $m +\ell$ and $\nalloc +\ell$. 

%!TEX root = Main.tex

\begin{table*}[t]
\centering
\small\addtolength{\tabcolsep}{-2pt}
% \begin{tabular}{l*{6}{c}r}
 \begin{tabulary}{\textwidth}{L | C | C | C}
Parameter & Complexity Class & Running Time & \:\:Polynomial Kernel? \\
\hline
$\alpha$ & \paraCoH\ (\Tcref{theorem:coNPH}) & - & -\\
&  \vDAssignShort: \POLY\ when $\alpha = \ell$ & polynomial & yes\\
$\ell$ & \paraCoH\ (\Tcref{theorem:coNPH}) & - & -\\
$k$ & \coWOH\ (\Tcref{theorem:coWOH}) & $\OO^{*}(n^{\OO(k)})$ (\Tcref{theorem:xpAlg}) & -\\
 &  \vAAssignShort, \vDAssignShort: \XP\ (\Tcref{theorem:xpAlg}) & $\OO^{*}(n^{\OO(k)})$ (\Tcref{theorem:xpAlg}) & -\\
$k + \ell$ & \coWOH\ (\Tcref{theorem:coWOH})  & $\OO^{*}(n^{\OO(k)})$ (\Tcref{theorem:xpAlg}) & -\\
& \vAAssignShort,\vDAssignShort: \XP\ (\Tcref{theorem:xpAlg}) & $\OO^{*}(n^{\OO(k)})$ (\Tcref{theorem:xpAlg}) & -\\
$k + d$ & \:\:\:\vAAssignShort,\vDAssignShort: \FPT\ (\Tcref{theorem:fptDPlusK})\:\:\: & $\OO^{*}(d^{k})$ (\Tcref{theorem:fptDPlusK}) & no (\Tcref{theorem:cc1})\\
$(n-k) + \ell + d$ & \paraCoH\ (\Tcref{theorem:coNPH}) & - & -\\
$\nalloc$ & \FPT\ (\Tcref{theorem:fptAlg}) & \:\:$\OO^{*}(2^{\nalloc})$ (\Tcref{theorem:fptAlg})\:\:& no (\Tcref{theorem:cc1})\\
$\nalloc + \ell$ & \FPT\ (\Tcref{theorem:fptAlg}) & $\OO^{*}(2^{\nalloc})$ (\Tcref{theorem:fptAlg})& yes (\Tcref{theorem:kernelAlloc})\\
$n+ m+ \alpha$ & \FPT\ (\Tcref{theorem:fptAlg}) & $\OO^{*}(2^{\nalloc})$ (\Tcref{theorem:fptAlg})& no (\Tcref{theorem:cc1})\\
$n + m+ (\ell - \alpha)$\:\:\:& \FPT\ (\Tcref{theorem:fptAlg}) & $\OO^{*}(2^{\nalloc})$ (\Tcref{theorem:fptAlg})& no (\Tcref{theorem:cc1})
% \end{tabular}
\end{tabulary}

\caption{Summary of our results for the problems \vAAssignFull, \vDAssignFull, and \vCAssignFull. The results are applicable to the three problems, unless stated otherwise.}
\label{tab:results-summary}
\end{table*}

\smallskip
\noindent\textbf{Our Contributions.}
We consider several parameters such as the number of layers $\ell$, the number of agents $n = \nagents$, the number of items $m = \nitems$, the maximum length of a preference list $d$, the number of allocated items $\nalloc$, and the parameters $\alpha$ and $k$ that are related to the optimality concepts (see Section \ref{sec:preliminaries} for the formal definitions). In particular, we present an almost comprehensive picture of the parameterized complexity of the problems with respect to these parameters. 

The choice of these parameters is sensible because in real-life scenarios such as those mentioned earlier, some of these parameters may be substantially smaller than the input size. For instance, $\ell$ and $\alpha$ are upper bounded by the number of criteria according to which the agents rank the items. Thus, they are likely to be small in practice: when ranking other entities, people usually do not consider a substantially large number of criteria. For instance, when sports teams rank candidate players, only a few criteria such as the player's winning history, his impact on his previous teams, and physical properties are taken into account \cite{fearnhead2011estimating}. Moreover, in various cases concerning ranking of people, jobs, houses etc., people usually have a limited number of entities that they want or are allowed to ask for \cite{clinedinst2019state}. In these cases, the parameter $d$ is likely to be small. In addition, in small countries (such as Israel), the number of universities, hospitals, sports teams and many other facilities and organizations is very small \cite{chernichovskystate,educationInIsrael}. Thus, in scenarios concerning these entities, at least one among $n$ and $m$ (and thus also $\nalloc$) may be small. Furthermore, when assigning students to universities, workers to companies (or to work teams in a company), and players to sports teams since the number of universities, companies (or work teams) and sports teams are usually substantially smaller than the number of students, workers and players, respectively. The consideration of $k$ is justified by previous arguments.  A summary of our results is given in Table \ref{tab:results-summary}.

\smallskip
\noindent\textbf{\coNP-Hardness.}
We first provide some simple properties of the problems \vAAssignFull\ (\vAAssign), \vDAssignFull\ (\vDAssign), and \vCAssignFull\ (\vCAssign). Afterward, we prove that \vDAssign\ is solvable in polynomial time when $\alpha = \ell$. We also assert that the three problems are in \coNP\ by providing a witness for each \yes-instance. After that, we prove that the problems are \paraCoH\ with respect to the parameter $\ell + d + (n-k)$. This is done using a polynomial reduction from the \hc\ problem on directed graphs with maximum degree $3$ (proved to be \NPH\ by Plesńik \cite{PLESNIK1979199}) to the complements of \vAAssign\ and \vCAssign. In the reduction, we construct an instance with $\ell = 1$ layers, $d = 3$, $\alpha = 1$, and $k = n$ (we show that the problems are equivalent in this case). We then extend this proof by adding another layer in order to capture \vDAssign\ as well. Informally, given a directed graph $G$ with $n$ vertices, the reduction constructs an instance of the problem with $n$ agents, $n$ items, consisting of two layers, such that the trading graph of one layer is derived by the graph $G$, and the trading graph of the second layer is a single cycle containing all the agents and items.

\smallskip
\noindent\textbf{Kernelization.}
We prove that the problems admit polynomial kernels when parameterized by $\nalloc + \ell$. Informally, given an instance of the problems, the kernels first perform a preprocessing step to verify that no agent admits self loops in many layers; then, they remove from the instance all the agents
and items which are not matched by the assignment since they cannot appear in trading cycles. Thus, we conclude that the problems admit polynomial kernels with respect to the parameters $n + \ell$ and $m + \ell$ as well. We prove that \vAAssign\ and \vCAssign\ do not admit polynomial kernels with respect to the $n + m + \alpha$ and $n + m + (\ell - \alpha)$ by providing two cross-compositions from \hc\ on directed graphs with maximum degree $3$, which rely on the reduction from Section \ref{sec:conphardness}. We then extend these cross-compositions to have the same results for \vDAssign. 

\smallskip
\noindent\textbf{Fixed-Parameter Tractability.}
We first prove that the three problems are \FPT\ with respect to $\nalloc$ (and thus, also with respect to $n$, and $m$) by providing $\OO^{*}(2^{\nalloc})$-time dynamic programming algorithms that are inspired by the technique by Björklund et al.~\cite{10.1145/1250790.1250801} to compute the Fast zeta and Möbius transform and by the Floyd–Warshall algorithm \cite{floyd62shortestpath}. We then prove that \vAAssign\ and \vDAssign\ are \XP\ with respect to the parameter $k$ by providing an $\OO^{*}(n^{\OO(k)})$-time algorithm. Informally speaking, the algorithm verifies whether each subset of agents of size $k$ (or at most $k$ for \vDAssign) does not admit ``conflicts'' (i.e.~trading cycles) in many layers, and it relies on an $\OO^{*}(2^{n})$-time algorithm for \hc\ on directed graphs by Bellman \cite{10.1145/321105.321111}. Then, we prove that \vAAssign\ and \vDAssign\ are \FPT\ with respect to the parameter $k + d$ by providing an $\OO^{*}(d^{k})$-time algorithm. The algorithm relies on the fact that there are at most $\OO(n \cdot d^{k})$ trading cycles with at most $k$ agents in each layer. It first runs the kernelization algorithm of size $\ell \cdot (\nalloc)^{2}$, and then it considers all possible trading cycles with at most $k$ agents in all the layers. For each such trading cycle, it checks whether the agents in the cycle admit trading cycle also in $\ell - \alpha$ other layers - if this is the case, it returns \no. Finally, we prove that the three problems are \coWOH\ when parameterized by $k + \ell$ using a parameterized reduction from the \MCIS\ problem to the complements of the problems. Roughly speaking, given a graph $G$ with $n$ vertices and a coloring $c$ that colors the vertices in $G$ with $\tildeK$ colors, the reduction creates an agent and an item for each vertex; In the resulting instance, $k = \tildeK$, $\alpha = 1$, $\ell = {\tildeK \choose 2}$, and we have that the agents and items that correspond to the vertices in the multicolored independent set admit trading cycles in all the layers.
%!TEX root = Main.tex
\section{Preliminaries}
\label{sec:preliminaries}
For any $t \in \mathbb{N}$, let $[t] = \lbrace 1,\ldots,t \rbrace$. We use the $\OO^{*}$-notation to suppress polynomial factors in the input size, that is, $\OO^{*}(f(k)) = f(k) \cdot n^{\OO(1)}$.

\subsection{Assignment Problems} An instance of the (general) assignment problem is a triple $(A,I,P)$ where $A$ is a set of $n$ agents $\lbrace a_{1},\ldots,a_{n} \rbrace$, $I$ is a set of $m$ items $\lbrace b_{1},\ldots,b_{m} \rbrace$, and $P=(<_{a_{1}},\ldots,<_{a_{n}})$, called the {\em preference profile}, contains the preferences of the agents over the items, where each $<_{a_{i}}$ encodes the preferences of $a_{i}$ and is a linear order over a {\em subset} of $I$ (preferences are allowed to be incomplete). We refer to such linear orders as {\em preference lists}. If $b_{j} <_{a_{i}} b_{r}$, we say that agent $a_{i}$ {\em prefers} item $b_{r}$ over item $b_{j}$, and we write $b_{j} \leq_{a_{i}} b_{r}$ if $b_{j} <_{a_{i}} b_{r}$ or $b_{j} = b_{r}$. Item $b$ is {\em acceptable} by agent $a$ if $b$ appears in $a$'s preference list. An {\em assignment} is an allocation of items to agents such that each agent is allocated at most one item, and each item is allocated to at most one agent. Since the preferences of the agents may be incomplete, or the number of items may be smaller than the number of agents, some agents may not have available items to be assigned to. To deal with this case, a special item $b_{\emptyset}$ is defined, seen as the least preferred item of each agent, and will be used as a sign that an agent is not allocated an item. Throughout this paper, we assume that $b_{\emptyset}$ is not part of the item sets, and that it appears at the end of every preference list (we will not write $b_{\emptyset}$ explicitly in the preference lists). We formally define assignments as follows:

\begin{definition}\label{def:assignment}
Let $A=\lbrace a_{1},\ldots,a_{n} \rbrace$ be a set of $n$ agents and let $I = \lbrace b_{1},\ldots,b_{m} \rbrace$ be a set of $m$ items. A mapping $p:A \rightarrow I \cup \lbrace b_{\emptyset} \rbrace$ is called an {\em assignment} if for each $i \in [n]$, it satisfies one of the following conditions:
\begin{enumerate}
\item $p(a_{i})=b_{\emptyset}$.
\item Both $p(a_{i}) \in I$ and for each $j \in [n]\setminus \lbrace i \rbrace$, $p(a_{i}) \neq p(a_{j})$.
\end{enumerate}
\end{definition}

We refer to $p$ as {\em legal} if it satisfies $p(a_{i})=b_{\emptyset}$ or that $p(a_{i}) \in I$ is acceptable by $a_{i}$ for each $i \in [n]$. For brevity, we will omit the term ``legal'' and refer to a legal assignment just as an assignment.\footnote{All the ``optimal'' assignments that we construct in this paper will be legal for each agent group in a sufficient number of layers.}Moreover, when we write a set in a preference list, we assume that its elements are ordered arbitrarily, unless stated otherwise. In the general assignment problem, we are given such a triple $(A,I,P)$, and we seek an assignment which is ``optimal'' according to some criterion.

\smallskip
\noindent\textbf{Pareto Optimality.}
There are different ways to define optimality of assignments, but the one that received the most attention in the literature is {\em pareto optimality}. Informally speaking, an assignment $p$ is {\em pareto optimal} if there does not exist another assignment $q$ that is ``at least as good'' as $p$ for all the agents, and is ``better'' for at least one agent. It is formally defined as follows.

\begin{definition}\label{def:poAssignment}
Let $A = \lbrace a_{1},\ldots,a_{n} \rbrace$ be a set of agents, and let $I$ be a set of items. An assignment $p:A \rightarrow I \cup \lbrace b_{\emptyset} \rbrace$ is {\em pareto optimal} if there does not exist another assignment $q : A \rightarrow I \cup \lbrace b_{\emptyset} \rbrace$ that satisfies:
\begin{enumerate}
\item $p(a_{i}) \leq_{a_{i}} q(a_{i})$ for every $i \in [n]$.
\item There exists $i \in [n]$ such that $p(a_{i}) <_{a_{i}} q(a_{i})$.
\end{enumerate}
If such an assignment $q$ exists, we say that $q$ {\em pareto dominates} $p$.
\end{definition}

The \assign\ problem is a special case of the general assignment problem where the criterion of optimality is pareto optimality. 

In what follows, we first give some well-known characterizations of assignments, and then we introduce new concepts of optimality and two new multi-layered assignment problems. 

Intuitively, an assignment admits a {\em trading cycle} if there exists a set of agents who all benefit by exchanging their allocated items among themselves. For example, a simple trading cycle among two agents $a$ and $b$ occurs when agent $a$ prefers agent $b$'s item over its own item, and agent $b$ prefers agent $a$'s item over its own item. Both $a$ and $b$ would benefit from exchanging their items. Formally, a trading cycle is defined as follows.

\begin{definition}
An assignment $p$ admits a {\em trading cycle} $(a_{i_{0}},b_{j_{0}},a_{i_{1}},b_{j_{1}},\ldots,a_{i_{k-1}},b_{j_{k-1}})$ if for each $r \in \lbrace 0,\ldots,k-1 \rbrace$, we have that $p(a_{i_r})=b_{j_r}$ and $b_{j_r} <_{a_{i_{r}}} b_{j_{r+1 \Mod{k}}}$.
\end{definition}

\begin{definition}
An assignment $p$ admits a {\em self loop} if there exist an agent $a_{i}$ and an item $b_j$ such that $b_{j}$ is not allocated to any agent by $p$, and $p(a_{i}) <_{a_{i}} b_{j}$.
\end{definition}

\begin{proposition}[Folklore; see, e.g., Aziz et al.~\cite{ijcai2017-12,soton425734}]
\label{prop:po-iff-no-tc-and-sl}
An assignment $p$ is pareto optimal if and only if it does not admit trading cycles and self loops.
\end{proposition}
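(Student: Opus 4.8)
The plan is to prove the biconditional in \Prcref{prop:po-iff-no-tc-and-sl} by establishing the contrapositive in both directions, exploiting the natural correspondence between ``improving moves'' and the combinatorial structures of trading cycles and self loops. The key observation driving everything is that a dominating assignment $q$ reallocates items in a way that can only move items \emph{among agents} (producing cycles in the associated reallocation graph) or bring in \emph{previously unallocated items} (producing self loops). I would make this precise by associating to the pair $(p,q)$ a functional digraph on the agents together with the item slots, and decomposing it into cycles and paths.

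First I would prove the easier direction: if $p$ admits a trading cycle or a self loop, then $p$ is not pareto optimal. Given a trading cycle $(a_{i_0}, b_{j_0}, \ldots, a_{i_{k-1}}, b_{j_{k-1}})$, define $q$ to agree with $p$ everywhere except that each $a_{i_r}$ receives $b_{j_{r+1 \bmod k}}$; I would check that $q$ is a legal assignment (each item in the cycle is reassigned to exactly one agent, so no collisions arise and each reassigned item is acceptable since $b_{j_r} <_{a_{i_r}} b_{j_{r+1 \bmod k}}$ forces the latter to be in $a_{i_r}$'s list), that condition~1 of \Defcref{def:poAssignment} holds (unchanged agents are indifferent, cycle agents strictly improve), and that condition~2 holds (any cycle agent witnesses strict improvement). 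The self loop case is analogous but simpler: if $p(a_i) <_{a_i} b_j$ with $b_j$ unallocated, reassign $a_i$ to $b_j$ and leave everything else fixed; legality and domination are immediate. Hence in both cases $q$ pareto dominates $p$.

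For the converse I would argue that if $p$ is not pareto optimal, then it admits a trading cycle or a self loop. Let $q$ pareto dominate $p$ and let $S = \{a_i : p(a_i) <_{a_i} q(a_i)\}$ be the nonempty set of strictly improved agents. Consider the digraph $H$ on the agents where I draw an arc $a_i \to a_j$ whenever $q(a_i) = p(a_j) \in I$, i.e.\ the item $a_i$ receives under $q$ was held by $a_j$ under $p$. Because $q$ is an assignment, each agent has in-degree and out-degree at most one in $H$, so $H$ is a disjoint union of simple paths and cycles. The plan is to show that some improved agent lies on a cycle of $H$ (yielding a trading cycle) or else at the ``source'' of a path the incoming item was unallocated under $p$ (yielding a self loop). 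Concretely I would start from an agent in $S$ and trace its out-arcs; if this trace closes into a cycle, every agent on that cycle satisfies the strict-preference relation required of a trading cycle, so I extract one. If instead the trace terminates, the item that the terminal agent receives under $q$ was not allocated by $p$ to anyone, and since that terminal agent strictly improves, this is exactly a self loop.

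The main obstacle is the bookkeeping in the converse direction: I must argue carefully that an improvement can always be \emph{localized} to a single cycle or single self loop, rather than merely showing the global existence of $q$. The subtle point is that tracing from an arbitrary improved agent need not immediately yield a cycle all of whose agents strictly improve; I expect to need the pigeonhole fact that the functional structure of $H$ forces closure, together with the observation that along any maximal path or cycle each consecutive strict preference $p(a_i) <_{a_i} q(a_i) = p(a_{\text{next}})$ chains correctly. I would handle this by choosing a \emph{minimal} improving reallocation (e.g.\ one reassigning the fewest items) so that no proper ``sub-improvement'' exists, which rules out degenerate traces and cleanly forces either a closed trading cycle or a terminal self loop. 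This minimality argument is the one step I would write out with care, as everything else is routine verification.
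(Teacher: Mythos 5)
The paper does not actually prove this proposition: it is stated as folklore with a pointer to Aziz et al., so there is no in-paper argument to compare against, and your attempt must be judged on its own. So judged, your plan is correct and would yield a complete proof. The forward direction (cycle or self loop implies a dominating $q$) is routine, exactly as you describe. For the converse, the digraph $H$ with arcs $a_i \to a_j$ whenever $q(a_i)=p(a_j)\in I$ is the right object, but the difficulty you flag --- that a traced walk might pass through agents that do not strictly improve, which you propose to repair by taking a minimal improving reallocation --- is not actually there. If $a_j$ has an in-arc, then some $a_i\neq a_j$ receives $p(a_j)\in I$ under $q$, so $q(a_j)\neq p(a_j)$ (otherwise two agents would share an item of $I$ under $q$); combined with $p(a_j)\leq_{a_j}q(a_j)$ from condition~1 of Definition~\ref{def:poAssignment}, this forces $p(a_j)<_{a_j}q(a_j)$. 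Hence every vertex reached along the walk from a strictly improved agent is itself strictly improved, and since in- and out-degrees in $H$ are at most one, the walk either closes into a simple cycle of strictly improved agents --- which is a trading cycle, since $p(a_{i_r})<_{a_{i_r}}q(a_{i_r})=p(a_{i_{r+1}})$ for consecutive agents --- or terminates at a strictly improved agent whose $q$-item is allocated to nobody under $p$, which is a self loop. The minimality device is therefore superfluous (though harmless); with the in-degree observation in place, your argument closes as written.
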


By this proposition, the problem of checking whether an assignment admits trading cycles or self loops can be reduced to the problem of checking whether the directed graph defined next contains cycles. For an instance $(A,I,P)$ and an assignment $p$, the corresponding {\em trading graph} is the directed graph defined as follows. Its vertex set is $A \cup I$, and there are three types of edges:
\begin{itemize}
\item For each $a \in A$ such that $p(a) \neq b_{\emptyset}$, there is a directed edge from $p(a)$ to $a$. Namely, each allocated item points to its owner.
\item For each agent $a \in A$, there is an edge from $a$ to all the items it prefers over its assigned item $p(a)$ (if $p(a)=b_{\emptyset}$, $a$ points to all its acceptable items).
\item Each item with no owner points to all the agents that accept it.
\end{itemize} 

\begin{proposition}[Folklore; see, e.g., Aziz et al.~\cite{ijcai2017-12,soton425734}]\label{prop:po-iff-no-cycles-in-td}
An assignment $p$ is pareto optimal if and only if its corresponding trading graph does not contain cycles.
\end{proposition}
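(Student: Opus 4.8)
The plan is to reduce everything to the preceding characterization, \Prcref{prop:po-iff-no-tc-and-sl}, which already tells us that $p$ is pareto optimal if and only if it admits neither a trading cycle nor a self loop. Hence it suffices to show that the trading graph contains a directed cycle if and only if $p$ admits a trading cycle or a self loop, and the whole proof becomes a translation between two combinatorial objects. I would establish the two implications separately, arguing by contraposition in each direction so that I always start from a concrete witness (a graph cycle, a trading cycle, or a self loop) and build the corresponding witness on the other side.

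First I would record the local structure of the edges incident to the item-vertices, since this is what drives the argument. By construction, every edge \emph{entering} an item $b$ is of the second type, i.e.\ it comes from an agent $a$ with $p(a) <_a b$; and every edge \emph{leaving} an item is of the first type (an allocated item points only to its owner) or of the third type (an unallocated item points to the agents that accept it). In particular, items never point to items and agents never point to agents, so any directed cycle in the trading graph is \emph{alternating}, visiting agents and items in turn.

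For the direction ``cycle $\Rightarrow$ trading cycle or self loop'' I would take a simple directed cycle $C$ and split into two cases according to whether $C$ passes through an unallocated item. If some item $b$ on $C$ is unallocated, then the edge of $C$ entering $b$ is of the second type and hence comes from an agent $a$ with $p(a) <_a b$; since $b$ is unallocated, the pair $(a,b)$ is exactly a self loop, and we are done. If instead every item of $C$ is allocated, then each item-to-agent edge of $C$ is of the first type, so it points from an item to its unique owner; reading $C$ as $a_{i_0} \to b_{j_1} \to a_{i_1} \to \cdots$, relabelling the agents in cyclic order, and setting $b_{j_r} \defeq p(a_{i_r})$, the owner edges give $p(a_{i_r}) = b_{j_r}$ while the preference edges give $b_{j_r} <_{a_{i_r}} b_{j_{r+1 \Mod{k}}}$, which is precisely a trading cycle.

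For the converse I would simply reverse these constructions: a self loop $(a,b)$ yields the length-two cycle $a \to b \to a$ (the forward edge exists because $p(a) <_a b$, the backward edge because $b$ is unallocated and accepted by $a$), and a trading cycle $(a_{i_0},b_{j_0},\ldots,a_{i_{k-1}},b_{j_{k-1}})$ yields the directed cycle obtained by following each owner edge $b_{j_r} \to a_{i_r}$ together with each preference edge $a_{i_r} \to b_{j_{r+1 \Mod{k}}}$. The only point that needs genuine care — and the closest thing to an obstacle — is the unallocated-item case: one must notice that a cycle is \emph{not} forced to traverse allocated items only, and that a passage through an unallocated item cannot always be repaired into a trading cycle but instead immediately certifies a self loop. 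Once this observation is isolated, everything else is a routine edge-by-edge verification, and combining the equivalence with \Prcref{prop:po-iff-no-tc-and-sl} finishes the proof.
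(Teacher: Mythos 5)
Your proof is correct. The paper states this proposition as folklore and gives no proof of its own, so there is nothing to compare against; your argument is the standard one that the trading-graph construction is designed to support, reducing to \Prcref{prop:po-iff-no-tc-and-sl} and translating directed cycles into trading cycles or self loops. You correctly isolate the one delicate point -- a cycle through an unallocated item certifies a self loop rather than a trading cycle -- and the remaining edge-by-edge verifications (bipartite alternation, owner edges forcing $p(a_{i_r})=b_{j_r}$, preference edges forcing $b_{j_r} <_{a_{i_r}} b_{j_{r+1 \Mod{k}}}$) all check out against the paper's definitions.
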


\smallskip
\noindent\textbf{Example.} Suppose that $A = \lbrace a_{1},a_{2},a_{3},a_{4},a_{5} \rbrace$ and $I = \lbrace b_{1},b_{2},b_{3},b_{4},b_{5} \rbrace$. Assume that the preferences of the agents are defined as follows.

\begin{framed}[0.6\columnwidth]
\begin{itemize}
  \item $a_{1}\ $: $\ b_{4}\ >\ b_{1}\ >\ b_{2}\ >\ b_{5}$
  \item $a_{2}\ $: $\ b_{1}\ >\ b_{4}\ >\ b_{5}$
  \item $a_{3}\ $: $\ b_{2}\ >\ b_{1}$
  \item $a_{4}\ $: $\ b_{3}\ >\ b_{5}$
  \item $a_{5}\ $: $\ b_{5}$
\end{itemize}
\end{framed}

Let $p: A \rightarrow I \cup \lbrace b_{\emptyset} \rbrace$ be an assignment such that $p(a_{1}) = b_{2}$, $p(a_{2}) = b_{4}$, $p(a_{3}) = b_{1}$, $p(a_{4}) = b_{5}$, and $p(a_{5}) = b_{\emptyset}$.
The trading graph of the preference profile with respect to $p$ is:

\smallskip
\begin{center}
\scalebox{0.75}{
\tikzfig{Figures/P5Example}
}
\end{center}

Observe that agents $a_{1}$, $a_{2}$ and $a_{3}$ admit the trading cycle $(a_{1},b_{2},a_{2},b_{4},a_{3},b_{1})$, and agent $a_{4}$ admits a self loop with $b_{3}$. By \Prcref{prop:po-iff-no-cycles-in-td}, $p$ is not pareto optimal. If $a_{1}$, $a_{2}$, and $a_{3}$ exchange their items, $a_{4}$ gets $b_{3}$, and $a_{5}$ gets $b_{5}$, we have a pareto optimal assignment $q$ in which $q(a_{1}) = b_{4}$, $q(a_{2}) = b_{1}$, $q(a_{3}) = b_{2}$, $q(a_{4}) = b_{3}$ and $q(a_{5}) = b_{5}$.

\subsection{Generalization of the Assignment Problem} We introduce a generalized assignment problem where there are $\ell$ layers of preferences. For each $j \in [\ell]$, we refer to $<_{a_{i}}^{(j)}$ as $a_{i}$'s preference list in layer $j$. The {\em preference profile in layer $j$} is the collection of the agents' preference lists in the layer, namely, $P_{j}=(<_{a_{1}}^{(j)},\ldots,<_{a_{n}}^{(j)})$. Thus, the new problem is defined as follows.

\noindent\scalebox{0.92}{
\defproblem {$\mlassign$}{$(A,I,P_{1},\ldots,P_{\ell})$, where $A$ is a set of $n$ agents, $I$ is a set of $m$ items, $P_{i}$ is the preference profile in layer $i$ for each $i \in [\ell]$.}{Does an ``optimal'' assignment exist?}
}

\smallskip
\noindent\textbf{New Concepts of Optimality.} 
In \cite{steindl2020parameterized}, we introduced a new concept of optimality that naturally extends pareto optimality by requiring an assignment to be pareto optimal in a given number of layers. This criterion is formally defined as follows.

\begin{definition} \label{def:alpha-globally-optimal}
An assignment $p$ is {\em $\alpha$-globally optimal} for an instance $(A,I,P_{1},\ldots,P_{\ell})$ if there exist $\alpha$ layers $i_{1},\ldots ,i_{\alpha} \in [\ell]$ such that $p$ is pareto optimal in the single layered instance $(A,I,P_{i_{j}})$, for each $j \in [\alpha]$.
\end{definition}

The corresponding decision problem was defined as follows:

\noindent\scalebox{0.94}{
\defproblem {\textsc{Globally Optimal Assignment}}{$(A,I,P_{1},\ldots,P_{\ell},\alpha)$, where $A$ is a set of $n$ agents, $I$ is a set of $m$ items, $P_{i}$ is the preference profile in layer $i$ for each $i \in [\ell]$, and $\alpha \in [\ell]$.}{Does an $\alpha$-globally optimal assignment exist?}
}

In this paper, we ``weaken'' this notion. By \Prcref{prop:po-iff-no-tc-and-sl}, with respect to an $\alpha$-globally optimal assignment, there exist $\alpha$ layers with no trading cycles or self loops (and therefore all the agents cannot exchange items and benefit in these layers). In other words, there is a ``global agreement'' among all the agents on the layers where they cannot benefit by trading. This requirement may seem too strong, thus we weaken it as follows. Instead of requiring the same $\alpha$ layers for all the agents, we will require that each group of agents of a bounded size will have its own $\alpha$ layers where the agents in the group cannot exchange their items and benefit; the layers for each group may be different. 

\begin{definition}\label{def:tradingCycleGroupInLayer}
Let $(A,I,P_{1},\ldots,P_{\ell})$ be an instance of the $\mlassign$ problem, let $p : A \rightarrow I \cup \lbrace b_{\emptyset} \rbrace$ be an assignment, and let $K \subseteq A$ be a subset of agents. We say that {\em $K$ admits a trading cycle with respect to $p$ in layer $j \in [\ell]$} if all the agents in $K$ appear together (with no additional agents) in a trading cycle with respect to $p$ in the single layered instance $(A,I,P_{j})$. For $a \in A$, we say that $a$ {\em admits a self loop with respect to $p$} in layer $j$ if it admits a self loop with respect to $p$ in the single layered instance $(A,I,P_{j})$.
\end{definition}

We now define three new concepts of optimality.

\begin{definition} [{\em $(k,\alpha)$-\aoptimality}] \label{def:AOptimality}
An assignment $p$ is {\em $(k,\alpha)$-\aoptimal} for an instance $(A,I,P_{1},\ldots,P_{\ell})$ if it satisfies the following conditions: 
\begin{enumerate}
\item If $k \ge 2$, then for each subset of agents $K \subseteq A$ such that $|K| = k$, there exist $\alpha$ layers $i_{1},\ldots,i_{\alpha}$ such that $K$ does not admit a trading cycle in layer $i_{j}$ with respect to $p$, for each $j \in [\alpha]$.
\item If $k = 1$, then for each $a \in A$, there exist $\alpha$ layers where $a$ does not admit a self loop.
\end{enumerate}
\end{definition}

Intuitively, the first concept requires that each group of $k$ agents cannot trade and benefit in at least $\alpha$ layers (that may depend on the specific group). So, no group of $k$ agents has incentive to ``rebel'' against the given assignment. In order to show that an assignment $p$ is not $(k,\alpha)$-\aoptimal, we will usually rely on the following equivalent definition:

\begin{observation}\label{observation:notAOptimalA}
An assignment $p$ is not $(k,\alpha)$-\aoptimal\ for an instance $(A,I,P_{1},\ldots,P_{\ell})$ if it satisfies one of the following: 
\begin{enumerate}
    \item If $k \ge 2$, then there exist a subset $K \subseteq A$ of agents such that $|K| = k$, and $\ell - \alpha + 1$ layers $i_{1},\ldots,i_{\ell - \alpha + 1}$ such that $K$ admits a trading cycle in layer $i_{j}$, for each $j \in [\ell - \alpha +1]$.
    \item If $k = 1$, then there exists $a \in A$ that appears in self loops in some $\ell - \alpha + 1$ layers.
\end{enumerate}
\end{observation}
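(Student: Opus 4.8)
The plan is to treat this observation as the straightforward logical negation of Definition~\ref{def:AOptimality}, the only non-routine ingredient being a small counting argument that converts a statement about the existence of $\alpha$ ``good'' layers into one about the existence of $\ell - \alpha + 1$ ``bad'' layers. First I would fix the case $k \ge 2$ and, for a fixed subset $K \subseteq A$ with $|K| = k$, partition the index set $[\ell]$ into the set $G_K$ of layers in which $K$ does \emph{not} admit a trading cycle with respect to $p$, and its complement $B_K = [\ell] \setminus G_K$ of layers in which $K$ \emph{does} admit a trading cycle. Since in a single layer $K$ either admits a trading cycle or it does not, these two events are complementary, so every index lies in exactly one of $G_K, B_K$ and hence $|G_K| + |B_K| = \ell$.

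The key step is to reinterpret the quantifier ``there exist $\alpha$ layers $i_1, \ldots, i_\alpha$ such that $K$ does not admit a trading cycle in layer $i_j$'' appearing in Definition~\ref{def:AOptimality}. Because the witnessing indices $i_1, \ldots, i_\alpha$ are distinct, this condition holds for $K$ if and only if $|G_K| \ge \alpha$. Consequently, Definition~\ref{def:AOptimality} says that $p$ is $(k,\alpha)$-\aoptimal\ if and only if $|G_K| \ge \alpha$ for \emph{every} such $K$. Negating this universally quantified statement, $p$ fails to be $(k,\alpha)$-\aoptimal\ if and only if there exists some $K$ with $|G_K| \le \alpha - 1$. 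Using $|G_K| + |B_K| = \ell$, this is equivalent to $|B_K| \ge \ell - \alpha + 1$, i.e., to the existence of $\ell - \alpha + 1$ distinct layers $i_1, \ldots, i_{\ell - \alpha + 1}$ in which $K$ admits a trading cycle, which is exactly the first condition of the observation.

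The case $k = 1$ is handled identically, with ``$a$ admits a self loop in layer $j$'' playing the role of ``$K$ admits a trading cycle in layer $j$'': the good set becomes the set of layers in which $a$ admits no self loop, and the same cardinality complementation turns ``fewer than $\alpha$ self-loop-free layers'' into ``at least $\ell - \alpha + 1$ layers containing a self loop''. I expect no genuine obstacle here; the only point requiring care is the implicit distinctness of the $\alpha$ witnessing layers in the definition, since it is precisely this distinctness that makes ``there exist $\alpha$ layers'' equivalent to the cardinality bound $|G_K| \ge \alpha$ and thereby lets the complementation within $[\ell]$ go through cleanly.
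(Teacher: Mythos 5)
Your proposal is correct and matches the paper's treatment: the paper states this observation without proof, regarding it as the immediate logical negation of Definition~\ref{def:AOptimality}, and your good-layers/bad-layers counting argument ($|G_K|+|B_K|=\ell$, so $|G_K|\le\alpha-1$ iff $|B_K|\ge\ell-\alpha+1$) is precisely the reasoning being left implicit. No gaps; you in fact establish the full equivalence rather than only the stated ``if'' direction, which is consistent with the paper calling it an equivalent definition.
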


The second definition strengthens this definition by not allowing groups of size at most $k$ to admit conflict in a large number of layers. It is formally defined as follows.

\begin{definition} [{\em $(k,\alpha)$-\doptimality}] \label{def:DOptimality}
An assignment $p$ is {\em $(k,\alpha)$-\doptimal} for an instance $(A,I,P_{1},\ldots,P_{\ell})$ if it satisfies the following conditions: 
\begin{enumerate}
\item For each subset of agents $K \subseteq A$ such that $|K| \leq k$, there exist $\alpha$ layers $i_{1},\ldots,i_{\alpha}$ such that $K$ does not admit a trading cycle in layer $i_{j}$ with respect to $p$, for each $j \in [\alpha]$.
\item For each $a \in A$, there exist $\alpha$ layers where $a$ does not admit a self loop.
\end{enumerate}
\end{definition}

Clearly, an assignment is $(k,\alpha)$-\doptimal\ if and only if it is $(k',\alpha)$-\aoptimal\ for all $k' \in [k]$ (simultaneously). So, algorithms regarding the first notion of optimality can be used in the context of the second. However, the problems may not be computationally equivalent, as the second may be easier (in various aspects) than the first, and hence both are defined explicitly. Indeed, we have similar ``equivalence'' between \textsc{$k$-Cycle} (finding a cycle of length exactly $k$ in a given graph) and \textsc{$k$-Shortest Cycle} (finding the shortest cycle of length at most $k$ in a given graph), while the first problem among these two is known to be \NPH\ (since it generalizes \hc), and the second is in \POLY\ (these problems may not be equivalent in different senses, such as solvability in polynomial time, kernelization complexity, and more). In order to show that an assignment is not $(k,\alpha)$-\doptimal, similarly to \Obcref{observation:notAOptimalA}, we can show that there exists an agent which admits self loops in some $\ell - \alpha + 1$ layers, or a subset of agents which admits trading cycles in some $\ell - \alpha + 1$ layers.

The third definition is similar to the first one, but additionally, it further does not allow small groups of agents to be part of larger trading cycles. It is formally defined as follows.

\begin{definition} [{\em $(k,\alpha)$-\coptimality}] \label{def:COptimality}
An assignment $p$ is {\em $(k,\alpha)$-\coptimal} for an instance $(A,I,P_{1},\ldots,P_{\ell})$ if it satisfies the following conditions:
\begin{enumerate}
\item For each subset of agents $K \subseteq A$ such that $|K| = k$, there exist $\alpha$ layers $i_{1},\ldots,i_{\alpha}$ such that, for each $j \in [\alpha]$, there does not exist $K' \subseteq A$ that contains $K$ ($K \subseteq K' \subseteq A$) and admits a trading cycle in layer $i_{j}$.
\item If $k=1$, then for each $a \in A$, there exist $\alpha$ layers where it does not admit a self loop.
\end{enumerate}
\end{definition}

Notice that when $k = 1$, both conditions need to be satisfied (we mentioned them separately since the notions of trading cycle and self loop are different). In order to show that an assignment $p$ is not $(k,\alpha)$-\coptimal, we will usually use the following equivalent definition:

\begin{observation}\label{observation:notAOptimalC}
An assignment $p$ is not $(k,\alpha)$-\coptimal\ for an instance $(A,I,P_{1},\ldots,P_{\ell})$ if it satisfies one of the following. 
\begin{enumerate}
    \item There exist a subset $K \subseteq A$ such that $|K| = k$, $\ell -\alpha +1$ subsets $K_{1},\ldots,K_{\ell-\alpha+1}$ such that $K \subseteq K_{i} \subseteq A$ for each $i \in [\ell - \alpha +1]$, and $\ell - \alpha + 1$ layers $i_{1},\ldots,i_{\ell - \alpha + 1}$ such that $K_{j}$ admits a trading cycle in layer $i_{j}$, for each $j \in [\ell - \alpha +1]$.
    \item If $k = 1$: There exists $a \in A$ that appears in self loops in some $\ell - \alpha + 1$ layers.
\end{enumerate}
\end{observation}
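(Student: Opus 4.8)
The plan is to prove \Obcref{observation:notAOptimalC} as the direct logical negation of Definition~\ref{def:COptimality}, i.e., by applying De Morgan's laws to the quantified statement defining $(k,\alpha)$-\coptimality. Since that definition is a conjunction of its two conditions (where the second is active only when $k=1$), its negation is the disjunction of the negations of these conditions, which is exactly the ``one of the following'' structure of the observation. First I would fix an arbitrary subset $K \subseteq A$ with $|K| = k$ and, for this $K$, call a layer $j \in [\ell]$ \emph{bad} if there exists $K'$ with $K \subseteq K' \subseteq A$ that admits a trading cycle in layer $j$, and \emph{good} otherwise. With this terminology, the first condition of Definition~\ref{def:COptimality} asserts, for every such $K$, the existence of at least $\alpha$ good layers.

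The key (and essentially the only non-mechanical) step is the counting equivalence: for a fixed $K$, there exist at least $\alpha$ good layers if and only if there are at most $\ell - \alpha$ bad layers, since good and bad layers partition $[\ell]$. Hence the negation of the first condition for this $K$ is precisely that $K$ has at least $\ell - \alpha + 1$ bad layers. I would then quantify over $K$: the first condition of the definition fails if and only if \emph{some} $K$ of size $k$ has at least $\ell - \alpha + 1$ bad layers $i_1,\ldots,i_{\ell-\alpha+1}$. For each such bad layer $i_j$ I would invoke the meaning of ``bad'' to extract a witnessing superset $K_j$ with $K \subseteq K_j \subseteq A$ admitting a trading cycle in $i_j$; collecting these yields exactly item~1 of the observation, and conversely any such collection certifies that fewer than $\alpha$ good layers remain. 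The point to handle carefully is that the witnessing superset is chosen \emph{per layer}, so the $K_j$ are allowed to differ across layers — this is precisely why the observation lists $\ell-\alpha+1$ separate subsets rather than a single one.

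Finally, for the $k=1$ case I would treat the self-loop condition analogously but more simply: for a fixed agent $a$, the layers split into those where $a$ admits a self loop and those where it does not, so the second condition of Definition~\ref{def:COptimality} (the existence of $\alpha$ self-loop-free layers for every $a$) fails exactly when some $a$ admits self loops in at least $\ell-\alpha+1$ layers, giving item~2. Since for $k=1$ the definition demands both conditions, its negation is their disjunction, matching the ``one of the following'' formulation; for $k \ge 2$ only the first condition is present, so only item~1 can witness non-optimality. I do not anticipate a genuine obstacle beyond this bookkeeping — the sole subtlety is the quantifier swap ``there exist $\alpha$ good layers'' $\leftrightarrow$ ``at most $\ell-\alpha$ bad layers'', which is justified immediately by the fact that good and bad layers partition $[\ell]$.
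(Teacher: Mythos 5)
Your proposal is correct and matches the paper's treatment: the paper states this observation without proof, regarding it as the immediate logical negation of Definition~\ref{def:COptimality} via exactly the counting equivalence you identify (at least $\alpha$ good layers iff at most $\ell-\alpha$ bad ones), with the per-layer witnessing supersets $K_j$ arising just as you describe. Your write-up simply makes explicit the bookkeeping the paper leaves implicit, including the correct handling of the $k=1$ conjunction.
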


Notice that if we change the first condition in Definition \ref{def:COptimality} to apply on all subsets of size at most $k$ (rather than on subsets of size exactly $k$), and the second condition to be satisfied for any value of $k$, then the parameter $k$ becomes senseless since $(k,\alpha)$-\coptimality\ will be just equivalent to $(1,\alpha)$-\coptimality\ for any $k$ (as shown in \Lecref{lemma:ubSubsetOptimal}). We remark that this equivalence is not true for $(k,\alpha)$-\doptimality\ since there are instances in which each agent does not admit self loops in some $\alpha$ layers, while there exists a subset of agents that admits trading cycles in some $\ell - \alpha +1$ layers. 

The new decision problems, where we are given an instance as in the \mlassign\ problem and seek an ``optimal'' assignment, are defined as follows:

\noindent\scalebox{0.94}{
\defproblem {\aAssignFull\ (\aAssign)}{$(A,I,P_{1},\ldots,P_{\ell},\alpha,k)$, where $A$ is a set of $n$ agents, $I$ is a set of $m$ items, $P_{i}$ is the preference profile in layer $i$ for each $i \in [\ell]$, $\alpha \in [\ell]$, and $k \in [n]$.}{Does a $(k,\alpha)$-\aoptimal\ assignment exist?}
}

The problems \dAssignFull\ (\dAssign) and \cAssignFull\ (\cAssign) are defined analogously with respect to Definitions \ref{def:DOptimality} and \ref{def:COptimality}, respectively.

\smallskip
\noindent\textbf{Verification Variants.}
In this paper, we focus on the verification variants of the problems \aAssignFull, \dAssignFull, and \cAssignFull, in which we are additionally given an assignment, and we ask whether it is optimal.

\noindent\scalebox{0.94}{
\defproblem {\vAAssignFull\ (\vAAssign)}{$(A,I,P_{1},\ldots,P_{\ell},\alpha,k,p)$, where $A$ is a set of $n$ agents, $I$ is a set of $m$ items, $P_{i}$ is the preference profile in layer $i$ for each $i \in [\ell]$, $\alpha \in [\ell]$, $k \in [n]$ and $p$ is an assignment $p: A \rightarrow I \cup \lbrace b_{\emptyset} \rbrace$.}{Is $p$ $(k,\alpha)$-\aoptimal?}
}

The other verification problems, \vDAssignFull\ (\vDAssign) and \vCAssignFull\ (\vCAssign), are defined analogously.

Our main problems are \vAAssign\ and \vCAssign. In most of our proofs we focus on these two problems at first. Then, we explain how to extend the proof in order to have the same result on the problem \vDAssign\ as well. Informally speaking, the technique is to add more preference profiles to the constructed instances which enforce each subset of agents that admits trading cycles to have a specific size.

\smallskip
\noindent\textbf{Example.}
Consider the following instance, where the agent set is $A= \lbrace a_{1},a_{2},a_{3},a_{4},a_{5}\rbrace$, the item set is $I = \lbrace b_{1},b_{2},b_{3},b_{4}, b_{5} \rbrace$, and there are four layers, defined as follows.

Layer 1:
\begin{framed}[0.4\columnwidth]
\begin{itemize}
\item $a_{1}\ $: $\ b_{2} >b_{1}$
\item $a_{2}\ $: $\ b_{3}> b_{2}>b_{1}$
\item $a_{3}\ $: $\ b_{3}>b_{1}>b_{4}$
\item $a_{4}\ $: $\ b_{2}>b_{1}>b_{3}$
\item $a_{5}\ $: $\ b_{4}>b_{2}>b_{1}>b_{3}$
\end{itemize}
\end{framed} 
Layer 2:
\begin{framed}[.4\columnwidth]
\begin{itemize}
\item $a_{1}\ $: $\ b_{2} > b_{1}$
\item $a_{2}\ $: $\ b_{2}> b_{3}$
\item $a_{3}\ $: $\ b_{1}>b_{2}>b_{3}>b_{4}$
\item $a_{4}\ $: $\ b_{3}$
\item $a_{5}\ $: $\ b_{4}>b_{3}>b_{2}$
\end{itemize}
\end{framed} 
Layer 3:
\begin{framed}[.35\columnwidth]
\begin{itemize}
\item $a_{1}\ $: $\ b_{3} > b_{2} > b_{1}$
\item $a_{2}\ $: $\ b_{4} > b_{1}> b_{2}$
\item $a_{3}\ $: $\ b_{1} > b_{3}>b_{4}$
\item $a_{4}\ $: $\ b_{5}$
\item $a_{5}\ $: $\ b_{1}>b_{3}>b_{2}$
\end{itemize}
\end{framed}
Layer 4:
\begin{framed}[.4\columnwidth]
\begin{itemize}
\item $a_{1}\ $: $\ b_{3}>b_{1}>b_{2}$
\item $a_{2}\ $: $\ b_{1}> b_{2}$
\item $a_{3}\ $: $\ b_{4}>b_{3}$
\item $a_{4}\ $: $\ \emptyset$
\item $a_{5}\ $: $\ b_{2}>b_{3}>b_{4}>b_{1}$
\end{itemize}
\end{framed}

Consider the assignment $p:A \rightarrow I \cup \lbrace b_{\emptyset} \rbrace$ that satisfies $p(a_{1})= b_{1}$, $p(a_{2})= b_{2}$, $p(a_{3})= b_{4}$, $p(a_{4})= b_{\emptyset}$, and $p(a_{5})= b_{3}$. We set $k = 2$ and $\alpha = 2$. Let us write all the self loops and the subsets of size two that admit trading cycles and in each layer:
\begin{itemize}
\item Layer 1: $\lbrace a_{2},a_{5} \rbrace$, $\lbrace a_{3} , a_{5} \rbrace$.
\item Layer 2: $\lbrace a_{3},a_{5} \rbrace$.
\item Layer 3: $\lbrace a_{1},a_{2} \rbrace$, $\lbrace a_{1}, a_{5} \rbrace$, $a_{4}$ admits a self loop.
\item Layer 4: no trading cycles with $2$ agents or self loops.
\end{itemize}

By \Obcref{observation:notAOptimalA}, since there is no subset of size $2$ that appears in trading cycles in at least $\ell - \alpha + 1 = 4 - 2 + 1 = 3$ layers, or an agent that appears in self loops in at least $3$ layers, we conclude that $p$ is $(2,2)$-\aoptimal\ and also $(2,2)$-\doptimal.
In contrast, notice that the subset $\lbrace a_{1}, a_{2} \rbrace$ appears in trading cycles with additional agents in $3$ layers:
\begin{itemize}
\item In layer 1, the subset $\lbrace a_{1},a_{2},a_{5} \rbrace$ admits the trading cycle $(a_{1},b_{1},a_{2},b_{2},a_{5},b_{3})$.
\item In layer 3, the subset $\lbrace a_{1},a_{2},a_{3} \rbrace$ admits the trading cycle $(a_{1},b_{1},a_{2},b_{2},a_{3},b_{4})$.
\item In layer 4, the subset $\lbrace a_{1},a_{2},a_{5} \rbrace$ admits the trading cycle $(a_{1},b_{1},a_{5},b_{3},a_{2},b_{2})$.
\end{itemize}
Therefore, by \Obcref{observation:notAOptimalC}, $p$ is not $(2,2)$-\coptimal.

We study these three problems from the perspective of parameterized complexity.

\subsection{Parameterized Complexity} Let $\Pi$ be an \NPH\ problem. In the framework of Parameterized Complexity, each instance of $\Pi$ is associated with a {\em parameter} $k$. Here, the goal is to confine the combinatorial explosion in the running time of an algorithm for $\Pi$ to depend only on $k$. Formally, we say that $\Pi$ is {\em fixed-parameter tractable (\FPT)} if any instance $(I, k)$ of $\Pi$ is solvable in time $f(k)\cdot |I|^{\OO(1)}$, where $f$ is an arbitrary computable function of $k$. A weaker request is that for every fixed $k$, the problem $\Pi$ would be solvable in polynomial time. Formally, we say that $\Pi$ is {\em slice-wise polynomial (\XP)} if any instance $(I, k)$ of $\Pi$ is solvable in time $f(k)\cdot |I|^{g(k)}$, where $f$ and $g$ are arbitrary computable functions of $k$. Nowadays, Parameterized Complexity supplies a rich toolkit to design \FPT\ and \XP\ algorithms \cite{DBLP:series/txcs/DowneyF13,DBLP:books/sp/CyganFKLMPPS15,fomin2019kernelization}.

Parameterized Complexity also provides methods to show that a problem is unlikely to be \FPT. The main technique is the one of parameterized reductions analogous to those employed in classical complexity. Here, the concept of \WO-hardness replaces the one of \NP-hardness, and for reductions we need not only construct an equivalent instance in \FPT\ time, but also ensure that the size of the parameter in the new instance depends only on the size of the parameter in the original one.

\begin{definition}[{\bf Parameterized Reduction}]\label{definition:parameterized-reduction}
Let $\Pi$ and $\Pi'$ be two parameterized problems. A {\em parameterized reduction} from $\Pi$ to $\Pi'$ is an algorithm that, given an instance $(I,k)$ of $\Pi$, outputs an instance $(I',k')$ of $\Pi'$ such that:
\begin{itemize}
\item $(I,k)$ is a \yes-instance\ of $\Pi$ if and only if $(I',k')$ is a \yes-instance\ of $\Pi'$.
\item $k' \leq g(k)$ for some computable function $g$.
\item The running time is $f(k) \cdot |\Pi|^{\OO(1)}$ for some computable function $f$.
\end{itemize}
\end{definition}

If there exists such a reduction transforming a problem known to be \WOH\ to another problem $\Pi$, then the problem $\Pi$ is \WO-hard as well. Central \WOH-problems include, for example, deciding whether a nondeterministic single-tape Turing machine accepts within $k$ steps, {\sc Clique} parameterized be solution size, and {\sc Independent Set} parameterized by solution size. To show that a problem $\Pi$ is not \XP\ unless \textsf{P}=\NP, it is sufficient to show that there exists a fixed $k$ such $\Pi$ is \NPH. Then, the problem is said to be \paraH.

A companion notion to that of fixed-parameter tractability is the one of a polynomial kernel. Formally, a parameterized problem $\Pi$ is said to admit a {\em polynomial compression} if there exists a (not necessarily parameterized) problem $\Pi'$ and a polynomial-time algorithm that given an instance $(I,k)$ of $\Pi$, outputs an equivalent instance $I'$ of $\Pi'$ (that is, $(I,k)$ is a \yes-instance of $\Pi$ if and only if $I'$ is a \yes-instance of $\Pi'$) such that $|I'|\leq p(k)$ where $p$ is some polynomial that depends only on $k$. In case $\Pi'=\Pi$, we further say that $\Pi$ admits a {\em polynomial kernel}. For more information on Parameterized Complexity, we refer the reader to recent books such as \cite{DBLP:series/txcs/DowneyF13,DBLP:books/sp/CyganFKLMPPS15,fomin2019kernelization}.

\smallskip
\noindent\textbf{Non-Existence of a Polynomial Compression.} Our proof of the ``unlikely existence'' of polynomial kernels relies on the well-known notions of OR-cross-composition and AND-cross-composition. Before we present these techniques, let us define the following. Suppose that $\Sigma$ is some finite alphabet, $\Sigma^{\leq n}$ is the set of all words of length at most $n$ over $\Sigma$, and $\Sigma^{*}$ is the set of all possible words of any length over $\Sigma$.

\begin{definition}[{\bf Polynomial Equivalence Relation}]
An equivalence relation $\RR$ on the set $\Sigma^{*}$ is called a {\em polynomial equivalence relation} if the following conditions are satisfied:
\begin{enumerate}
\item There exists an algorithm that, given strings $x,y \in \Sigma^{*}$, resolves whether $(x,y) \in \RR$ in time polynomial in $|x| + |y|$.
\item $\RR$ restricted to the set $\Sigma^{\leq n}$ has at most $p(n)$ equivalence classes for some polynomial $p$.
\end{enumerate}
\end{definition}

\begin{definition} [{\bf OR-Cross-Composition}]\label{def:cross-comp} A (not necessarily parameterized) problem $\Pi$ {\em OR-cross-composes} into a parameterized problem $\Pi'$ if there exists a polynomial-time algorithm, called an {\em OR-cross-composition}, that given instances $I_1,I_2,\ldots,I_t$ of $\Pi$ for some $t \in \mathbb{N}$ that are equivalent with respect to some polynomial equivalence relation $\RR$, outputs an instance $(I,k)$ of $\Pi'$ such that the following conditions are satisfied.
\begin{itemize}
\item $k\leq p(s + \log{t})$ for some polynomial $p$.
\item $(I,k)$ is a \yes-instance of $\Pi'$ if and only if at least one of the instances $I_1,I_2,\ldots,I_t$ is a \yes-instance of $\Pi$.
\end{itemize}
\end{definition}

The notion of {\em AND-Cross-Composition} is defined similarly, but instead of requiring that the output instance is a \yes-instance\ if and only if at least one of the input instances is \yes-instance, we require that the output instance is a \yes-instance\ if and only if all the input instance should be yes instances. We will use the following proposition to prove that the problems do not admit polynomial kernels. 

\begin{proposition}[\cite{DBLP:journals/jcss/BodlaenderDFH09,DBLP:journals/siamdm/BodlaenderJK14}]\label{prop:noKern}
Let $\Pi$ be an \NP-hard (not necessarily parameterized) problem that OR-cross-composes (or AND-cross-composes) into a parameterized problem $\Pi'$. Then, $\Pi'$ does not admit a polynomial compression, unless \NP$\subseteq $\coNPpoly. 
\end{proposition}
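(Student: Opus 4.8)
The plan is \emph{not} to prove this proposition from first principles but to reduce it to the known impossibility of \emph{distillation} algorithms for \NP-hard problems: the Fortnow--Santhanam theorem in the OR-case and Drucker's theorem in the AND-case, each asserting that the relevant distillation would force $\NP \subseteq \coNPpoly$. The argument is by contradiction. Assume that $\Pi$ OR-cross-composes (respectively, AND-cross-composes) into $\Pi'$ and, toward a contradiction, that $\Pi'$ \emph{does} admit a polynomial compression into some problem $R$. The goal is to chain the cross-composition and the compression into a distillation-type algorithm for $\Pi$.

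First I would take $t$ arbitrary instances $I_1,\ldots,I_t$ of $\Pi$, each of size at most $s$, and partition them according to the polynomial equivalence relation $\RR$ supplied by the cross-composition. By the definition of a polynomial equivalence relation, the inputs of size at most $s$ fall into at most $p(s)$ classes for some polynomial $p$, and the partition is computable in polynomial time. I would feed each class (a list of pairwise $\RR$-equivalent instances) into the cross-composition, obtaining per class a single instance $(J,k)$ of $\Pi'$ with $k \le q(s + \log t)$ for some polynomial $q$. Applying the hypothesized polynomial compression of $\Pi'$ to each $(J,k)$ then yields an instance of $R$ of size at most $r(q(s + \log t))$, i.e.\ polynomial in $s$ and polylogarithmic in $t$.

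By correctness of the cross-composition and of the compression, for each class the resulting $R$-instance is a \yesinstance\ if and only if some member of that class is a \yesinstance\ of $\Pi$ (OR-case); hence the OR over all $\le p(s)$ compressed instances equals the OR over the original $t$ inputs. Taking their disjoint union produces a single instance of size at most $p(s)\cdot r(q(s+\log t))$, which is polynomial in $s$ and only polylogarithmic, hence $t^{o(1)}$, in $t$. For $t$ large relative to $s$ this is sublinear in $t$, so the composite algorithm is a (weak) OR-distillation of the \NPH\ problem $\Pi$. Invoking Fortnow--Santhanam gives $\NP \subseteq \coNPpoly$, the desired contradiction. The AND-case is identical, with every OR replaced by an AND and the Fortnow--Santhanam bound replaced by Drucker's AND-distillation theorem.

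I expect the main obstacle to be the size accounting rather than any new idea: one must verify that the $\log t$ term introduced by the cross-composition survives the polynomial compression only as a $\mathrm{poly}(\log t)$, hence $t^{o(1)}$, factor, so that the output genuinely meets the sublinear-in-$t$ requirement of the (weak) distillation framework to which the impossibility theorems apply. A secondary subtlety is that the AND-case is only available because of Drucker's theorem, which is substantially deeper than the OR-case; without it the AND-cross-composition technique would not yield the stated conclusion. Everything else is a routine chaining of the two hypothesized algorithms together with their correctness guarantees.
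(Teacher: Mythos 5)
The paper does not prove this proposition at all: it is imported verbatim from the cited works of Bodlaender, Downey, Fellows and Hermelin and of Bodlaender, Jansen and Kratsch, so there is no in-paper argument to compare against. Your proposal is essentially the standard proof from those references --- chain the cross-composition with the hypothesized compression to obtain an OR- (resp.\ AND-) distillation of the \NP-hard problem $\Pi$ into the language ``OR (resp.\ AND) of instances of $R$'', then invoke Fortnow--Santhanam (resp.\ Drucker) --- and the structure (partition by $\RR$, one composed-and-compressed instance per class, final disjunction/conjunction over at most $p(s)$ outputs) is right, including the observation that the target of the distillation is $\mathrm{OR}(R)$ rather than $R$ itself, which is harmless since both impossibility theorems allow arbitrary target languages. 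The one place where your argument is looser than the cited proofs is the size accounting for the $\log t$ term: rather than appealing to a ``sublinear in $t$'' weak-distillation variant, the standard rigorous route is deduplication --- since duplicate inputs change neither the OR nor the AND, one may assume the $t$ inputs of size at most $s$ are pairwise distinct, so $t \leq |\Sigma|^{s+1}$ and $\log t = \cO(s)$, whence the composed parameter $q(s+\log t)$ and the compressed output are genuinely $\mathrm{poly}(s)$, matching the hypotheses of the Fortnow--Santhanam and Drucker theorems as usually stated. With that adjustment (or with a citation to a version of those theorems that explicitly tolerates a $t^{o(1)}$ factor), your proof is complete and faithful to the sources the paper cites.
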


To obtain (essentially) tight conditional lower bounds for the running times of algorithms, we rely on the well-known {\em Exponential-Time Hypothesis (\ETH)} \cite{DBLP:journals/jcss/ImpagliazzoP01,DBLP:journals/jcss/ImpagliazzoPZ01,DBLP:conf/iwpec/CalabroIP09}. To formalize the statement of \ETH, first recall that  given a formula $\varphi$ in conjuctive normal form (CNF) with $n$ variables and $m$ clauses, the task of {\sc CNF-SAT} is to decide whether there is a truth assignment to the variables that satisfies $\varphi$. In the {\sc $p$-CNF-SAT} problem, each clause is restricted to have at most $p$ literals. \ETH\ asserts that {\sc 3-CNF-SAT} cannot be solved in time $\OO(2^{o(n)})$.
%!TEX root = Main.tex
\section{Properties of the Concepts of Optimality}
\label{sec:propertiesOfConcepts}
\noindent\textbf{Relations between different notions of optimality.}
We start with some simple properties regarding the notions of $(k,\alpha)$-\aoptimality, $(k,\alpha)$-\doptimality, and $(k,\alpha)$-\coptimality. Additionally, we prove that \vAAssign, \vDAssign, and \vCAssign\ are in \coNP, and that \vDAssign\ is solvable in polynomial time when $\alpha = \ell$.

We begin by considering what would have happened if that the second condition in Definition \ref{def:COptimality} is applied to all subsets of agents of size at least $k$ (rather than subsets of size exactly $k$). Intuitively, we show that it does not make sense to consider \vCAssign\ with $\geq$ instead of $=$ with respect to subset sizes, as then we just get the exact same problem. In other words, this yields an alternative definition of $(k,\alpha)$-\coptimality. 

\begin{lemma}
\label{lemma:COptimality2}
An assignment $p$ is $(k,\alpha)$-\coptimal\ for an instance $(A,I,P_{1},\ldots,P_{\ell})$ if and only if it satisfies the following conditions:
\begin{enumerate}
\item For each subset of agents $K \subseteq A$ such that $|K| \geq k$, there exist $\alpha$ layers $i_{1},\ldots,i_{\alpha}$ such that, for each $j \in [\alpha]$, there does not exist $K' \subseteq A$ that contains $K$ ($K \subseteq K' \subseteq A$) and admits a trading cycle in layer $i_{j}$.
\item If $k=1$, then for each $a \in A$, there exist $\alpha$ layers where it does not admit a self loop.
\end{enumerate}
\end{lemma}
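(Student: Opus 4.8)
The plan is to observe that the two characterizations differ only in the first condition — the quantification over subsets with $|K| = k$ versus $|K| \geq k$ — since the second condition (concerning self loops when $k = 1$) appears verbatim in both Definition~\ref{def:COptimality} and the statement of the lemma. Hence it suffices to prove that, for a fixed assignment $p$, the first condition of Definition~\ref{def:COptimality} holds if and only if the first condition of the lemma holds. The backward implication is immediate: any $K$ with $|K| = k$ in particular satisfies $|K| \geq k$, so if the property holds for all subsets of size at least $k$, then it holds for those of size exactly $k$.

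For the forward implication, I would exploit the monotonicity of the property ``no superset admits a trading cycle.'' Assume $p$ is $(k,\alpha)$-\coptimal, and let $K \subseteq A$ be arbitrary with $|K| \geq k$. Choose any $K_0 \subseteq K$ with $|K_0| = k$. Applying $(k,\alpha)$-\coptimality\ to $K_0$, there exist $\alpha$ layers $i_1,\ldots,i_\alpha$ such that, for each $j \in [\alpha]$, no set $K'$ with $K_0 \subseteq K' \subseteq A$ admits a trading cycle in layer $i_j$. I claim these same $\alpha$ layers witness the lemma's first condition for $K$: any $K'$ with $K \subseteq K' \subseteq A$ satisfies $K_0 \subseteq K \subseteq K'$, so $K'$ is also a superset of $K_0$, and therefore $K'$ does not admit a trading cycle in any of $i_1,\ldots,i_\alpha$. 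Thus no superset of $K$ admits a trading cycle in these layers, exactly as required.

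The argument carries no real obstacle; the only point requiring care is the direction of the quantifiers, namely that one should fix the $\alpha$ good layers using a size-$k$ subset $K_0 \subseteq K$ and then transfer them to the larger set $K$, relying on the fact that every superset of $K$ is automatically a superset of $K_0$. The case $k = 1$ is handled identically by taking $K_0$ to be any singleton contained in $K$, and the self-loop condition needs no separate treatment since it is common to both statements.
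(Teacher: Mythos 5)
Your proposal is correct and follows essentially the same argument as the paper's proof: for the forward direction, both fix a size-$k$ subset $K_0 \subseteq K$ (the paper calls it $X$), obtain the $\alpha$ good layers from Definition~\ref{def:COptimality}, and transfer them to $K$ via the observation that every superset of $K$ is a superset of $K_0$; the backward direction is the same immediate specialization to $|K|=k$.
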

\begin{proof}
$\Rightarrow$: Suppose that $p$ is $(k,\alpha)$-\coptimal. Let $K \subseteq A$ be a subset of agents such that $|K| \geq k$, and let $X \subseteq K$ be a subset of $K$ such that $|X| = k$. By Definition \ref{def:COptimality}, there exist $\alpha$ layers, $i_{1},\ldots,i_{\alpha}$ such that for each $j \in [\alpha]$, there does not exist $Y \subseteq A$ that contains $X$ ($X \subseteq Y$) and admits a trading cycle in layer $i_{j}$. Let $j \in [\alpha]$, and let $Y \subseteq A$ be a subset such that $K \subseteq Y$. Since $X \subseteq K \subseteq Y$, we have that $Y$ does not admit a trading cycle in layer $i_{j}$. Thus, $K$ does not appear in trading cycles with possibly other agents in these $\alpha$ layers. If $k = 1$, then $p$ satisfies the second condition as well. So, both conditions are satisfied.

\smallskip
\noindent $\Leftarrow$: Suppose that $p$ satisfies both conditions. Then, if $k=1$, by Definition \ref{def:COptimality} we have that $p$ satisfies the second condition. Let $K \subseteq A$ be a subset of agents such that $|K| = k$. By the first condition of the Lemma, there exist $\alpha$ layers where the agents in $K$ do not admit trading cycles with possibly other agents. Thus, $p$ is $(k,\alpha)$-\coptimal.\qed
\end{proof}

The next lemma shows that if we modify Definition \ref{def:COptimality} such that the first condition applies to all subsets of agents of size at most $k$ (rather than exactly $k$), and the second condition is satisfied for any value of $k$, then the parameter $k$ becomes senseless. We refer to this new optimality notion as {\em $(k,\alpha)$-subset* optimality}.

\begin{lemma}
\label{lemma:ubSubsetOptimal}
Let $(A,I,P_{1},\ldots,P_{\ell})$ be an instance of \mlassign\, and let $k \in [|A|]$ and $\alpha \in [\ell]$. Then, $p$ is $(k,\alpha)$-subset* optimal if and only if it is $(1,\alpha)$-\coptimal.
\end{lemma}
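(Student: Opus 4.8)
The plan is to prove the two implications separately, both directly from the definitions, the engine in each case being a simple monotonicity property of the ``superset'' condition appearing in Definition~\ref{def:COptimality}. Concretely, I would first record the observation that for any $S \subseteq T \subseteq A$ and any layer $j$, if no $K'$ with $S \subseteq K'$ admits a trading cycle in layer $j$, then no $K''$ with $T \subseteq K''$ admits a trading cycle in layer $j$ either, simply because every superset of $T$ is also a superset of $S$. Hence the predicate ``there exist $\alpha$ layers in which no superset of $K$ admits a trading cycle'' only becomes \emph{easier} to satisfy as $K$ grows, and the \emph{same} $\alpha$ layers that witness it for $S$ witness it for every $T \supseteq S$.

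Assume first that $p$ is $(1,\alpha)$-\coptimal. To establish $(k,\alpha)$-subset* optimality, I would take an arbitrary nonempty $K \subseteq A$ with $|K| \leq k$, fix an agent $a \in K$, and apply the first condition of $(1,\alpha)$-\coptimality\ to the singleton $\lbrace a \rbrace$, obtaining $\alpha$ layers in which no superset of $\lbrace a \rbrace$ admits a trading cycle. By the monotonicity observation applied with $S = \lbrace a \rbrace$ and $T = K$, these same $\alpha$ layers already witness the first condition of subset* optimality for $K$. The self-loop requirement is literally identical in the two notions, so it transfers verbatim. For the converse, assume $p$ is $(k,\alpha)$-subset* optimal. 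Since $k \geq 1$, every singleton $\lbrace a \rbrace$ is itself a subset of size at most $k$, so instantiating the first condition of subset* optimality on each $\lbrace a \rbrace$ yields precisely the first condition of $(1,\alpha)$-\coptimality, and again the self-loop conditions coincide.

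The proof is short, and the only genuine point of care—which I would address explicitly rather than treat as a real obstacle—is the range of the quantification over subsets. I would make clear that ``subsets of size at most $k$'' ranges over \emph{nonempty} sets, consistent with the convention $k \in [n]$ in Definition~\ref{def:COptimality}; this is what guarantees that an agent $a \in K$ can always be chosen, and it is in fact necessary, since admitting the empty set would force $\alpha$ layers containing no trading cycle whatsoever, a condition strictly stronger than $(1,\alpha)$-\coptimality. I would also note in passing that a single agent can never form a trading cycle on its own, as the strict-preference condition in the definition of a trading cycle fails for a length-one cycle, so the singleton case is governed exactly by the superset condition together with the self-loop condition, precisely as both notions require.
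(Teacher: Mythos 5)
Your proposal is correct and follows essentially the same route as the paper's own proof: both directions reduce to picking an agent $a$ in the subset $K$ and transferring the ``no superset admits a trading cycle'' condition between $\lbrace a \rbrace$ and $K$ via the observation that every superset of $K$ is a superset of $\lbrace a \rbrace$. Your explicit statement of the monotonicity observation and the remarks on nonemptiness and singleton cycles are slightly more careful than the paper's write-up, but the argument is the same.
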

\begin{proof}
$\Rightarrow$: Suppose that $p$ is $(k,\alpha)$-subset* optimal. If $k = 1$, then clearly $p$ is also $(1,\alpha)$-\coptimal. Otherwise, let $a \in A$. By the definition of subset* optimality, $a$ does not admit trading cycles with possibly other agents in at least $\alpha$ layers. Then, by Definition \ref{def:COptimality}, $p$ is $(1 , \alpha)$-\coptimal.

$\Leftarrow$: Suppose that $p$ is $(1,\alpha)$-\coptimal. First, each agent does not admit self loops in at least $\alpha$ layers. Second, let $K \subseteq A$ be a subset of at most $k$ agents and let $a \in K$. Since $p$ is $(1,\alpha)$-\coptimal, we have that $a$ is not part of trading cycles with other agents in at least $\alpha$ layers. Since $a \in K$, this implies that $K$ is not part of larger trading cycles in these layers. Then, $p$ is $(k,\alpha)$-subset* optimal.\qed
\end{proof}

Next, by the definitions of $(k,\alpha)$-\aoptimality, $(k,\alpha)$-\coptimality\, and $(k,\alpha)$-\doptimality, we immediately obtain the two following observations.

\begin{observation}
\label{lemma:aAndDEq}
Let $(A,I,P_{1},\ldots,P_{\ell})$ be an instance of $\mlassign$ with $\ell$ layers. Let $p : A \rightarrow I \cup \lbrace b_{\emptyset} \rbrace$ be an assignment. Then, for every $k \in [|A|]$ and $\alpha \in [\ell]$, $p$ is $(k',\alpha)$-\aoptimal\ for all $k' \in [k]$ (simultaneously) if and only if $p$ is $(k,\alpha)$-\doptimal.
\end{observation}

\begin{observation}
\label{obs:COptIsAOpt}
Let $(A,I,P_{1},\ldots,P_{\ell})$ be an instance of $\mlassign$ with $\ell$ layers, let $p : A \rightarrow I \cup \lbrace b_{\emptyset} \rbrace$ be an assignment, and let $k \in [|A|]$ and $\alpha \in [\ell]$. Assume that $p$ is $(k , \alpha)$-\coptimal. Then, $p$ is $(k,\alpha)$-\aoptimal. 
\end{observation}

We next prove equivalence relations between the various notions of optimality under specific choices of $k$ and $\alpha$.

\begin{lemma}\label{lemma:lemma2}
Let $(A,I,P_{1},\ldots,P_{\ell})$ be an instance of $\mlassign$ with $\ell$ layers. Let $p : A \rightarrow I \cup \lbrace b_{\emptyset} \rbrace$ be an assignment. Then, the following properties are equivalent:
\begin{enumerate}
\item $p$ is $(k,\ell)$-\aoptimal\ for all $k \in [n]$ (simultaneously).
\item $p$ is $(n,\ell)$-\doptimal.
\item $p$ is $(1,\ell)$-\coptimal.
\item $p$ is $\ell$-globally optimal.
\end{enumerate}
\end{lemma}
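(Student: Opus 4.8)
The plan is to prove the four-way equivalence by establishing a cycle of implications, using the characterizations already available in the excerpt. The key facts I would lean on are Proposition~\ref{prop:po-iff-no-tc-and-sl} (pareto optimality is equivalent to admitting no trading cycles and no self loops), Definition~\ref{def:alpha-globally-optimal} ($\ell$-global optimality means $p$ is pareto optimal in all $\ell$ layers, since $\alpha = \ell$ forces all layers to be chosen), and Observation~\ref{lemma:aAndDEq} together with Observation~\ref{obs:COptIsAOpt}. Since $\alpha = \ell$ throughout, the phrase ``there exist $\alpha$ layers where [some bad event] does not occur'' collapses to ``[the bad event] does not occur in \emph{any} of the $\ell$ layers,'' which is the crucial simplification driving every step.

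First I would show $(4) \Rightarrow (1)$: if $p$ is $\ell$-globally optimal, then by Definition~\ref{def:alpha-globally-optimal} it is pareto optimal in every single layer, so by Proposition~\ref{prop:po-iff-no-tc-and-sl} no layer admits any trading cycle or self loop. Hence for every $k$ and every $K \subseteq A$ with $|K| = k$, all $\ell$ layers are layers in which $K$ admits no trading cycle (and every agent avoids self loops in all $\ell$ layers), so $p$ is $(k,\ell)$-\aoptimal\ for all $k$ simultaneously. The implication $(1) \Rightarrow (2)$ is immediate from Observation~\ref{lemma:aAndDEq} applied with $k = n$. For $(2) \Rightarrow (4)$ I would argue the contrapositive flavor directly: if $p$ is $(n,\ell)$-\doptimal, then by condition~1 of Definition~\ref{def:DOptimality} no subset $K$ of size at most $n$ (i.e.\ \emph{any} subset, since every trading cycle involves a subset of $A$) admits a trading cycle in any layer, and by condition~2 no agent admits a self loop in any layer; by Proposition~\ref{prop:po-iff-no-tc-and-sl} this makes $p$ pareto optimal in all $\ell$ layers, i.e.\ $\ell$-globally optimal. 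This closes the loop $(4)\Rightarrow(1)\Rightarrow(2)\Rightarrow(4)$.

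It remains to fold in statement~(3). I would obtain $(3) \Rightarrow (1)$ from Observation~\ref{obs:COptIsAOpt}, which gives that $(1,\ell)$-\coptimality\ implies $(1,\ell)$-\aoptimality; but I additionally need $(k,\ell)$-\aoptimality\ for \emph{all} $k$, so the cleaner route is to show $(4) \Rightarrow (3)$ and $(3) \Rightarrow (4)$ directly and splice this two-cycle into the main loop. For $(4)\Rightarrow(3)$: $\ell$-global optimality gives no trading cycle in any layer, so in particular no subset $K' \supseteq \{a\}$ admits a trading cycle in any layer and no agent admits a self loop, which is exactly $(1,\ell)$-\coptimality\ via Definition~\ref{def:COptimality} (with $\alpha = \ell$ forcing the witnessing layers to be all layers). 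For $(3)\Rightarrow(4)$ I would use Observation~\ref{observation:notAOptimalC}: if $p$ were \emph{not} $\ell$-globally optimal, some layer contains a trading cycle or a self loop; a trading cycle on a set $K'$ witnesses, for any $a \in K'$, that the singleton $\{a\}$ (size $k=1$) is contained in a set admitting a trading cycle, contradicting $(1,\ell)$-\coptimality\ since only one ``bad'' layer already exceeds the threshold $\ell - \alpha + 1 = 1$.

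The steps here are all short, so there is no single deep obstacle; the main thing to get right is the bookkeeping around the quantifier collapse when $\alpha = \ell$, namely that ``$\alpha$ good layers exist'' becomes ``all layers are good'' and dually that ``$\ell-\alpha+1 = 1$ bad layer suffices to violate optimality.'' The one genuinely delicate point is the direction into~(4) from the \aoptimal/\doptimal\ side: I must make sure that ruling out trading cycles for \emph{all} subsets $K$ (of every size) plus ruling out self loops is genuinely equivalent to pareto optimality in each layer, which is why routing $(2)\Rightarrow(4)$ through $(n,\ell)$-\doptimality\ (where the subset-size bound $|K|\le n$ imposes no real restriction) rather than through $(k,\ell)$-\aoptimality\ for a fixed $k$ is the safe choice—a single fixed $k$ would only control trading cycles of that exact size and would not by itself forbid cycles of other lengths.
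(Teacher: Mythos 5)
Your proposal is correct and relies on the same key ingredients as the paper's proof (the quantifier collapse at $\alpha=\ell$, Proposition~\ref{prop:po-iff-no-tc-and-sl}, and Observations~\ref{lemma:aAndDEq} and \ref{observation:notAOptimalC}); the only difference is a cosmetic reorganization of the implication graph, where the paper proves the single cycle $(1)\Rightarrow(2)\Rightarrow(3)\Rightarrow(4)\Rightarrow(1)$ while you prove $(4)\Rightarrow(1)\Rightarrow(2)\Rightarrow(4)$ together with $(3)\Leftrightarrow(4)$. Each individual step you give is sound, including the correct observation that $(2)\Rightarrow(4)$ is safe because the bound $|K|\le n$ in upper-bounded optimality imposes no restriction and therefore rules out trading cycles of every length.
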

\begin{proof}
We prove each equivalence separately:

\smallskip
\noindent
$(1) \Rightarrow (2)$: Immediate from \Obcref{lemma:aAndDEq}.

\smallskip
\noindent
$(2) \Rightarrow (3)$:
Assume that $p$ is $(n,\ell)$-\doptimal. First, each agent does not admit self loops in all the layers. Second, for each subset $K \subseteq A$, the agents in $K$ do not admit trading cycles in all the layers. Let $\lbrace a \rbrace \subseteq A$, and let $X \subseteq A$ such that $\lbrace a \rbrace \subseteq X$. Then, $X$ does not admit trading cycles in all the layers. Thus, we conclude that $\lbrace a \rbrace$ is not part of trading cycles in all the layers. Thus, $p$ is $(1,\ell)$-\coptimal. 

\smallskip
\noindent
$(3) \Rightarrow (4)$: Assume that $p$ is $(1,\ell)$-\coptimal. Then, $p$ does not admit self loops in any layer. Towards a contradiction, suppose that there exists a layer $i \in [\ell]$ containing the trading cycle $(a_{1},p(a_{1}),\ldots,a_{t},p(a_{t}))$. Then, we have a layer where $a_{1}$ admits a trading cycle (with other agents), a contradiction to the supposition that $p$ is $(1,\ell)$-\coptimal. Thus, all the layers do not contain self loops and trading cycles. So, by \Prcref{prop:po-iff-no-tc-and-sl}, $p$ is $\ell$-globally optimal.

\smallskip
\noindent
$(4) \Rightarrow (1)$: Assume that $p$ is $\ell$-globally optimal. Then, by \Prcref{prop:po-iff-no-tc-and-sl}, $p$ does not admit self loops and trading cycles of any length in all the layers. Thus, it is clearly $(k,\ell)$-\aoptimal\ for each $k \in [n]$. 
\end{proof}

We now give a simple observation that relates $(k,\alpha)$-\aoptimality\ to $(k,\alpha)$-\coptimality\ in a very particular setting, which will arise in our proofs.

\begin{observation}
\label{obs:eqAoptAndCopt}
Let $(A,I,P_{1},\ldots,P_{\ell})$ be an instance of $\mlassign$ with $\ell$ layers, and let $p : A \rightarrow I \cup \lbrace b_{\emptyset} \rbrace$ be an assignment. Assume that for each $i \in [\ell]$, $P_{i}$ does not contain trading cycles containing more than $k$ agents with respect to $p$. Then, $p$ is $(k,\alpha)$-\aoptimal\ if and only if it is $(k,\alpha)$-\coptimal.
\end{observation}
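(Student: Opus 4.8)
The plan is to prove the two implications separately, noting that one of them comes essentially for free. The forward direction, that $(k,\alpha)$-\coptimal\ implies $(k,\alpha)$-\aoptimal, holds unconditionally and is exactly the content of \Obcref{obs:COptIsAOpt}; so I would simply invoke it and concentrate all the work on the converse, where the structural assumption on the instance is actually used.

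For the converse, the key observation is that the assumption collapses the ``superset'' clause in the definition of \coptimality\ onto the subset itself. Concretely, I would fix a subset $K \subseteq A$ with $|K| = k$ and a layer $i \in [\ell]$, and establish the following per-layer equivalence: \emph{$K$ admits a trading cycle in layer $i$ with respect to $p$ if and only if there exists $K' \subseteq A$ with $K \subseteq K'$ that admits a trading cycle in layer $i$}. The direction from left to right is trivial, taking $K' = K$. For the reverse direction, suppose some $K' \supseteq K$ admits a trading cycle in layer $i$; then $|K'| \geq |K| = k$, while the assumption that $P_i$ contains no trading cycle with more than $k$ agents forces $|K'| \leq k$. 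Hence $|K'| = k$, and since $K \subseteq K'$ with equal cardinalities we get $K' = K$, so $K$ itself admits the trading cycle.

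Given this equivalence, the set of layers in which ``$K$ does not admit a trading cycle'' coincides with the set of layers in which ``no superset of $K$ admits a trading cycle''. Consequently, the statement ``there exist $\alpha$ layers where the corresponding condition holds'' is literally the same for the two notions, for every $K$ of size $k$. For $k \geq 2$ this immediately yields that the trading-cycle condition of Definition~\ref{def:AOptimality} and the first condition of Definition~\ref{def:COptimality} are equivalent; and since neither definition imposes any further requirement in this regime, the two notions coincide.

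The one point that needs separate care — and the only place I expect any subtlety — is the case $k = 1$, where the two definitions treat self loops differently: $(1,\alpha)$-\aoptimal\ imposes only the self-loop condition, whereas $(1,\alpha)$-\coptimal\ imposes both the self-loop condition and the (superset) trading-cycle condition on singletons. Here I would use that a trading cycle involves at least two agents, since a single agent cannot strictly prefer its own item to itself; hence the hypothesis ``no trading cycle with more than $k = 1$ agents'' means there are no trading cycles at all in any layer. The first condition of Definition~\ref{def:COptimality} is then automatically satisfied — in every layer no superset of a singleton admits a trading cycle, so any $\alpha$ layers witness it — and $(1,\alpha)$-\coptimality\ reduces exactly to the self-loop condition defining $(1,\alpha)$-\aoptimality. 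This closes the remaining case and completes the equivalence.
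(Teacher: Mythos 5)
Your proposal is correct and follows essentially the same route as the paper: the direction from \coptimality\ to \aoptimality\ is delegated to \Obcref{obs:COptIsAOpt}, and the converse uses the size bound to force any superset $K'\supseteq K$ admitting a trading cycle to satisfy $k\leq |K'|\leq k$, hence $K'=K$. Your explicit treatment of the $k=1$ case (observing that trading cycles require at least two agents, so the hypothesis makes the trading-cycle clause of \coptimality\ vacuous) is in fact slightly more careful than the paper's own write-up, which invokes the trading-cycle clause of Definition~\ref{def:AOptimality} without noting that it is only stated for $k\geq 2$.
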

\begin{proof}
$\Rightarrow$: Assume that $p$ is $(k,\alpha)$-\aoptimal. First, if $k = 1$, then $p$ clearly satisfies that each agent does not admit self loops in some $\alpha$ layers. Second, let $K \subseteq A$ be a subset of agents of size $k$. By the definition of $(k,\alpha)$-\aoptimality, there exist $\alpha$ layers $i_{1},\ldots,i_{\alpha}$ such that for each $j \in [\alpha]$, $K$ does not admit a trading cycle in layer $i_{j}$. Since each preference profile does not contain trading cycles with more than $k$ agents, we have that for each $K' \subseteq A$ such that $K \subseteq K'$, $K'$ does not admit a trading cycle in layer $i_{j}$. Hence, $p$ is $(k,\alpha)$-\coptimal.

$\Leftarrow$: Immediate from \Obcref{obs:COptIsAOpt}.\qed
\end{proof}

By \Obcref{obs:eqAoptAndCopt}, and since every trading cycle can contain at most $n$ agents, we conclude the following.

\begin{corollary}
\label{col:nEquiv}
Assignment $p$ is $(n,\alpha)$-\aoptimal\ if and only if it is $(n,\alpha)$-\coptimal.
\end{corollary}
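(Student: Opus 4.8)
The plan is to simply invoke \Obcref{obs:eqAoptAndCopt} with the particular choice $k = n$. To legitimately do so, the one thing I would need to check is that the hypothesis of that observation is met in this case, namely that for each layer $i \in [\ell]$ the preference profile $P_{i}$ contains no trading cycle with \emph{more than} $n$ agents with respect to $p$. This is immediate: a trading cycle $(a_{i_{0}},b_{j_{0}},\ldots,a_{i_{t-1}},b_{j_{t-1}})$ by definition visits pairwise distinct agents $a_{i_{0}},\ldots,a_{i_{t-1}}$ (each paired with its own allocated item), and the total number of agents in the instance is exactly $n$. Hence every trading cycle contains at most $n$ agents, so in particular no profile contains a trading cycle with strictly more than $n$ agents, and the hypothesis holds vacuously.

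With the hypothesis verified, I would apply \Obcref{obs:eqAoptAndCopt} directly with $k = n$: its conclusion states that $p$ is $(n,\alpha)$-\aoptimal\ if and only if it is $(n,\alpha)$-\coptimal, which is precisely the assertion of the corollary. I do not anticipate any genuine obstacle here, since the statement is a pure specialization of the preceding observation; the only (entirely routine) ingredient is the cardinality bound on trading cycles, which is what allows the observation to be applied at the extreme value $k = n$ regardless of the instance.
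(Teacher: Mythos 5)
Your proposal is correct and matches the paper's own argument exactly: the corollary is stated as an immediate consequence of \Obcref{obs:eqAoptAndCopt} together with the observation that every trading cycle contains at most $n$ agents, which is precisely the hypothesis check you carry out. Nothing further is needed.
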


\smallskip
\noindent\textbf{Membership in \coNP\ and polynomial-time algorithms.} We now assert that all problems considered in this paper are in \coNP.

\begin{lemma}\label{lemma:lemma5}
The problems \vAAssign, \vDAssign, and \vCAssign\ are in \coNP.
\end{lemma}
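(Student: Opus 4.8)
The plan is to show that each of the three problems is in \coNP\ by exhibiting, for every \noinstance, a polynomial-size certificate that can be verified in polynomial time. Equivalently, I would show that the \emph{complement} of each problem is in \NP. Recall that a problem is in \coNP\ precisely when its \noinstance s admit short, efficiently-checkable witnesses; here a \noinstance\ is an instance whose given assignment $p$ is \emph{not} optimal in the relevant sense. So the task reduces to reading off, from \Obcref{observation:notAOptimalA} and \Obcref{observation:notAOptimalC} (and the analogous characterization for \vDAssign\ noted in the text), exactly what data constitutes a proof of non-optimality and checking that this data is polynomially bounded and polynomial-time verifiable.

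For \vAAssign, by \Obcref{observation:notAOptimalA} the instance is a \noinstance\ iff (when $k \ge 2$) there is a subset $K \subseteq A$ with $|K| = k$ together with $\ell - \alpha + 1$ layers $i_1,\ldots,i_{\ell-\alpha+1}$ in each of which $K$ admits a trading cycle, or (when $k = 1$) there is an agent appearing in self loops in $\ell - \alpha + 1$ layers. The natural witness is therefore the set $K$ (at most $n$ agents), the list of $\ell - \alpha + 1 \le \ell$ layers, and, crucially, for each such layer an explicit trading cycle on exactly the agents of $K$ (a sequence of the agents of $K$ together with their assigned items in the order they trade). This certificate has size polynomial in $n + m + \ell$. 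Verifying it is straightforward: for each listed layer $j$ and the given cyclic ordering $(a_{i_0}, b_{j_0}, \ldots)$, I check that $p(a_{i_r}) = b_{j_r}$ and that $b_{j_r} <^{(j)}_{a_{i_r}} b_{j_{r+1 \bmod |K|}}$ for every $r$, directly from the definition of a trading cycle, and that the agents appearing are exactly $K$; self loops are checked just as directly. For \vCAssign\ I would use \Obcref{observation:notAOptimalC} in the same way, except that the witness now additionally supplies, for each of the $\ell - \alpha + 1$ layers, the enlarged set $K_j \supseteq K$ realizing the trading cycle; this is still at most $n$ agents per layer, so the certificate remains polynomial. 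For \vDAssign\ the witness is of the same form (a single offending subset of size at most $k$ with its cycles in $\ell - \alpha + 1$ layers, or an agent with $\ell - \alpha + 1$ self loops), again per the characterization stated after Definition~\ref{def:DOptimality}.

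I expect no genuine obstacle here; the proof is essentially a bookkeeping argument confirming that the existential characterizations of non-optimality are witnessed by objects of polynomial size and that the verification conditions are local and checkable in polynomial time. The only point deserving care is to ensure that a trading cycle is specified \emph{explicitly} (as an ordered sequence of agent–item pairs) rather than merely asserted, so that the verifier does not itself have to search for a cycle; once the cycle is handed over, checking the two defining inequalities per edge against the preference lists $<^{(j)}_{a}$ is immediate. A second minor point is to confirm the witness size bound even in the worst case: each cycle names at most $\min(n,m)$ agents and items, there are at most $\ell$ distinguished layers, so the total certificate length is $\OO((n + m)\cdot \ell)$, which is polynomial in the input. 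Hence all three problems lie in \coNP, as claimed.
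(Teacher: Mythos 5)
Your proposal is correct and follows essentially the same route as the paper: both show the complements are in \NP\ by exhibiting, per \Obcref{observation:notAOptimalA} and \Obcref{observation:notAOptimalC}, a witness consisting of the offending subset $K$ (or agent), the $\ell-\alpha+1$ layers, and the explicit trading cycles (or self loops) in each, then verifying the cycle conditions against the preference lists in polynomial time. No substantive difference.
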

\begin{proof}
We prove that the complements of the problems \coVAAssign, \coVDAssign, and \coVCAssign\ are in \NP.
Given an instance $D = (A,I,P_{1},\ldots,P_{\ell},\alpha,k,p)$:
\begin{itemize}
\item  If $D$ is a \yes-instance\ of \coVAAssign\ (or \coVDAssign), then by \Obcref{observation:notAOptimalA} the witness is of one of the following forms: 
\begin{enumerate}
\item  If $k \geq 2$: A subset $K \subseteq A$ such that $|K| = k$ (or $|K| \leq k$ for \coVDAssign), $\ell - \alpha +1$ layers $i_{1},\ldots, i_{\ell-\alpha +1}$, and $\ell - \alpha +1$ trading cycles $C_{i_{1}},\ldots,C_{i_{\ell - \alpha + 1}}$, such that the agents in $K$ admit the trading cycle $C_{i_{j}}$ in layer $i_{j}$, for each $j \in [\ell - \alpha +1]$. Given the witness, a verifier can verify in polynomial time whether the agents in $K$ admit the corresponding trading cycle in each layer $P_{i_{j}}$ with respect to $p$, for each $j \in [\ell - \alpha +1]$.
\item If $k=1$ (or for any value of $k$ only if the input is an instance of \coVDAssign): An agent $a \in A$, $\ell - \alpha + 1$ layers, $i_{1},\ldots,i_{\ell - \alpha + 1}$, and $\ell - \alpha + 1$ self loops, $L_{i_{1}},\ldots,L_{i_{\ell - \alpha + 1}}$, such that $a$ admits the self loop $L_{i_{1}}$ in layer $P_{i_{j}}$ with respect to $p$, for each $j \in [\ell - \alpha + 1]$. This can be easily verified in polynomial time.
\end{enumerate}

\item  If $D$ is a \yes-instance\ of \coVCAssign, then by \Obcref{observation:notAOptimalC} the witness is of one of the following forms: 
\begin{enumerate}
\item  A subset $K \subseteq A$ such that $|K| = k$, $\ell - \alpha +1$ layers, $i_{1},\ldots,i_{\ell - \alpha + 1}$, and $\ell - \alpha + 1 $ trading cycles, $C_{i_{1}},\ldots,C_{i_{\ell - \alpha +1}}$, such that $C_{i_{j}}$ is a trading cycle that contains all the agents from $K$ together with possibly additional agents. Given the witness, the verifier can verify in polynomial time whether for each $j \in [\ell - \alpha + 1]$ (i) all the agents from $K$ appear in $C_{i_{j}}$ and (ii) $C_{i_{j}}$ is a trading cycle in layer $i_{j}$.
\item If $k = 1$: An agent $a \in A$, $\ell - \alpha + 1$ layers, $i_{1},\ldots,i_{\ell - \alpha + 1}$, and $\ell - \alpha + 1$ self loops, $L_{i_{1}},\ldots,L_{i_{\ell - \alpha + 1}}$, such that $a$ admits the self loop $L_{i_{j}}$ in $P_{i_{j}}$ with respect to $p$ for each $j \in [\ell - \alpha + 1]$. This can be easily verified in polynomial time.
\end{enumerate}
\end{itemize}\qed
\end{proof}

Recall that the first conditions in the definitions of $(k,\alpha)$-\aoptimality\ and $(k,\alpha)$-\coptimality\ apply on subsets of agents of size exactly $k$, and in the definition of \doptimality, the first condition applies to subsets of size at most $k$. We prove now that this difference makes \vDAssign\ much easier to solve than \vAAssign\ when $\alpha = \ell$.

\begin{lemma}
Let $(A,I,P_{1},\ldots,P_{\ell})$ be an instance of $\mlassign$, let $p$ be an assignment, and let $k \in [n]$. Then:
\begin{enumerate}
\item Deciding whether $p$ is $(k , \ell)$-\doptimal\ can be done in a polynomial time.
\item Deciding whether $p$ is $(k' , \ell)$-\coptimal\ for all $k' \in [k]$ (simultaneously) takes a polynomial time.
\end{enumerate}
\end{lemma}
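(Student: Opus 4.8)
The plan is to unfold both optimality notions under the special choice $\alpha = \ell$ and reduce each to a polynomial-time graph problem on the trading graphs of the individual layers. Throughout, I will use the trading graph of each layer $j$ as defined before Proposition~\ref{prop:po-iff-no-cycles-in-td}, and I will work mostly with its subgraph $G_{j}$ induced by the agents together with the \emph{allocated} items. In $G_{j}$ every edge leaving an item points from an allocated item to its owner, so the directed cycles of $G_{j}$ correspond exactly to the trading cycles of layer $j$; moreover a trading cycle on $t$ agents is precisely a cycle of length $2t$ in $G_{j}$ (alternating agents and items). Self loops are handled separately: an agent admits a self loop in layer $j$ exactly when it prefers some \emph{unallocated} item over its assigned item, and this can be tested directly in polynomial time for every agent and every layer.

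For part~(1), I would first note that since there are only $\ell$ layers, requiring ``$\alpha = \ell$ layers'' in Definition~\ref{def:DOptimality} forces both conditions to hold in \emph{all} layers. Hence $p$ is $(k,\ell)$-\doptimal\ if and only if (i)~no agent admits a self loop in any layer, and (ii)~no subset $K$ with $|K| \le k$ admits a trading cycle in any layer; and (ii)~is equivalent to saying that no layer contains a trading cycle involving at most $k$ agents. To test~(ii) I would, for each layer $j$, compute the girth of $G_{j}$ (the length of a shortest directed cycle) by a standard polynomial-time shortest-cycle computation; the shortest trading cycle of layer $j$ then uses exactly $\mathrm{girth}(G_{j})/2$ agents, so~(ii) is violated precisely when some layer satisfies $\mathrm{girth}(G_{j})/2 \le k$. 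Combining this with the self-loop test of~(i) decides $(k,\ell)$-\doptimality\ in polynomial time.

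For part~(2), the key structural observation is that requiring $(k',\ell)$-\coptimality\ simultaneously for all $k' \in [k]$ is equivalent to requiring it just for $k' = 1$: the case $k' = 1$ already forbids any single agent from lying on a trading cycle in any layer, hence forbids all trading cycles, and (through the second condition of Definition~\ref{def:COptimality}) all self loops; once no layer has a trading cycle or a self loop, the conditions for every larger $k'$ hold vacuously. Thus the decision problem collapses to testing $(1,\ell)$-\coptimality, which by Lemma~\ref{lemma:lemma2} coincides with being $\ell$-globally optimal, i.e., pareto optimal in every layer. By Proposition~\ref{prop:po-iff-no-cycles-in-td} this amounts to checking that the trading graph of each layer is acyclic, which is polynomial.

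I expect the only real obstacle to be the correctness of the reduction in part~(1): I must argue carefully that the directed cycles of the restricted graph $G_{j}$ are in bijection with the trading cycles of layer $j$ and that their agent-count is exactly half their length, so that a shortest-cycle routine indeed detects a trading cycle of size at most $k$ exactly when one exists. This is also where the contrast with $(k,\ell)$-\aoptimality\ (and the \textsc{$k$-Cycle} versus \textsc{$k$-Shortest Cycle} analogy noted earlier) is essential: ``at most $k$'' lets me settle the question with a single shortest-cycle computation, whereas ``exactly $k$'' would not. The remaining ingredients---self-loop detection, girth computation, and acyclicity testing---are standard polynomial-time routines requiring no further ideas.
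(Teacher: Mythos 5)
Your proposal is correct and follows essentially the same route as the paper: for part~(1) it reduces the question to a shortest-cycle (girth) computation on each layer's trading graph, rejecting exactly when some layer has a cycle on at most $k$ agents, and for part~(2) it reduces to testing that every layer's trading graph is acyclic. The only differences are presentational (you handle self loops separately and justify part~(2) via the collapse to $k'=1$ and Lemma~\ref{lemma:lemma2}, where the paper argues directly from Observation~\ref{observation:notAOptimalC}), and these do not change the argument.
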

\begin{proof}
We provide a polynomial time algorithm for each problem.

\smallskip
\noindent\textbf{\vDAssign.} 
By Definition \ref{def:DOptimality}, we need to verify whether (i) for each subset $K \subseteq A$ such that $2 \leq |K| \leq k$, the agents in $K$ do not admit trading cycles in all the layers, and whether (ii) for each $a \in A$, $a$ does not admit self loops in all the layers. Equivalently, we need to answer whether the trading graph of each layer does not contain cycles with at most $k$ agents. Thus, we construct the trading graphs of all the layers with respect to $p$, and denote them by $G_{1},\ldots,G_{\ell}$. For each $i \in [\ell]$, we run a polynomial time algorithm on $G_{i}$ to find the length of the shortest cycle $t_{i}$. If for some $i \in [\ell]$, $t_{i} \leq 2k$, which means that there exists at least one cycle with $k$ or fewer agents in $G_{i}$, we return $\false$. Otherwise, all the $t_{i}$'s are greater than $k$ (i.e.~there does not exists a subset of agents of size at most $k$ that admits cycles in at least $\ell - \alpha + 1 = 1 $ layer), and we return $\true$.

\smallskip
\noindent\textbf{The second problem.} 
By \Obcref{observation:notAOptimalC}, we need to return $\false$ if and only if (i) there exists an agent that admits self loop in at least one layer, or (ii) there exist a subset $K \subseteq A$ such that $2 \leq |K| \leq k$, and a subset $K \subseteq K' \subseteq A$, such that the agents in $K'$ admit a trading cycle in at least one layer. Notice that one of these conditions occur if and only if at least one of the trading graphs of the layers contains a cycle (of any length). Thus, we construct the trading graphs of all the layers, $G_{1},\ldots,G_{\ell}$, and run a polynomial time algorithm on each of them to check whether they contain cycles. If at least one trading graph contains cycles, we return $\false$. Otherwise, we return $\true$.\qed
\end{proof}

The algorithms rely on the fact that it is easier to decide whether a directed graph contains a cycle of length at most $k$ than to decide whether it contains a cycle of length exactly $k$, or a cycle through specified elements (see, e.g., \cite{DBLP:journals/corr/abs-1301-1517,e6ecb3cf-9adf-4c45-92d4-28f854976a9e}).
%!TEX root = Main.tex
\section{Fixed-Parameter Tractability}
\label{sec:part2FPT}

In this section, we first prove that \vAAssign, \vDAssign\ and \vCAssign\ admit polynomial kernel with respect to $\nalloc + \ell$ where $\nalloc = \text{number of items allocated by the assignment}$. Then, we prove that the three problems are \FPT\ with respect to the parameters $\nalloc$, $n = \nagents$, and $m = \nitems$, by providing $\OO^{*}(2^{\nalloc})$-time algorithms. Then, we prove that \vAAssign\ and \vDAssign\ admit an \XP\ algorithm when parameterized by $k$, and even an \FPT\ algorithm when parameterized by $d+ k$ (note that $d \leq m$, but $d$ and $n$ are incomparable). After that, we prove that the three problems are  \coWOH\ when parameterized by $k + \ell$. 

\begin{figure*}[t]
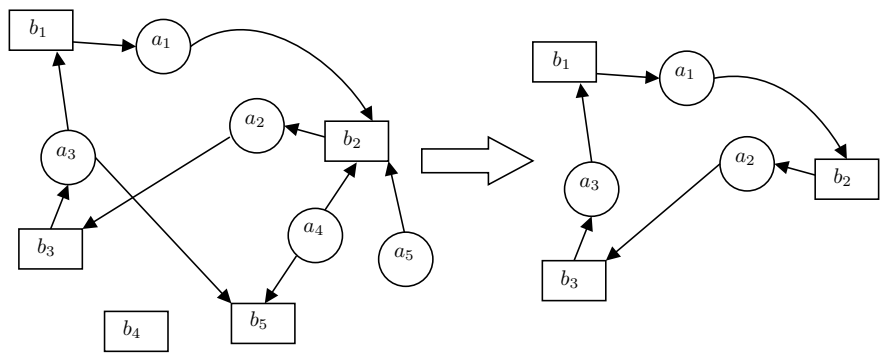

\begin{center}\scalebox{0.8}{
\tikzfig{Figures/KernelFigure}
}
\end{center}
\caption{Suppose we run the kernel on an instance with the agent set $A = \lbrace a_{1},a_{2},a_{3},a_{4},a_{5} \rbrace$, the item set $I = \lbrace b_{1},b_{2},b_{3},b_{4},b_{5} \rbrace$, in which $p(a_{1}) = b_{1}$, $p(a_{2})=b_{2}$, $p(a_{3}) = b_{3}$, $p(a_{4}) = b_{\emptyset}$ and $p(a_{5}) = b_{\emptyset}$. The kernel removes $a_{4}$, $a_{5}$, $b_{4}$, and $b_{5}$ since they are not matched by the assignment. The resulting instance cannot contain self loops but only trading cycles as shown in the figure.}\label{KernelFigure}
\end{figure*}

\begin{theorem}\label{theorem:kernelAlloc}
The problems \vAAssign, \vDAssign, and \vCAssign\ admit kernels of size $\ell \cdot (\nalloc)^{2}$.
\end{theorem}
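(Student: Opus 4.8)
The plan is to design two polynomial-time reduction rules and then argue their safeness and the resulting size bound. The guiding observation is that a trading cycle $(a_{i_0},b_{j_0},a_{i_1},b_{j_1},\ldots)$ uses only items of the form $b_{j_r}=p(a_{i_r})$, so every agent and every item occurring in a trading cycle is matched by $p$. Consequently, unmatched agents and unallocated items are irrelevant to the trading-cycle part of all three optimality notions. The one place where unallocated items do matter is in self loops, so these must be resolved \emph{before} anything is discarded.

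First I would apply a self-loop reduction. For each agent $a$ and each layer $j$, test in polynomial time whether $a$ admits a self loop in layer $j$ (i.e.\ whether some unallocated item $b$ satisfies $p(a)<_{a}^{(j)}b$), and record the number $s(a)$ of layers in which $a$ admits a self loop. Recall that the self-loop condition is part of $(k,\alpha)$-\aoptimality\ and $(k,\alpha)$-\coptimality\ only when $k=1$, but is part of $(k,\alpha)$-\doptimality\ for \emph{every} $k$. Whenever the self-loop condition is relevant, \Obcref{observation:notAOptimalA} and \Obcref{observation:notAOptimalC} tell us that the instance is a \noinstance\ as soon as some agent has $s(a)\ge \ell-\alpha+1$, in which case I output a trivial constant-size \noinstance. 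I would also dispose of the degenerate cases $k=1$ for \vAAssign\ and \vCAssign\ entirely here, since then no trading cycle is involved and the answer is fully determined by the values $s(a)$.

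Next I would apply a trimming reduction, in the regime where the self-loop condition has been confirmed to hold (or $k\ge 2$, so it is irrelevant). Here I remove every agent $a$ with $p(a)=b_{\emptyset}$, every item not allocated by $p$, and truncate each remaining preference list to the allocated items, keeping $\ell$, $\alpha$, and $p$ (restricted to matched agents) unchanged; if $k$ exceeds the number $\nalloc$ of matched agents I emit a trivial \yesinstance/\noinstance\ as appropriate (for \vDAssign\ I simply replace $k$ by $\min\{k,\nalloc\}$). The equivalence to establish is that, after the first rule, trading cycles are preserved bijectively: both endpoints $b_{j_r},b_{j_{r+1}}$ of every preference comparison used by a trading cycle are allocated items, so truncation neither destroys nor creates a trading cycle, and since every item is now allocated, the trimmed instance contains no self loops at all. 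Hence for $k\ge 2$ the trimmed instance is $(k,\alpha)$-\aoptimal\ (resp.\ \coptimal) iff the original is, and for \vDAssign\ it is $(\min\{k,\nalloc\},\alpha)$-\doptimal\ iff the self-loop check passed and the original trading-cycle condition holds.

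For the size, the trimmed instance has exactly $\nalloc$ agents (matched agents are in bijection with allocated items), exactly $\nalloc$ items, and $\ell$ layers, with each of the $\ell\cdot\nalloc$ preference lists of length at most $\nalloc$; thus its encoding has size $\OO(\ell\cdot(\nalloc)^{2})$, as claimed. I expect the main obstacle to be the bookkeeping around the self-loop/trading-cycle split: because self loops depend on precisely the unallocated items that the trimming rule deletes, the argument must be organized so that the self-loop condition is completely settled by the first rule, and one must verify separately for each of the three notions (and for the $k=1$ versus $k\ge 2$ cases) that what remains is faithfully captured by the trimmed instance.
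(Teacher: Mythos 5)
Your kernel is, in structure, the same as the paper's: a polynomial-time self-loop check performed exactly when the self-loop condition is part of the optimality notion at hand, followed by deletion of the unmatched agents and unallocated items, justified by the observation that every agent and every item on a trading cycle is matched by $p$, together with the size bound $\OO(\ell\cdot(\nalloc)^{2})$. Your preservation argument for trading cycles under trimming and your remark that the trimmed instance admits no self loops both correspond to the paper's own claim and discussion, and your extra care with the case $k>\nalloc$ is harmless.

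There is, however, one concrete error. You dispose of the case $k=1$ for \vCAssign\ by asserting that ``no trading cycle is involved and the answer is fully determined by the values $s(a)$.'' That is true for \vAAssign, whose first condition in Definition~\ref{def:AOptimality} is explicitly restricted to $k\ge 2$, but it is false for \vCAssign: in Definition~\ref{def:COptimality} the first condition applies to every $k$, including $k=1$ (the paper stresses that for $k=1$ \emph{both} conditions must hold), so $(1,\alpha)$-\coptimality\ additionally requires each singleton $\lbrace a\rbrace$ to avoid membership, possibly together with other agents, in any trading cycle in at least $\alpha$ layers; indeed, Lemma~\ref{lemma:lemma2} identifies $(1,\ell)$-\coptimality\ with $\ell$-global optimality. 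A two-agent, one-layer instance in which $a_{1}$ and $a_{2}$ each prefer the other's allocated item has no self loops at all, yet is not $(1,1)$-\coptimal; your rule would output \yes. The repair is immediate and returns you to the paper's algorithm: for \vCAssign\ with $k=1$, after the self-loop check, keep the instance and apply the trimming rule, which (as you already argue) preserves the set of trading cycles in every layer and therefore preserves the first condition of Definition~\ref{def:COptimality} as well.
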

\begin{proof}
\begingroup
\removelatexerror% Nullify \@latex@error
\begin{algorithm}[t]
\setstretch{1.2}
\DontPrintSemicolon
\BlankLine
        \If{$k = 1$ or the input is an instance of \vDAssign}{
	        $ fewLoops \longleftarrow preprocessing()$\;
	        \If {$ fewLoops = \false$}{
		        \KwRet{\no}\;
	        }
	    }
	
		$A_{\emptyset} \longleftarrow \lbrace a \in A\  |\  p(a) = b_{\emptyset} \rbrace$\;
		$I_{\emptyset} \longleftarrow \lbrace b \in I\  |\  \text{there does not exist $a \in A$ such that $p(a)=b$} \rbrace$\;
		$A \longleftarrow A \setminus A_{\emptyset}$\;
		$I \longleftarrow I \setminus I_{\emptyset}$\; 
		\ForEach{$i \in [\ell]$}{
			remove from $P_{i}$ the preference lists of the agents in $A_{\emptyset}$\;
			remove from the preference lists in $P_{i}$ all the items from $I_{\emptyset}$\;
			remove from $p$ the entries that correspond to the agents from $A_{\emptyset}$\;
		}	
	
	\KwRet{the reduced instance}\;
 \caption{Kernel for \vAAssign, \vDAssign\ and \vCAssign\ of size $\ell \cdot (\nalloc)^{2}$.}
 \label{algorithm:kernelAC}
\end{algorithm}
\endgroup

\begingroup
\removelatexerror% Nullify \@latex@error
\begin{algorithm}[t]
\SetKwInOut{Input}{input}\SetKwInOut{Output}{output}
\Input{An instance $(A,I,P_{1},\ldots,P_{\ell},k,\alpha,p)$}
\Output{Does every agent $a \in A$ have at least $\alpha$ layers where it does not admit self loops?}
\setstretch{1.2}
\DontPrintSemicolon
\BlankLine
\ForEach{$a \in A$}{
	\If{$a$ admits self loops in at least $\ell - \alpha +1$ layers}{
		\KwRet{$\false$}\;
	}
}
\KwRet{$\true$}\;
\caption{Preprocessing step for the kernel for \vAAssign, \vDAssign, and \vCAssign.}
 \label{algorithm:preprocessing}
\end{algorithm}
\endgroup

We provide the same kernel for all problems. The kernel relies on the fact that only agents assigned to items different than $b_{\emptyset}$ can admit trading cycles. Thus, if $k=1$, or the input is an instance of \vDAssign, the kernel begins with a preprocessing step, in which it verifies whether each agent does not admit self loops in at least $\alpha$ layers in polynomial time. If it finds that the given instance is a \no-instance\ after the preprocessing step, it returns \no. After that (for any problem and for any value of $k$), the kernel removes from the instance (i) all the agents assigned to $b_{\emptyset}$ together with their preference lists, and (ii) all the items which are not allocated to any agent and their appearances in the preferences of the agents. In other words, the kernel first checks whether the optimality is violated due to the optimality requirement regarding the self loops (possible only when $k=1$ or the instance is of \vDAssign). After that, it keeps only the agents and the items that are matched by the assignment $p$ since they are the only candidates that can appear in trading cycles. The resulting trading graphs cannot contain self loops but only trading cycles (see \Ficref{KernelFigure}). The kernel is given in \Alcref{algorithm:kernelAC} and the preprocessing step is given in \Alcref{algorithm:preprocessing}.

The preprocessing step can be implemented in polynomial time by constructing the trading graphs of the layers with respect to $p$ and considering for each $a \in A$ the self loops it admits in each trading graph. The kernel can clearly be implemented in polynomial time as well. The correctness of \Alcref{algorithm:preprocessing} is due to \Obcref{observation:notAOptimalA}. Suppose that we run the kernel on an instance $D_{1} = (A,I,P_{1},\ldots,P_{\ell},\alpha,k,p)$ of \vAAssign, \vDAssign, or \vCAssign, and the kernel outputs the instance $D_{2} = (A',I',P_{1}',\ldots,P_{\ell}',\alpha,k,p')$. We claim the following.

\begin{myclaim}\label{claim:kernelClaim1}
Let $i \in [\ell]$, and let $\mathcal{C}_{i}$ be the set of trading cycles in the trading graph of $P_{i}$ with respect to $p$. Let $\mathcal{C}_{i}'$ be the set of trading cycles in the trading graph of $P_{i}'$ with respect to $p'$. Then $\mathcal{C}_{i} = \mathcal{C}_{i}'$.
\end{myclaim}
\begin{proof}

\smallskip
\noindent
$\subseteq$: Let $C = (a_{1},b_{1},\ldots,a_{t},b_{t}) \in \mathcal{C}_{i}$ be a trading cycle in the trading graph of $P_{i}$ with respect to $p$. Since for each $j \in [t]$, $a_{j}$ is assigned to an item, and $b_{j}$ is assigned to an agent, we have that $a_{j} \in A \setminus A_{\emptyset}$ and $b_{j} \in I \setminus I_{\emptyset}$. Thus, the preferences of $a_{j}$ on the items from $I \setminus I_{\emptyset}$ in layer $i$ remain unchanged. Moreover, we have that  $p'(a_{j}) = p(a_{j})$. Then, $C$ is a trading cycle in the trading graph of $P_{i}'$ with respect to $p'$.

\smallskip
\noindent
$\supseteq$: Let  $C = (a_{1},b_{1},\ldots,a_{t},b_{t}) \in \mathcal{C}_{i}'$ be a trading cycle in the trading graph of $P_{i}'$ with respect to $p'$. The kernelization algorithm keeps only the agents assigned to items, and their assigned items. Thus, for each $i \in [t]$, $a_{i}$ and $b_{i}$ are assigned by $p$, i.e.~$p' \subseteq p$. We have that the trading graph of $P_{i}'$ with respect to $p'$ is a sub-graph of the trading graph of $P_{i}$ with respect to $p$. Hence, $C$ is a trading cycle in layer $i$ of $D_{1}$ with respect to $p$.\qed
\end{proof}

\begin{figure}[t]
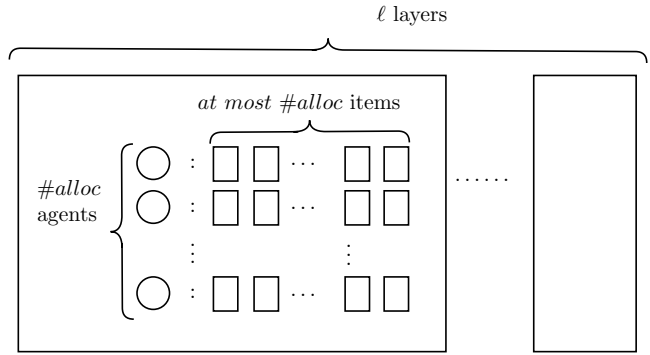

\center \scalebox{0.82}{
\tikzfig{Figures/KernelSize}
}
\caption{The kernel removes all the items which are not allocated and all the agents which are not assigned to items. Thus, its total size is $\OO(\ell \cdot (\nalloc)^{2})$.}\label{KernelSize}
\end{figure}

By the correctness of the preprocessing step, and since the set of trading cycles remains the same in each layer by \Clcref{claim:kernelClaim1}, we have that the resulting instance of the kernel is equivalent to the input instance. Since the kernel keeps only the agents and the items which are matched by the assignment, we have that its size is $\OO(\ell \cdot (\nalloc)^{2})$ (see \Ficref{KernelSize}). An example containing a trading graph of a layer before and after executing the kernel is given in  \Ficref{KernelFigure}.\qed

\end{proof}

Since $\nalloc \leq \min \lbrace n,m \rbrace$, we conclude the following.
\begin{corollary}
The problems \vAAssign, \vDAssign, and \vCAssign\ admit polynomial kernels with respect to $\nalloc + \ell$, $n + \ell$ and $m + \ell$.
\end{corollary}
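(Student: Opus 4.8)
The plan is to derive this directly from \Thcref{theorem:kernelAlloc}, which already supplies a kernel of size $\ell \cdot (\nalloc)^{2}$ for each of the three problems. The only additional ingredient I need is the inequality $\nalloc \leq \min\lbrace n,m \rbrace$, stated immediately before the corollary. I would first justify this inequality: since $p$ allocates each of its $\nalloc$ items to a distinct agent and each agent receives at most one item, the set of allocated items injects into $A$, giving $\nalloc \leq n$; and since allocated items are in particular items, $\nalloc \leq m$. Hence $\nalloc \leq \min\lbrace n,m \rbrace$.

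Next I would observe that a kernel whose output size is bounded by a polynomial in one quantity is automatically a polynomial kernel with respect to any (valid) parameter that upper-bounds that quantity. Concretely, running the same kernelization algorithm of \Thcref{theorem:kernelAlloc} produces, in polynomial time, an equivalent instance of size $\OO(\ell \cdot (\nalloc)^{2})$. This is bounded by $\OO((\nalloc + \ell)^{3})$, so it is a polynomial kernel for the parameter $\nalloc + \ell$. Substituting $\nalloc \leq n$ gives size $\OO(\ell \cdot n^{2}) = \OO((n + \ell)^{3})$, a polynomial kernel for $n + \ell$; substituting $\nalloc \leq m$ gives size $\OO(\ell \cdot m^{2}) = \OO((m + \ell)^{3})$, a polynomial kernel for $m + \ell$.

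There is essentially no obstacle here: the correctness and polynomial running time of the kernelization are inherited verbatim from \Thcref{theorem:kernelAlloc}, and the equivalence of the reduced instance with the input instance has already been established there. The only point requiring a line of care is the elementary bound $\nalloc \leq \min\lbrace n,m \rbrace$, which I would spell out explicitly so that the monotonicity argument (a smaller-parameter kernel yields larger-parameter kernels) is airtight. I would conclude by noting that this monotonicity applies uniformly to \vAAssign, \vDAssign, and \vCAssign, since \Thcref{theorem:kernelAlloc} treats all three with the same algorithm.
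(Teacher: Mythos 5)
Your proposal is correct and follows exactly the paper's route: the paper derives this corollary in one line from \Thcref{theorem:kernelAlloc} together with the observation that $\nalloc \leq \min\lbrace n,m \rbrace$, which is precisely the monotonicity argument you spell out. Your version merely makes explicit the elementary justification of $\nalloc \leq \min\lbrace n,m \rbrace$ and the substitution into the $\ell \cdot (\nalloc)^{2}$ bound, which the paper leaves implicit.
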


\begin{theorem}\label{theorem:fptAlg}
The problems \vAAssign, \vDAssign, and \vCAssign\ are solvable in time $\OO^{*}(2^{\nalloc})$.
\end{theorem}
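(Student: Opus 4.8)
The plan is to first invoke the kernelization of \Thcref{theorem:kernelAlloc} and then, on the reduced instance, run a Bellman--Held--Karp-style dynamic program over all vertex subsets, layer by layer, aggregating the per-layer answers by a fast (superset) zeta transform. After applying \Alcref{algorithm:kernelAC} together with its preprocessing step \Alcref{algorithm:preprocessing}, every self-loop condition has already been checked whenever it is relevant (that is, when $k=1$ or the input is an instance of \vDAssign), and the reduced instance has exactly $\nalloc$ agents and $\nalloc$ items, all matched by $p$; hence its trading graphs contain no self loops. So it remains only to handle trading cycles, and by \Obcref{observation:notAOptimalA} and \Obcref{observation:notAOptimalC} the outstanding task is to decide whether some ``forbidden'' agent subset $K$ (of size exactly $k$, or $2 \le |K| \le k$ for \vDAssign) participates in the relevant kind of trading cycle in at least $\ell-\alpha+1$ layers.

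For each layer $i\in[\ell]$ I would build the agent digraph $H_{i}$ on the (at most $\nalloc$) assigned agents, placing an arc $a\to a'$ whenever $a$ prefers $a'$'s assigned item $p(a')$ over its own item $p(a)$ in layer $i$. A trading cycle that uses \emph{exactly} the agents of a set $S$ corresponds precisely to a Hamiltonian cycle of the induced subdigraph $H_{i}[S]$; hence, for a fixed layer, I can compute the boolean $f_{i}(S)\defeq[\,H_{i}[S]\text{ has a Hamiltonian cycle}\,]$ for \emph{all} subsets $S$ simultaneously by a subset dynamic program in the spirit of Björklund et al.\ and the Floyd--Warshall path-extension recurrence: fix the lowest-indexed vertex of each $S$ as the fixed endpoint, maintain which Hamiltonian paths of $H_{i}[S]$ exist between prescribed endpoints, and extend them one vertex at a time. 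Over all subsets and all $\ell$ layers this costs $\OO^{*}(2^{\nalloc})$.

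For \vAAssign\ and \vDAssign\ this already suffices: I iterate over every subset $K$ with $|K|=k$ (respectively $2\le|K|\le k$), count $|\{i : f_{i}(K)=1\}|$, and by \Obcref{observation:notAOptimalA} return \no\ iff some such $K$ reaches the threshold $\ell-\alpha+1$, and \yes\ otherwise. For \vCAssign\ I additionally need, for each $K$, whether $K$ is contained in \emph{some} trading cycle in a layer. Since the vertex set of any simple cycle of $H_{i}$ induces a subdigraph with a Hamiltonian cycle, this is exactly $g_{i}(K)\defeq\bigvee_{K'\supseteq K} f_{i}(K')$, the superset zeta transform of $f_{i}$. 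I then check, via \Obcref{observation:notAOptimalC}, whether some $K$ with $|K|=k$ satisfies $g_{i}(K)=1$ in at least $\ell-\alpha+1$ layers.

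I expect the main obstacle to be precisely this aggregation for \vCAssign: evaluating $g_{i}(K)=\bigvee_{K'\supseteq K} f_{i}(K')$ directly, by enumerating subset--superset pairs, would cost $\OO^{*}(3^{\nalloc})$. The key step is therefore to compute all values $g_{i}(K)$ by the fast zeta transform, processing one coordinate of the subset lattice at a time, so that the transform runs in $\OO^{*}(2^{\nalloc})$ per layer; combined with the per-layer Hamiltonicity tables, the whole procedure stays within $\OO^{*}(2^{\nalloc})$. Correctness follows from the characterizations of trading cycles above together with \Obcref{observation:notAOptimalA} and \Obcref{observation:notAOptimalC}, and from the correctness of the kernel established in \Thcref{theorem:kernelAlloc}.
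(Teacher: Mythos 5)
Your proposal is correct and follows essentially the same route as the paper: kernelize down to $\nalloc$ matched agents, run a Held--Karp/Bellman-style subset dynamic program per layer to decide which agent sets admit trading cycles (the paper's tables $M_{i}[s,t,X]$ are your Hamiltonicity DP indexed by endpoints instead of by the full cycle set), threshold-count over the relevant subsets via \Obcref{observation:notAOptimalA}, and handle \vCAssign\ by the coordinate-wise fast superset zeta transform to avoid the $\OO^{*}(3^{\nalloc})$ blowup, exactly as in the paper's $N_{i}^{(j)}$ computation. No gaps.
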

\begin{proof}
We provide a dynamic programming algorithm for each of the problems. The algorithms perform an adaptation of the technique by Björklund et al.~\cite{10.1145/1250790.1250801} (to compute the Fast zeta and Möbius transform). Each algorithm begins by running the kernelization algorithm (\Alcref{algorithm:kernelAC}) to reduce the number of the agents to $\nalloc$. Then, the algorithms construct $\ell$ tables with boolean values, each containing $(\nalloc)^{2}\cdot 2^{\nalloc}$ entries. Before we describe the algorithms, let us define the following, which is required to understand the purpose of each entry.

\begin{definition}
Let $(A,I,P_{1},\ldots,P_{\ell},\alpha,k,p)$ be an instance of \vAAssign, \vDAssign\ or \vCAssign, and let $s,t \in A$. We say that there is a {\em trading path} from $s$ to $t$ in layer $i$ if there exist agents $a_{1},\ldots,a_{r} \in A$ such that the trading graph of $P_{i}$ with respect to $p$ contains the path $p(s) \rightarrow s \rightarrow p(a_{1}) \rightarrow a_{1} \rightarrow \ldots \rightarrow p(a_{r}) \rightarrow a_{r} \rightarrow p(t) \rightarrow t$.
\end{definition} 

Notice that a trading cycle is a trading path from an agent to itself. For an agent $s \in A$ and layer $i$, we denote
\[ N_{i}(s) = \lbrace a \in A | (s,p(a)) \text{ is an edge in the trading graph of $P_{i}$} \rbrace. \] That is, $N_{i}(s)$ is the set of agents whose items are preferred by $s$ in layer $i$ over its assigned item $p(s)$. Notice that the trading graph of $P_{i}$ with respect to $p$ contains the path $p(s) \rightarrow s \rightarrow p(a) \rightarrow a$ for each $a \in N_{i}(s)$. We now describe each algorithm separately.

\smallskip
\noindent\textbf{\vAAssign\ and \vDAssign.}
Given an instance $(A,I,P_{1},\ldots,P_{\ell},\alpha,k,p)$ of \vAAssign\ or \vDAssign, the algorithm first performs the kernelization algorithm defined in \Alcref{algorithm:kernelAC}, and updates the input instance accordingly to have size $\OO(\ell (\nalloc)^{2})$. If the kernelization algorithm returns \no, then the algorithm returns $\false$ because, by the correctness of the kernel, we are dealing with a \no-instance. Since the kernel reduces the number of agents to $\nalloc$, we assume that $|A| = \nalloc$. The algorithm initializes $\ell$ tables, $M_{1},\ldots,M_{\ell}$, whose entries will store boolean values defined as follows. For each $i \in [\ell]$, agents $s,t \in A$ and a subset of agents $X \subseteq A$: $M_{i}[s,t,X] = \true$ if there exists a trading path from $s$ to $t$ that contains only the agents from $X \cup \lbrace s, t \rbrace$ and their assigned items; $M_{i}[s,t,X] = \false$ otherwise. Notice that:

\noindent\scalebox{0.9}{
$
    M_{i}[s,t,\emptyset]= 
\begin{dcases}
    \true & \text{if } (s,p(t)) \text{ is an edge in the trading graph of $P_{i}$ w.r.t.~$p$}\\
    \false               & \text{otherwise}
\end{dcases}
$
}
and for each $X \subseteq A$ such that $X \neq \emptyset$:

\[
    M_{i}[s,t,X]= \bigvee\limits_{s' \in N_{i}(s)}{M_{i}[s',t,X \setminus \lbrace s' \rbrace]} 
\]

This is because each trading path from $s$ to $t$ must start with an item of some agent from the set $N_{i}(s)$. Each table $M_{i}$ contains $(\nalloc)^{2}\cdot 2^{\nalloc}$ entries and can be constructed in the same running time. Thus, the algorithm constructs the $\ell$ tables $M_{1},\ldots,M_{\ell}$ in total time of $\OO(\ell \cdot (\nalloc)^{2} \cdot 2^{\nalloc})$. Then, it verifies if for each subset of agents $K \subseteq A$ such that $|K| = k$ (or $|K| \leq k$ if the input is an instance of \vDAssign), there exist $\alpha$ layers where the agents in $K$ do not admit trading cycles. To perform this, for each such subset $K$, the algorithm picks a random agent $a \in K$ and checks whether at least $\alpha$ values among $M_{1}[a,a,K\setminus \lbrace a \rbrace],\ldots,M_{\ell}[a,a,K\setminus \lbrace a \rbrace]$ equal $\true$ (these values determine whether the agents from $K$ admit trading cycles, without additional agents). If every such subset $K$ satisfies this condition, then the algorithm returns $\true$. Otherwise, it returns $\false$. The total running time of the algorithm is $\OO^{*}(2^{\nalloc})$.

\smallskip
\noindent\textbf{\vCAssign.}
Here, given an instance $(A,I,P_{1},\ldots,P_{\ell},\alpha,k,p)$ of \vCAssign, we first run the algorithm for \vAAssign. If it returns $\false$, then we return $\false$ as well. Otherwise, we have the $\ell$ tables $M_{1},\ldots,M_{\ell}$ constructed by the algorithm for \vAAssign. We need to verify that for each $K \subseteq A$ such that $|K| = k$, the agents in $K$ do not admit trading cycles together with possibly other agents in at least $\alpha$ layers. Recall that the tables $M_{1},\ldots,M_{\ell}$ already verify whether each such subset $K$ does not admit trading cycles without additional agents. In order to adapt to the definition of $(k,\alpha)$-\coptimality, we define $\ell$ new tables $N_{1},\ldots,N_{\ell}$. For each $i \in [\ell]$, agents $s,t \in A$, and a subset $X \subseteq A$, $N_{i}[s,t,X] = \true$ if there exists a trading path in layer $i$ from $s$ to $t$ containing all the agents from $X \cup \lbrace s,t \rbrace$ with their assigned items and possibly additional agents with their assigned items; $N_{i}[s,t,X] = \false$ otherwise. Notice that for each $i \in [\ell]$: \scalebox{0.9}{
$
N_{i}[s,t,X] = \bigvee\limits_{Y\subseteq A \text{ s.t.~} X\subseteq Y}{M_{i}[s,t,Y]} = \bigvee\limits_{Z\subseteq A \setminus X}{M_{i}[s,t,X \cup Z]} 
$
}. To construct the tables $N_{1},\ldots,N_{\ell}$ efficiently - in time $\OO^{*}(2^{\nalloc})$ rather than $\OO^{*}(3^{\nalloc})$ as implied from the above equality, we perform the following. Assume that $A = \lbrace a_{1},\ldots,a_{r} \rbrace$ where $r = \nalloc$. We represent each subset of agents $X \subseteq A$ as a vector of $r$ bits $(x_{1},\ldots,x_{r})$ such that for each $i \in [r]$, $x_{i} = 1$ if $a_{i} \in X$, and $x_{i} = 0$ otherwise. Observe that for two subsets $X = (x_{1},\ldots,x_{r})$ and $Y = (y_{1},\ldots,y_{r})$, $X \subseteq Y$ if and only if for each $i \in [r]$, $x_{i} \leq y_{i}$. Thus, for every subset $X = (x_{1},\ldots,x_{r}) \subseteq A$, and agents $s,t \in A$, we have that:

\noindent\scalebox{0.86}{
$
N_{i}[s,t,(x_{1},\ldots,x_{r})] = \hspace{-1.5em}\bigvee\limits_{y_{1},\ldots,y_{r} \in \lbrace 0,1 \rbrace}{\hspace{-1.5em}\delta[x_{1} \leq y_{1} \wedge \ldots \wedge x_{r} \leq y_{r}]} \cdot M_{i}[s,t,(y_{1},\ldots,y_{r})]
$
}
where $\delta$ is an indicator that equals $1$ if the expression in it is true, and $0$ otherwise. For each $j \in [r]$, we define:

$  N_{i}^{(j)}[s,t,(x_{1},\ldots,x_{r})] = \bigvee\limits_{y_{1},\ldots,y_{j} \in \lbrace 0,1 \rbrace}{\delta[x_{1} \leq y_{1} \wedge \ldots \wedge x_{j} \leq y_{j}]} \cdot M_{i}[s,t,(y_{1},\ldots,y_{j}, x_{j+1},\ldots,x_{r})]
$. We also define $
N_{i}^{(0)}[s,t,(x_{1},\ldots,x_{r})] =  M_{i}[s,t,(x_{1},\ldots,x_{r})]
$.

Notice that $N_{i}[s,t,(x_{1},\ldots,x_{r})] = N_{i}^{(t)}[s,t,(x_{1},\ldots,x_{r})]$. We can compute the values $N_{i}^{(j)}[s,t,(x_{1},\ldots,x_{r})]$ efficiently by the following observation:
\vspace{1em}

\noindent\scalebox{0.81}{
$
    N_{i}^{(j)}[s,t,(x_{1},\ldots,x_{r})]= 
\begin{dcases}
    N_{i}^{(j-1)}[s,t,(x_{1},\ldots,x_{r})] & \hspace{-1.5em}\text{if } x_{j} = 1\\
    N_{i}^{(j-1)}[s,t,(x_{1},\ldots,x_{j-1},1,x_{j+1},\ldots,x_{r})]\  \vee & \hspace{-1.5em}\text{if } x_{j} = 0 \\
    \ \ \ \ \ \ \ \ \ \ \ N_{i}^{(j-1)}[s,t,(x_{1},\ldots,x_{j-1},0,x_{j+1},\ldots,x_{r})] 
\end{dcases}
$

}
\vspace{1em}

Each table $N_{i}^{(j)}$ can be constructed in time $\OO(2^{\nalloc+2})$, thus constructing all the tables will take time $\OO(\ell \cdot \nalloc \cdot 2^{\nalloc}) = \OO^{*}(2^{\nalloc})$.
Then, for each subset $K \subseteq A$ with $|K| = k$, the algorithm picks an arbitrary agent $a \in K$ and checks whether there exist $\alpha$ layers $i_{1},\ldots,i_{\alpha}$ such that $N_{i_{j}}[s,t,(x_{1},\ldots,x_{r})] = \false$ for each $j \in [\alpha]$. If this is the case, it returns $\true$, otherwise, it returns $\false$. The correctness of the algorithm follows directly by the definition of $(k,\alpha)$-\coptimality.\qed 
\end{proof}

Since the parameter $\nalloc$ is smaller or equal than $\min\lbrace n,m \rbrace$, we conclude the following.
\begin{corollary}
The problems \vAAssign\ and \vCAssign\ are \FPT\ with respect to $\nalloc$, $n$ and $m$.
\end{corollary}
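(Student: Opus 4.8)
The plan is to derive the corollary directly from \Thcref{theorem:fptAlg} together with a single monotonicity observation, so essentially no new algorithmic work is needed. First I would recall that \Thcref{theorem:fptAlg} already supplies, for both \vAAssign\ and \vCAssign, an algorithm running in time $\OO^{*}(2^{\nalloc})$. Since $2^{\nalloc}$ is a computable function of $\nalloc$ and the suppressed factors hidden by the $\OO^{*}$-notation are polynomial in the input size, this running time matches the definition of fixed-parameter tractability verbatim; hence the two problems are immediately \FPT\ with respect to $\nalloc$.

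The second step is to bound $\nalloc$ in terms of $n$ and $m$. Here I would appeal to Definition~\ref{def:assignment}: every item is allocated to at most one agent and every agent receives at most one item, so the allocated items are in bijection with the subset of agents assigned a non-$b_{\emptyset}$ item, giving $\nalloc \leq n$. Moreover the allocated items form a subset of $I$, so $\nalloc \leq m$ as well. Combining these yields $\nalloc \leq \min\{n,m\}$.

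Finally I would combine the two steps. Because $\nalloc \leq \min\{n,m\}$, we have $2^{\nalloc} \leq 2^{n}$ and $2^{\nalloc} \leq 2^{m}$, so the very same algorithm of \Thcref{theorem:fptAlg} runs in time $\OO^{*}(2^{n})$ and in time $\OO^{*}(2^{m})$. As $2^{n}$ and $2^{m}$ are computable functions of $n$ and of $m$ respectively, this establishes fixed-parameter tractability with respect to $n$ and with respect to $m$, completing the proof.

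I do not expect any genuine obstacle in this argument: its entire content is the standard principle that an \FPT\ algorithm for a parameter $p$ is also an \FPT\ algorithm for any parameter $q$ satisfying $p \leq q$. The only point meriting even a line of care is the inequality $\nalloc \leq \min\{n,m\}$, and that follows at once from the definition of a (legal) assignment, so I regard it as routine rather than a difficulty.
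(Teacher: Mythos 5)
Your proposal is correct and matches the paper's own derivation exactly: the corollary is stated as an immediate consequence of \Thcref{theorem:fptAlg} together with the observation that $\nalloc \leq \min\lbrace n,m \rbrace$, which is precisely your argument. No gaps; the monotonicity step you spell out is the only content needed.
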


We prove now that \vAAssign\ and \vDAssign\ are \XP\ with respect to $k$. We remark that this statement can also follow as a corollary of \Thcref{theorem:fptDPlusK} by first using the kernel in \Thcref{theorem:kernelAlloc} to reduce $n$ to be exactly {\em equal} to $\nalloc$, which is lower-bounded by $d$. As a stand-alone proof is also simple, we opted for it.

\begin{theorem}\label{theorem:xpAlg}
The problems \vAAssign\ and \vDAssign\ are solvable in time $\OO^{*}(n^{\OO(k)})$.
\end{theorem}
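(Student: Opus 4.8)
The plan is to decide non-optimality directly via the characterization in \Obcref{observation:notAOptimalA}: an assignment $p$ fails to be $(k,\alpha)$-\aoptimal\ precisely when either some agent admits self loops in at least $\ell-\alpha+1$ layers (the $k=1$ case), or some subset $K\subseteq A$ with $|K|=k$ admits a trading cycle, using exactly the agents of $K$, in at least $\ell-\alpha+1$ layers. For \vDAssign\ the same characterization applies, except the trading-cycle condition ranges over all subsets with $|K|\le k$ and the self-loop condition is imposed on every agent for all $k$. Since there are only $\OO(n^{k})$ subsets of size exactly $k$ and also $\OO(n^{k})$ subsets of size at most $k$, I would iterate over all relevant subsets and, for each one, test in each of the $\ell$ layers whether it admits a trading cycle, counting the ``bad'' layers and invoking \Obcref{observation:notAOptimalA}.

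The core of the argument is the per-subset, per-layer subroutine. Fix a subset $K$ all of whose agents are assigned real items (an agent mapped to $b_{\emptyset}$ can never lie on a trading cycle, so any $K$ containing such an agent is immediately declared cycle-free), and fix a layer $i$. I would build the auxiliary directed graph $H$ on vertex set $K$ containing the arc $a\to a'$ exactly when $a$ prefers $p(a')$ over its own item $p(a)$ in layer $i$. By the definition of a trading cycle, $K$ admits a trading cycle in layer $i$ (with no additional agents) if and only if the agents of $K$ can be cyclically ordered so that each prefers the next one's item, i.e.\ if and only if $H$ has a Hamiltonian cycle. This I would test with Bellman's $\OO^{*}(2^{|K|})=\OO^{*}(2^{k})$-time dynamic-programming algorithm for \hc\ \cite{10.1145/321105.321111}.

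Assembling these pieces: for \vAAssign\ with $k\ge 2$, for every $K$ of size $k$ I run the subroutine across all $\ell$ layers and return \false\ if $K$ admits a trading cycle in at least $\ell-\alpha+1$ layers. For $k=1$ (and, for \vDAssign, for every agent regardless of $k$), I would instead scan the trading graphs and, for each agent, count the layers in which it admits a self loop, returning \false\ once this reaches $\ell-\alpha+1$; for \vDAssign\ I would additionally run the trading-cycle test over all subsets of size at most $k$. If no violation is found, return \true. Correctness follows immediately from \Obcref{observation:notAOptimalA} together with the analogous characterization of \doptimality. The running time is $\OO^{*}(n^{k}\cdot \ell\cdot 2^{k})$, and since $2^{k}\le n^{k}$ whenever $n\ge 2$ (the case $n\le 1$ being trivial), this is $\OO^{*}(n^{\OO(k)})$.

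I do not expect a genuine obstacle here. The one point that requires care is the exact equivalence between ``$K$ admits a trading cycle with no additional agents'' and ``$H$ has a Hamiltonian cycle'': this must be argued from the formal definition of a trading cycle, in particular from the fact that every agent on a trading cycle is matched to a real item, so the items appearing on the cycle are precisely $\{p(a):a\in K\}$ and the cycle is forced to be a cyclic reordering of $K$. The remaining ingredients --- the subset enumeration and the per-layer bookkeeping --- are routine.
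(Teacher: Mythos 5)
Your proposal is correct and follows essentially the same route as the paper: enumerate the $\OO(n^{k})$ subsets of size (at most) $k$, handle self loops separately for $k=1$ and for \vDAssign, and decide per subset and per layer whether a trading cycle exists by reducing to directed \hc\ solved with Bellman's $\OO^{*}(2^{k})$-time algorithm, invoking \Obcref{observation:notAOptimalA} for correctness. The only cosmetic difference is that you contract each agent--item pair into a single vertex of an auxiliary graph $H$, whereas the paper runs the \hc\ test on the induced subgraph $G_{i}[K]$ of the trading graph containing both the agents and their assigned items; these are equivalent.
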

\begin{proof}
We provide an $\OO^{*}(n^{\OO(k)})$-time algorithm. The algorithm relies on the algorithm by Bellman \cite{10.1145/321105.321111} for \hc\ in directed graphs with running time $\OO^{*}(2^{n})$, where $n$ is the number of vertices. Consider an instance $D = (A,I,P_{1},\ldots,P_{\ell},\alpha,k,p)$ of \vAAssign. For each $i \in [\ell]$, we denote by $G_{i}$ the trading graph of layer $i$ with respect to $p$. For any subset $K \subseteq A$, we denote by $G_{i}[K]$ the subgraph of $G_{i}$ that contains only the vertices that correspond to the agents in $K$ and their assigned items. We first claim the following.

\begin{myclaim}\label{claim:XP1}
Let $i \in [\ell]$ and let $K \subseteq A$ be a subset of agents such that $|K| \geq 2$. Then, the agents in $K$ admit a trading cycle in layer $i$ with respect to $p$ if and only if $G_{i}[K]$ admits a directed Hamiltonian cycle. 
\end{myclaim}
\begin{proof}
\smallskip
\noindent
$\Rightarrow$: Assume that the agents in $K$ admit a trading cycle in layer $i$ with respect to $p$. Then, $G_{i}$ contains a cycle over the agents in $K$ and their assigned items. This cycle is a Hamiltonian cycle in $G_{i}[K]$.

\smallskip
\noindent
$\Leftarrow$: Assume that $G_{i}[K]$ admits a directed Hamiltonian cycle. Since $G_{i}[K]$ is a subgraph of $G_{i}$, we have that the agents in $K$ admit a cycle in $G_{i}$ with their assigned items. This implies that $K$ admits a trading cycle in layer $i$.\qed
\end{proof}

\begin{myclaim}\label{claim:XP2}
Given a subset $K \subseteq A$ such that $|K| \geq 2$ and a layer $i \in [\ell]$, it is possible to determine whether the agents in $K$ admit a trading cycle in layer $i$ in time $\OO^{*}(2^{|K|})$.
\end{myclaim}
\begin{proof}
By \Clcref{claim:XP1}, we can construct the trading graph of layer $i$ with respect to $p$, $G_{i}$. Then, we run the algorithm of Bellman \cite{10.1145/321105.321111} on the subgraph $G[K]$ to check whether it admits a directed Hamiltonian cycle. If the answer is yes, this means that $K$ admits a trading cycle in layer $i$, thus we return $\true$. Otherwise, we return $\false$.\qed
\end{proof} 

Using these claims, we provide the algorithm for \vAAssign\ and \vDAssign. If $k=1$ or the instance is an instance of \vDAssign, then the first step of the algorithm is to verify whether for each $a \in A$, there exist at least $\alpha$ layers where it does not admit self loops. This can be easily done in polynomial time by constructing the trading graphs of all the layers, $G_{1},\ldots,G_{\ell}$, and checking for each of them whether $a$ is contained in a cycle with another item and no other vertex. If there exists some agent that appears in self loops in at least $\ell - \alpha + 1$ layers, we return $\false$. Otherwise, we need to verify that every subset of agents $K \subseteq A$ such that $|K| = k$ (or $|K| \leq k$ for \vDAssign) does not admit trading cycles in at least $\alpha$ layers. To verify this, for each such subset $K \subseteq A$, we check for each $i \in [\ell]$ whether the agents in $K$ admit a trading cycle in layer $i$ using \Clcref{claim:XP2} in time $\OO^{*}(2^{|K|}) = \OO^{*}(2^{k})$. If there exists a subset $K$ whose agents admit trading cycles in at least $\ell - \alpha + 1 $ layers, then by \Obcref{observation:notAOptimalA}, we return $\false$. Otherwise, we return $\true$. Notice that the number of subsets $K \subseteq A$ with $|K| \leq k$ is $\OO(n^{k})$. For each such subset, checking whether it admits trading cycles in more than $\ell - \alpha + 1$ layers takes time $\OO^{*}(\ell \cdot 2^{k})$. Thus, the total running time is $\OO^{*}(n^{k} \cdot \ell \cdot 2^{k}) = \OO^{*}(n^{\OO(k)})$.\qed
\end{proof}

\begin{corollary}
The problems \vAAssign\ and \vDAssign\ are \XP\ with respect to $k$.
\end{corollary}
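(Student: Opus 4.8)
The plan is to obtain the corollary as an immediate consequence of \Thcref{theorem:xpAlg}. First I would recall the definition of \XP\ given in Section~\ref{sec:preliminaries}: a parameterized problem is slice-wise polynomial with respect to a parameter $k$ if every instance $(I,k)$ can be solved in time $f(k)\cdot |I|^{g(k)}$ for some computable functions $f$ and $g$. \Thcref{theorem:xpAlg} supplies exactly such an algorithm for \vAAssign\ and \vDAssign, running in time $\OO^{*}(n^{\OO(k)})$, so all the algorithmic work has already been done and only a bookkeeping step remains.

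Second, I would unfold the $\OO^{*}$-notation and check that this running time fits the shape required by the \XP\ definition. By the convention fixed in the preliminaries, $\OO^{*}(n^{\OO(k)}) = n^{\OO(k)}\cdot n^{\OO(1)}$. Since $n \le |I|$ and the exponent $\OO(k)$ is a computable function of $k$ alone, the whole bound is at most $|I|^{g(k)}$ with $g(k) = \OO(k)$ (and one may take $f \equiv 1$). Hence both problems lie in \XP\ when parameterized by $k$.

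There is essentially no obstacle here, since the entire content of the statement is carried by \Thcref{theorem:xpAlg} and the corollary merely restates its running time in the vocabulary of parameterized complexity. The one point worth making explicit is that the exponent of $|I|$ depends on $k$ only and not on the input size; this is immediate from the $n^{\OO(k)}$ bound, in which the base is the instance size and the exponent is a function of the parameter alone. I would therefore keep the proof to a single short paragraph that invokes \Thcref{theorem:xpAlg} and reconciles its $\OO^{*}(n^{\OO(k)})$ running time with the $f(k)\cdot |I|^{g(k)}$ template of slice-wise polynomiality.
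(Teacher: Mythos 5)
Your proposal is correct and matches the paper exactly: the paper states this corollary immediately after \Thcref{theorem:xpAlg} with no separate proof, treating it as a direct consequence of the $\OO^{*}(n^{\OO(k)})$ running time, which is precisely the bookkeeping you carry out. Nothing further is needed.
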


\begingroup
\removelatexerror% Nullify \@latex@error
\begin{algorithm}[ht!]
\setstretch{1.2}
\SetKwInOut{Input}{input}\SetKwInOut{Output}{output}
\Input{An instance $(A,I,P_{1},\ldots,P_{\ell},k,\alpha,p)$ of \vAAssign\ (or \vDAssign)}
\Output{Is $p$ $(k,\alpha)$-\aoptimal? (or $(k,\alpha)$-\doptimal)}
\DontPrintSemicolon
\BlankLine
	run the kernel of size $\ell (\nalloc)^{2}$ from \Thcref{theorem:kernelAlloc} on the input instance\;
	\If {$\text{the kernel returns \no}$}{
		\KwRet{\no}\;
	}
	\Else{
		\ForEach{$i \in [\ell]$}{
		    $G_{i} \leftarrow \text{the trading graph of $P_{i}$}$\;
		}
		\ForEach{$i \in [\ell]$}{
			\ForEach{$\text{trading cycle $C$ in $P_{i}$ with $k$ agents}$ (or at most $k$ agents)}{
				$A_{C} \leftarrow \text{the agents appearing in $C$}$\;
				\ForEach{$j \in [\ell] \setminus \lbrace i \rbrace$}{
				$G_{j}[A_{C}] \leftarrow \text{sub-graph of $G_{i}$ on $A_{C}$ and their items}$\;	
				}
				\If{$\text{at least $(\ell - \alpha)$ $j \in [\ell] \setminus \lbrace i \rbrace$ satisfy that $G_{j}[A_{C}]$}$ contains a Hamiltonian cycle}{
				\KwRet{\no}				
				}
			}	
	}
	}
	\KwRet{\yes}\;
 \caption{Algorithm for \vAAssign\ and \vDAssign\ with running time $\OO^{*}(d^{k})$.}
 \label{algorithm:FPT_K_Plus_D}
\end{algorithm}
\endgroup

We now turn to consider $d$ as a parameter, in addition to $k$.

\begin{theorem}\label{theorem:fptDPlusK}
\vAAssign\ and \vDAssign\ are solvable in time $\OO^{*}(d^{k})$.
\end{theorem}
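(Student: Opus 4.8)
The plan is to realize \Alcref{algorithm:FPT_K_Plus_D}: first collapse the instance to its matched part, then exploit the small out-degree of trading graphs to list all ``short'' trading cycles, and finally invoke \Obcref{observation:notAOptimalA}. First I would run the kernelization algorithm of \Thcref{theorem:kernelAlloc}. In every case it deletes all agents assigned $b_{\emptyset}$ and all unallocated items, yielding an equivalent instance whose trading graphs contain no self loops; and when $k=1$ (for \vAAssign) or the input is an instance of \vDAssign, its preprocessing step additionally rejects precisely when some agent admits self loops in at least $\ell-\alpha+1$ layers, which by \Obcref{observation:notAOptimalA} is the only route to non-optimality through self loops (for \vAAssign\ with $k\geq 2$ self loops are irrelevant by Definition~\ref{def:AOptimality}). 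After this step it remains only to decide whether some subset $K$ of size exactly $k$ (for \vAAssign) or at most $k$ (for \vDAssign) admits trading cycles in at least $\ell-\alpha+1$ layers.

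The key observation, which is what produces the factor $d^{k}$, is that in the trading graph $G_{i}$ of any layer $i$ every agent vertex has out-degree at most $d-1$: an agent points only to the items it strictly prefers over its assigned item, all of which lie in its length-$\leq d$ preference list, and each such item points only to its unique owner. Hence a depth-bounded search from a fixed agent that follows at most $k-1$ agent-to-agent steps explores at most $d^{k-1}$ trading paths, so each layer contains at most $\OO(n\cdot d^{k})$ trading cycles through at most $k$ agents, and all of them can be listed explicitly in time $\OO^{*}(d^{k})$ per layer.

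I would then enumerate, for every layer $i$, all trading cycles with at most $k$ agents (exactly $k$ for \vAAssign), and for each such cycle $C$ insert the pair $(A_{C},i)$ into a dictionary keyed by the canonically represented agent set $A_{C}$. Since the enumeration is complete and $|A_{C}|\leq k$, a fixed set $K$ admits a trading cycle in layer $j$ (Definition~\ref{def:tradingCycleGroupInLayer}) if and only if $K$ occurs as a key under layer $j$; thus the number of layers in which $K$ admits a trading cycle equals the number of distinct layers stored under the key $K$. After all insertions I would scan the dictionary and return \no\ as soon as some key is associated with at least $\ell-\alpha+1$ layers, and \yes\ otherwise. Correctness is then immediate from \Obcref{observation:notAOptimalA}: a witnessing set $K$ forms a trading cycle in each of its $\ell-\alpha+1$ witness layers and is therefore recorded there, and conversely any key meeting the threshold is such a witness.

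The kernelization is polynomial, the enumeration over all layers yields $\OO(\ell\cdot n\cdot d^{k})$ cycles, and each dictionary operation together with the final aggregation costs only polynomial time, for a total of $\OO^{*}(d^{k})$. The step I expect to be the main obstacle is this last efficiency point rather than correctness: the naive way to test whether $A_{C}$ recurs in another layer --- re-deciding a Hamiltonian cycle in $G_{j}[A_{C}]$ via \Clcref{claim:XP2} --- would cost $\OO^{*}(2^{k})$ per layer and blow the bound up to $\OO^{*}((2d)^{k})$. Reading the recurrence of $A_{C}$ off the enumeration itself, instead of recomputing it, is exactly what keeps the running time at the claimed $\OO^{*}(d^{k})$.
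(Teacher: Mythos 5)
Your proposal is correct and follows the same skeleton as the paper's proof: run the kernel of \Thcref{theorem:kernelAlloc} to dispose of self loops and unmatched agents/items, use the out-degree bound $d-1$ on agent vertices in each trading graph to argue there are only $\OO(n\cdot d^{k})$ trading cycles on at most $k$ agents per layer, enumerate them, and conclude via \Obcref{observation:notAOptimalA}. The one place you genuinely diverge is the cross-layer check, and it is exactly the spot you flagged. The paper's \Alcref{algorithm:FPT_K_Plus_D} takes each enumerated cycle $C$ in layer $i$ and, for every other layer $j$, decides whether $G_{j}[A_{C}]$ has a Hamiltonian cycle using Bellman's $\OO^{*}(2^{|A_C|})$ algorithm (as in \Clcref{claim:XP2}), arriving at a stated bound of $\OO^{*}(\ell\cdot n\cdot d^{k}\cdot 2^{k})$, which the paper then writes as $\OO^{*}(d^{k})$; strictly speaking that product is $\OO^{*}((2d)^{k})$, so the paper's equality is generous (though the FPT classification in $k+d$ is of course unaffected). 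Your dictionary keyed by the canonical agent set, with layer counts read off the enumeration itself, sidesteps the Hamiltonicity recomputation entirely and genuinely attains $\OO^{*}(d^{k})$ up to polynomial factors; it also makes the correctness argument a one-liner from \Obcref{observation:notAOptimalA}. The only detail worth making explicit in a full write-up is that the depth-$k$ search should emit each cycle once (e.g.\ rooted at its minimum-index agent) so the per-layer enumeration stays within $\OO^{*}(d^{k})$; this is routine.
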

\begin{proof}
We provide an algorithm for \vAAssign\ and \vDAssign\ (formally describe in \Alcref{algorithm:FPT_K_Plus_D}). The algorithm is based on the observation that since each agent prefers at most $d - 1$ items over its own assigned item (or at most $d$ if it not allocated an item), the number of possible trading cycles with exactly $k$ agents (or at most $k$ agents if the input is an instance of \vDAssign) is at most $\OO(n \cdot d ^{k})$ (where $n$ is the number of agents).

The algorithm first performs the kernelization algorithm in \Thcref{theorem:kernelAlloc} in order to test whether each agent does not admit self loops in at least $\ell - \alpha + 1$ layers (when $k=1$), and to reduce the instance size to $\OO(\ell \cdot (\nalloc)^{2})$. Then, for each $i \in [\ell]$, the algorithm considers all the trading cycles in layer $i$ with $k$ (or at most $k$) agents. For each such trading cycle $C$, it checks in which other layers the agents in $C$ admit trading cycles. If it finds out that there are at least $\ell - \alpha + 1$ layers in which the agents admit trading cycles, then by \Obcref{observation:notAOptimalA}, it returns \no. Otherwise, the algorithm returns $\yes$. In order to check in which layers the agents in $C$ admit trading cycles, we use a similar technique as in \Thcref{theorem:xpAlg}, which uses the $\OO^{*}(2^{n})$-time algorithm for \hc\ on directed graphs by Bellman \cite{10.1145/321105.321111}. For a subset $A_{C}$ of agents that appear in a trading cycle $C$ in layer $i$, the algorithm constructs for each $j \in [\ell] \setminus \lbrace i \rbrace$ the graph $G_{j}[A_{C}]$, which is the sub-graph of the trading graph in layer $j$ containing only the agents from $A_{C}$ and their assigned items as well as the edges between them. Using the algorithm of Bellman \cite{10.1145/321105.321111}, and based on the previous observations, the algorithm can be implemented in time $\OO^{*}(\ell \cdot n \cdot d^{k} \cdot 2^{k}) = \OO^{*}(d^{k})$. Its correctness is due to \Obcref{observation:notAOptimalA}, and the observation that $G_{j}$ contains a trading cycle over the agents set $A_{C}$ if and only if $G_{j}[A_{C}]$ admits a Hamiltonian cycle.\qed
\end{proof}

\begin{corollary}
\vAAssign\ and \vDAssign\ are \FPT\ with respect to $k + d$.
\end{corollary}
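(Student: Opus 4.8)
The plan is to derive this immediately from \Thcref{theorem:fptDPlusK}, as the corollary is a purely definitional consequence of the running time already established there. Recall that a parameterized problem is \FPT\ with respect to a parameter $\kappa$ precisely when it admits an algorithm running in time $f(\kappa)\cdot|I|^{\OO(1)}$ for some computable function $f$ depending on $\kappa$ alone. By \Thcref{theorem:fptDPlusK}, both \vAAssign\ and \vDAssign\ are solvable in time $\OO^{*}(d^{k})$, which by the definition of the $\OO^{*}$-notation equals $d^{k}\cdot n^{\OO(1)}$.

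First I would observe that the factor $d^{k}$ depends only on the two quantities $d$ and $k$, and hence only on their sum $k+d$; concretely, $d^{k}\leq (k+d)^{k+d}$, so $d^{k}$ is bounded by a computable function of the combined parameter $k+d$. Setting $f(k+d)=(k+d)^{k+d}$ (or viewing $d^{k}$ directly as a computable function of the pair $(k,d)$) exhibits exactly the required form $f(k+d)\cdot n^{\OO(1)}$, and the claim follows. There is no genuine obstacle here: the only point to note is that neither $d$ nor $k$ appears in the exponent of the polynomial factor $n^{\OO(1)}$, so the whole parameter dependence is confined to the multiplicative term $d^{k}$, which is what fixed-parameter tractability with respect to $k+d$ demands.
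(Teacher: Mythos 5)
Your proposal is correct and matches the paper's (implicit) reasoning: the corollary is stated there without proof precisely because it follows immediately from the $\OO^{*}(d^{k})$ running time of \Thcref{theorem:fptDPlusK}, and your observation that $d^{k}\leq (k+d)^{k+d}$ is a computable function of the combined parameter, with the polynomial factor independent of $k$ and $d$, is exactly the justification needed.
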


\begin{theorem}\label{theorem:coWOH}
The problems \vAAssign, \vDAssign\ and \vCAssign\ are \coWOH\ with respect to $k+\ell$. 
\end{theorem}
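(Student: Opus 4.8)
The plan is to reduce from \MCIS, which is \WOH\ when parameterized by the number $\tildeK$ of colors, to the complement \coVAAssign; this shows \vAAssign\ to be \coWOH, and I intend the very same instance to settle \vDAssign\ and \vCAssign\ as well. Given a graph $G$ on vertex set $V$ together with a coloring $c : V \to [\tildeK]$, I would create one agent $a_v$ and one item $b_v$ per vertex $v$, and take the assignment $p$ with $p(a_v) = b_v$ for every $v$. Since every item is then allocated, $p$ admits no self loops, so the self-loop clauses in Definitions~\ref{def:AOptimality}--\ref{def:COptimality} are vacuous. I set $k = \tildeK$, $\alpha = 1$, and introduce one layer $L_{\{i,j\}}$ for every unordered pair of distinct colors $\{i,j\}$, so that $\ell = \binom{\tildeK}{2}$ and the combined parameter $k + \ell = \tildeK + \binom{\tildeK}{2}$ depends only on $\tildeK$, as a parameterized reduction demands.

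The heart of the construction is the design of each layer so that its supported trading cycles encode exactly the ``legal'' vertex selections. For each layer $L_{\{i,j\}}$ I would fix a cyclic order of the $\tildeK$ colors in which $i$ and $j$ are consecutive, and let the agent owning a color-$c$ vertex prefer, over its own item, the items of the vertices of the color that follows $c$ in this cyclic order, with a single exception: across the distinguished transition from color $i$ to color $j$, a color-$i$ agent $a_u$ may prefer a color-$j$ item $b_w$ only when $u$ and $w$ are \emph{non-adjacent} in $G$. Because an ownership edge keeps the color fixed while a preference edge advances it by exactly one position, any trading cycle must step through the colors $+1$ at a time; hence a trading cycle on a set $K$ of exactly $\tildeK$ agents is forced to visit each color precisely once, i.e.\ $K$ is \emph{multicolored}, and conversely a multicolored $K$ admits a trading cycle in $L_{\{i,j\}}$ if and only if its color-$i$ and color-$j$ representatives are non-adjacent (all other transitions being unconditionally available). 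Thus a size-$\tildeK$ set $K$ admits a trading cycle in \emph{all} $\binom{\tildeK}{2}$ layers if and only if $K$ is a multicolored independent set of $G$. Since $\ell - \alpha + 1 = \ell$, by \Obcref{observation:notAOptimalA} this is precisely the assertion that $p$ is \emph{not} $(k,\alpha)$-\aoptimal, yielding the desired equivalence between \yes-instances of \MCIS\ and \yes-instances of \coVAAssign.

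For \vDAssign\ the same instance transfers verbatim: a trading cycle on a set $K$ can exist only when $|K|$ is a multiple of $\tildeK$, so no subset of size \emph{smaller} than $\tildeK$ admits any trading cycle, and the ``size at most $k$'' quantifier of Definition~\ref{def:DOptimality} collapses to ``size exactly $\tildeK$''; again $p$ fails to be $(k,\alpha)$-\doptimal\ exactly when a multicolored independent set exists, matching \Obcref{observation:notAOptimalA}. For \vCAssign\ I would invoke \Obcref{obs:eqAoptAndCopt}: once one guarantees that no layer contains a trading cycle with more than $\tildeK$ agents, $(k,\alpha)$-\aoptimality\ and $(k,\alpha)$-\coptimality\ coincide on these instances, so the reduction carries over unchanged, with \Obcref{observation:notAOptimalC} supplying the \coNP\ witness on the complement side.

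The main obstacle I anticipate is exactly this last requirement — ruling out trading cycles longer than $\tildeK$. In the naive blow-up described above a cycle could wind around the cyclic color order twice, using two vertices per color and thus $2\tildeK$ agents; such a long cycle would let a \emph{non}-multicolored size-$\tildeK$ subset be contained in a trading cycle in every layer, producing a spurious \yes\ for \coVCAssign\ and breaking correctness. I would neutralize this by forcing each cyclic order to be traversed at most once, e.g.\ by routing the single ``wrap-around'' transition of each layer through a dedicated bottleneck agent that every trading cycle must pass through, so that a simple cycle cannot repeat a color (adjusting $k$ by the constant number of bottleneck agents and re-running the color-counting argument accordingly). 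Verifying that this bottleneck neither creates spurious cycles nor destroys the intended ones — and that it preserves the clean correspondence with multicolored independent sets across all three problems — is where the careful case analysis will be needed.
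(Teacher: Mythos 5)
Your reduction is, in essence, the paper's own: the same agents, items and identity assignment, the same choice $k=\tildeK$, $\alpha=1$, $\ell={\tildeK \choose 2}$, and the same per-colour-pair layer in which preferences advance cyclically through the colours with a non-adjacency filter on the single $u\to w$ transition; your arguments for \vAAssign\ and \vDAssign\ match the paper's (which handles the latter via \Obcref{lemma:aAndDEq}). The one substantive difference is your treatment of \vCAssign. The paper asserts, via \Clcref{WOHClaim}, that every trading cycle in every layer has exactly $\tildeK$ agents and then invokes \Obcref{obs:eqAoptAndCopt}; your worry that a simple cycle could wind around the colour order twice --- using $2\tildeK$ distinct agents, two per colour, with each of the two $u$--$w$ transitions independently non-adjacent --- is legitimate, since nothing in the construction excludes such cycles, and a size-$\tildeK$ subset sitting inside such a long cycle in every layer need not be a multicolored independent set. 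For \vAAssign\ and \vDAssign\ this is harmless, because only cycles on exactly (resp.\ at most) $k$ agents with no additional agents are relevant there. For \vCAssign\ some device such as your bottleneck agent (a single extra agent--item pair through which every wrap-around must pass, so a simple trading cycle can traverse the colour order only once, with $k$ adjusted accordingly) is genuinely needed to establish the hypothesis of \Obcref{obs:eqAoptAndCopt}; the fix is sound in outline, but you have only sketched it, so your \vCAssign\ case is not yet complete --- though on exactly this point your proposal is more careful than the paper's argument.
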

\begin{proof}
We provide a parameterized (and also polynomial) reduction from \MCIS\ (defined immediately) to \coVAAssign. We will later explain why it is also a reduction to \coVDAssign\ and \coVCAssign. The \MCIS\ problem was proved to be \WOH\ by Fellows et al.~\cite{fellows}. The input of \MCIS\ consists of an undirected graph $G = (V,E)$, an integer $2\leq \tildeK \leq |V|$, and a coloring $c : V \rightarrow [\tildeK]$ that colors the vertices in $G$ with $\tildeK$ colors. The task is to decide whether $G$ admits a {\em multicolored independent set} of size $\tildeK$, which is an independent set (i.e.~a vertex subset with pair-wise non-adjacent vertices) $V' \subseteq V$ that satisfies $\lbrace c(v') | v' \in V' \rbrace = [\tildeK]$ and $|V'| = \tildeK$.

Given an instance $(G = (V,E),\tildeK,c)$ of \MCIS, denote $V = \lbrace v_{1},\ldots,v_{n} \rbrace$. We construct an instance of \coVAAssign\ with $n$ agents, $n$ items, ${\tildeK \choose 2}$ layers, $\alpha = 1$, $k = \tildeK$, and an assignment $p$. We will prove that there exists a subset of agents of size $k$ that admits trading cycles in all the layers with respect to $p$ if and only if $G$ contains a multicolored independent set of size $\tildeK$. We first create an agent $a_{i}$ and an item $b_{i}$ for each vertex $v_{i} \in V$. Thus, the agent set and the item set of the constructed instance are $A = A_{n}$ and $I = I_{n}$, respectively. We also set $p = p_{n}$ (recall that $p(a_{i})=b_{i}$ for each $i \in [n]$). Intuitively, if there exists a multicolored independent set in $G$, then the agents corresponding to the vertices in the multicolored independent set will admit trading cycles in all the layers. Each layer corresponds to a pair of colors $\lbrace u , w \rbrace$, and ensures that (i) every trading cycle contains exactly $k$ agents that correspond to vertices colored by all the colors; and (ii) the vertices with the colors $u$ and $w$ whose agents appear in a trading cycle will not be adjacent in $G$.

For each $s \in [\tildeK]$, denote $A(s) = \lbrace a_{i} \in A | c(v_{i}) = s \rbrace$ and $I(s) = \lbrace b_{i} \in I | c(v_{i}) = s \rbrace$, namely, the agents and the items that correspond to vertices colored $s$ by the coloring $c$, respectively. Let $u,w \in [\tildeK]$ be two different colors such that $u < w$; denote the other colors by $s_{1},\ldots,s_{\tildeK - 2} \in [\tildeK] \setminus \lbrace u , w \rbrace$ such that $s_{1} < \ldots < s_{\tildeK - 2}$. The preference profile $P_{\lbrace u,w \rbrace}^{MCIS}$ is defined as follows.

\begin{framed}
\begin{itemize}
  \item $a_{r}\ :\ I(s_{j+1})\ >\ b_{r}\ \forall v_{r} \in V, j \in [\tildeK-3]$ s.t.~ $c(v_{r})=s_{j}$
  \item $a_{r}\ :\ I(u)\ >\ b_{r}\ \forall v_{r} \in V$ s.t.~ $c(v_{r})=s_{\tildeK-2}$
  \item $a_{r}\ :\ I(w) \cap \lbrace a_{i} | \lbrace  v_{i},v_{r} \rbrace \notin E \rbrace \ >\ b_{r}\ \forall v_{r} \in V$ s.t.~ $c(v_{r})=u$
  \item $a_{r}\ :\ I(s_{1})\ >\ b_{r}\ \forall v_{r} \in V$ s.t.~ $c(v_{r})=w$
\end{itemize}
\end{framed}

An example of a possible trading cycle in layer $P_{\lbrace u,w \rbrace}^{MCIS}$ is given in \Ficref{COWOHFigure3}.

Let us now claim the following, regarding the trading cycles in such layers.

\begin{figure*}[t]
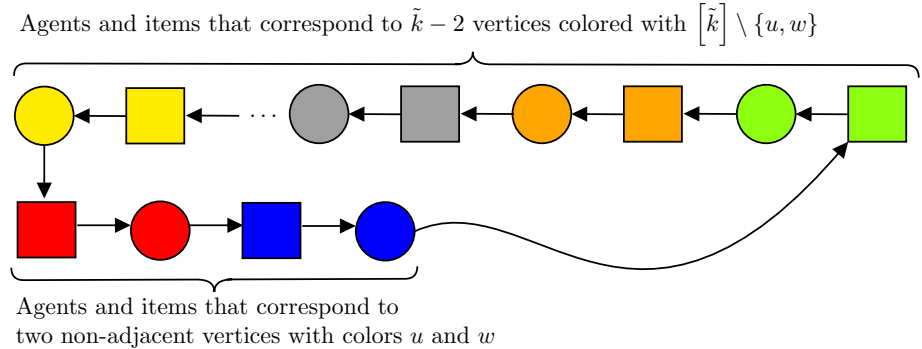

\center \scalebox{0.9}{
\tikzfig{Figures/COWOHFigure3}
}
\caption{The form of possible trading cycles in $P_{\lbrace u,w \rbrace}^{MCIS}$.}\label{COWOHFigure3}
\end{figure*}

\begin{myclaim}
\label{WOHClaim}
A sequence $(a_{i_{1}},b_{i_{1}},\ldots,a_{i_{\tildeK}}, b_{i_{\tildeK}})$ (up to cyclic shifts) is a trading cycle in $P_{\lbrace u,w \rbrace}^{MCIS}$ if and only if:\begin{enumerate}
    \item $c(v_{i_{j}}) = s_{j}$ for each $j \in [\tildeK-2]$;
    \item $c(v_{i_{\tildeK - 1}}) = u$;
    \item $c(v_{i_{\tildeK}}) = w$;
    \item $\lbrace v_{i_{\tildeK - 1 }} , v_{i_{\tildeK}} \rbrace \notin E$.
\end{enumerate}
\end{myclaim}
\begin{proof}
By the construction of $P_{\lbrace u,w \rbrace}^{MCIS}$, notice that for each $j \in [\tildeK - 3]$, each agent $a_{r}$ such that $c(v_{r}) = s_{j}$ prefers all the items whose corresponding vertices are colored with $s_{j+1}$. Moreover, each agent $a_{r}$ whose corresponding vertex $v_{r}$ is colored with $u$ prefers all the items whose corresponding vertex is colored with $w$ and not adjacent to $v_{r}$ in $G$; and every $a_{r}$ whose corresponding vertex is colored with $w$ prefers all the items whose vertices are colored with $s_{1}$. Thus, every trading cycle must begin with a sequence $(a_{i_{1}},b_{i_{1}},\ldots,a_{i_{\tildeK-2}},b_{i_{\tildeK-2}})$, such that $a_{i_{j}}$ corresponds to a vertex colored with $s_{j}$ for each $j \in [\tildeK - 2]$; and after that, it contains a sequence $(a_{i_{\tildeK - 1}}, b_{i_{\tildeK - 1}},a_{i_{\tildeK}}, b_{i_{\tildeK}})$ such that $a_{i_{\tildeK - 1}}$ and $a_{i_{\tildeK}}$ correspond to non-adjacent vertices colored with $u$ and $w$, respectively. An illustration of the form of possible trading cycles in $P_{\lbrace u,w \rbrace}^{MCIS}$ is given in \Ficref{COWOHFigure3}.\qed  
\end{proof}

We prove now that the resulting instance is a \yesinstance\ of \coVAAssign\ if and only if $G$ admits a multicolored independent set. By \Obcref{observation:notAOptimalA}, we need to show that $G$ admits a multicolored independent set of size $k$ if and only if there exists a subset of agents of size $k$ that admits trading cycles in all the layers (since $\alpha = 1$).

\smallskip
\noindent
$\Rightarrow$: Assume that $G$ admits a multicolored independent set $V' = \lbrace v_{i_{1}},\ldots,v_{i_{\tildeK}} \rbrace$. Denote $K = \lbrace a_{i_{1}}, \ldots, a_{i_{\tildeK}} \rbrace$ (namely, the agents which correspond to the vertices in $V'$). Let $u,w \in [\tildeK]$ be two different colors such that $u < w$, and assume that $s_{1},\ldots,s_{\tildeK - 2} \in [\tildeK] \setminus \lbrace u,w \rbrace$ such that $s_{1} < \ldots < s_{\tildeK - 2}$. Suppose w.l.o.g.~that $c(v_{j}) = s_{j}$ for each $j \in [\tildeK-2]$, $c(v_{\tildeK - 1}) = u$, and $c(v_{\tildeK}) = w$ (since the vertices in $V'$ are colored with all the colors). Since $V'$ is an independent set, we have that $v_{\tildeK - 1}$ and $v_{\tildeK}$ are not adjacent in $G$. Thus, by \Clcref{WOHClaim}, $K$ admits a trading cycle in $P_{\lbrace u,w \rbrace}^{MCIS}$. Therefore, the resulting instance is a \yesinstance.

\smallskip
\noindent
$\Leftarrow$: Assume that there exists a subset $K = \lbrace a_{i_{1}},\ldots,a_{i_{\tildeK}} \rbrace \subseteq A_{n}$ of size $k = \tildeK$, which admits trading cycles in all the layers of the constructed instance. By \Clcref{WOHClaim}, $K$ contains agents which correspond to vertices colored with all the colors. Moreover, for each $u , w \in [\tildeK]$ such that $u < w$, $K$ contain two agents, $a_{i_{r}}$ and $a_{i_{t}}$, such that $c(v_{i_{r}}) = u$, $c(v_{i_{t}}) = w$, and $\lbrace v_{i_{r}} , v_{i_{t}} \rbrace \notin E$. This implies that the vertices in $V' = \lbrace v_{i_{1}},\ldots,v_{i_{\tildeK}} \rbrace$ are pair-wisely non-adjacent, and colored with all the colors. Thus, $V'$ is a multicolored independent set in $G$ of size $\tildeK$.

Since the parameter $k + \ell$ of the constructed instance depends only on the parameter $\tildeK$, we have that $\vAAssign$ is \coWOH\ with respect to $k+\ell$. Notice that \Clcref{WOHClaim} implies that all the possible trading cycles in each layer necessarily have exactly $\tildeK$ agents. Thus, by \Obcref{obs:eqAoptAndCopt} and \Obcref{lemma:aAndDEq}, the resulting instance can represent an equivalent instance of \coVCAssign\ or \coVDAssign. So, the results apply to \vCAssign\ and \vDAssign\ as well.\qed
\end{proof}
%!TEX root = Main.tex
\section{\coNP-Hardness}
\label{sec:conphardness}
In this section, we prove that the problems \vAAssign, \vDAssign, and \vCAssign\ are \CONPH\ even when $k=n$, $\alpha = \ell = 1$ ($\ell = 2$ for \vDAssign), and $d = 3$. Before that, let us define two preference profiles that we will use in our next proofs. Let $G = (V,E)$ be a directed graph, and suppose that $V = \lbrace v_{1}, \ldots , v_{n} \rbrace$. For each vertex $v_{i}$ in $G$, we create one agent $a_{i}$ and one item $b_{i}$. We denote the agent set $A_{n} = \lbrace a_{1},\ldots,a_{n} \rbrace$, the item set $I_{n} = \lbrace b_{1},\ldots , b_{n} \rbrace$, and the assignment $p_{n}$ by $p_{n}(a_{i}) = b_{i}$ for each $i \in [n]$. We will construct the first preference profile, $P_{1}(G)$, over the agent set $A_{n}$ and the item set $I_{n}$ so that its trading graph with respect to $p_{n}$ will be derived from the graph $G$. Namely, if a subset of vertices $V' \subseteq V$ admits a directed cycle in $G$, then the corresponding agents and items of these vertices will admit a trading cycle in the trading graph of $P_{1}(G)$ with respect to $p_{n}$. Formally, we construct $P_{1}(G)$ as follows.
\begin{framed}[1\columnwidth]
\begin{itemize}
  \item $a_{i}\ $: $\lbrace b_{j} | (v_{i},v_{j}) \in E \rbrace\ \text{(in arbitrary order)}\ >\ b_{i}\ \ \forall i\in[n]$
\end{itemize}
\end{framed}

\begin{figure}[t]
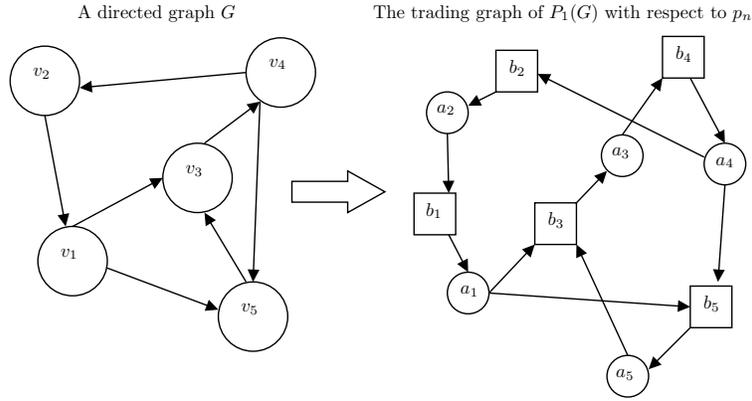

\center \scalebox{0.7}{
\tikzfig{Figures/P1G}
}
\caption{$G$ is a directed graph with $n = 5$ vertices that contains a Hamiltonian cycle $v_{1} \rightarrow v_{5} \rightarrow v_{3} \rightarrow v_{4} \rightarrow v_{2} \rightarrow v_{1}$. Each vertex $v_{i}$ has an agent $a_{i}$ and an item $b_{i}$ such that $b_{i}$ is assigned to $a_{i}$ by the assignment $p_{n}$. Observe that the trading graph of $P_{1}(G)$ with respect to $p_{n}$ contains the trading cycle $(a_{1},b_{1},a_{5},b_{5},a_{3},b_{3},a_{4},b_{4},a_{2},b_{2})$, which corresponds to the aforementioned Hamiltonian cycle.}\label{P1G}
\end{figure}

An example of the construction of $P_{1}(G)$ is given in \Ficref{P1G}. Let us now prove the following lemma.

\begin{lemma}\label{lemma:coNPHLemma1}
A directed graph $G$ contains a directed cycle $(v_{i_{1}},\ldots, v_{i_{t}})$ if and only if $P_{1}(G)$ contains the trading cycle $(a_{i_{1}},b_{i_{1}},\ldots,a_{i_{t}},b_{i_{t}})$ with respect to $p_{n}$.
\end{lemma}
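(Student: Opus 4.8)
The plan is to prove the biconditional by unfolding both definitions and observing that they coincide pair-by-pair along the cycle. First I would recall that, by the definition of a trading cycle with respect to an assignment, the sequence $(a_{i_1},b_{i_1},\ldots,a_{i_t},b_{i_t})$ is a trading cycle in $P_1(G)$ with respect to $p_n$ exactly when, for every $r \in [t]$ (with the cyclic convention $i_{t+1} = i_1$), two conditions hold: (i) $p_n(a_{i_r}) = b_{i_r}$, and (ii) $b_{i_r} <_{a_{i_r}} b_{i_{r+1}}$. Condition (i) is immediate, since $p_n(a_{i_r}) = b_{i_r}$ by the very definition of $p_n$; hence the sequence is a trading cycle if and only if condition (ii) holds for all $r \in [t]$.

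Next I would invoke the explicit description of $P_1(G)$ to characterize condition (ii). By construction, every agent $a_{i_r}$ ranks exactly the items in $\{b_j \mid (v_{i_r},v_j) \in E\}$ above its own item $b_{i_r}$, and $b_{i_r}$ above all remaining items. Consequently $b_{i_r} <_{a_{i_r}} b_{i_{r+1}}$ holds if and only if $b_{i_{r+1}} \in \{b_j \mid (v_{i_r},v_j) \in E\}$, that is, if and only if $(v_{i_r},v_{i_{r+1}}) \in E$. Therefore the trading cycle $(a_{i_1},b_{i_1},\ldots,a_{i_t},b_{i_t})$ exists with respect to $p_n$ if and only if $(v_{i_r},v_{i_{r+1}}) \in E$ for every $r \in [t]$.

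Finally I would match this against the definition of a directed cycle in $G$: the graph $G$ contains the directed cycle $(v_{i_1},\ldots,v_{i_t})$ precisely when $(v_{i_r},v_{i_{r+1}}) \in E$ for every $r \in [t]$, under the same cyclic convention $i_{t+1} = i_1$. Since this is identical to the condition derived in the previous step, the two statements are equivalent, which proves both directions of the lemma at once. I do not expect any genuine obstacle: the argument is a direct dictionary between the edge relation of $G$ and the ``preferred-over-its-own-item'' relation encoded by $P_1(G)$. The only point demanding a little care is the cyclic indexing — in particular, ensuring the wrap-around edge from $v_{i_t}$ to $v_{i_1}$ is matched by the preference $b_{i_t} <_{a_{i_t}} b_{i_1}$ — which I would handle by fixing the convention $i_{t+1}=i_1$ explicitly from the start.
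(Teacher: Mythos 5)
Your proposal is correct and follows essentially the same route as the paper's proof: both arguments establish the direct dictionary between the edge $(v_{i_r},v_{i_{r+1}})\in E$ and the preference $b_{i_r} <_{a_{i_r}} b_{i_{r+1}}$ given by the construction of $P_{1}(G)$, with $p_n(a_{i_r})=b_{i_r}$ holding by definition. The only difference is presentational — you prove both directions at once via a chain of equivalences, while the paper writes the two implications separately.
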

\begin{proof}
$\Rightarrow$: Suppose that $(v_{i_{1}},\ldots , v_{i_{t}})$ is a directed cycle in $G$. Note that there exists a directed edge from $v_{i_{j}}$ to $v_{i_{j+1}}$ for each $j \in [t-1]$ and from $v_{t}$ to $v_{1}$. Thus, by the construction of $P_{1}(G)$, $a_{i_{j}}$ prefers $b_{i_{j+1}}$ over its assigned item $p_{n}(a_{i_{j}}) = b_{i_{j}}$, and $a_{i_{t}}$ prefers $b_{i_{1}}$ over its assigned item $p_{n}(a_{i_{t}}) = b_{i_{t}}$. This yields the trading cycle $(a_{i_{1}},b_{i_{1}},\ldots,a_{i_{t}},b_{i_{t}})$.

\smallskip 
\noindent
$\Leftarrow$: Suppose that $(a_{i_{1}},b_{i_{1}}, \ldots ,a_{i_{t}},b_{i_{t}})$ is a trading cycle in $P_{1}(G)$ with respect to $p_{n}$. Then, for each $j \in [t-1]$, $a_{i_{j}}$ prefers $b_{i_{j+1}}$ over $b_{i_{j}}$, and $a_{i_{t}}$ prefers $b_{i_{1}}$ over $b_{i_{t}}$. By the construction of $P_{1}(G)$, $(v_{i_{j}},v_{i_{j+1}}) \in E$ for each $j \in [t-1]$, and $(v_{i_{t}},v_{i_{1}}) \in E$. This implies that $G$ contains the cycle $(v_{i_{1}},\ldots, v_{i_{t}})$.\qed
\end{proof}
 
Intuitively, if all the agents in $A_{n}$ admit a trading cycle in $P_{1}(G)$, then we can conclude that $G$ contains a directed cycle over all the vertices, namely, a Hamiltonian cycle, and vice versa.

\begin{corollary}
\label{lemma:coNPHLemma2}
The set $A_{n}$ admits a trading cycle in $P_{1}(G)$ with respect to $p_{n}$ if and only if $G$ contains a Hamiltonian Cycle. 
\end{corollary}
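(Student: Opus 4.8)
The plan is to deduce this corollary as an immediate consequence of \Lecref{lemma:coNPHLemma1}, specialized to the case where the agent set under consideration is the entire set $A_{n}$. The single observation driving the argument is that a directed cycle in $G$ that visits \emph{all} $n$ vertices is, by definition, a Hamiltonian cycle, and that the agent-level counterpart of such a cycle is precisely a trading cycle in which \emph{all} of $A_{n}$ participates (with no agent left out). Thus the corollary is really just the ``spanning'' instance of the lemma's bijection between directed cycles of $G$ and trading cycles of $P_{1}(G)$.

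For the forward direction, I would first unfold Definition \ref{def:tradingCycleGroupInLayer}: the statement that $A_{n}$ admits a trading cycle with respect to $p_{n}$ means that all $n$ agents appear together, with no additional agents, in a single trading cycle. Since $A_{n} = \lbrace a_{1},\ldots,a_{n} \rbrace$ is the full agent set, this trading cycle must have the form $(a_{i_{1}},b_{i_{1}},\ldots,a_{i_{n}},b_{i_{n}})$ where $\lbrace i_{1},\ldots,i_{n} \rbrace = [n]$. Applying \Lecref{lemma:coNPHLemma1} to this cycle then yields a directed cycle $(v_{i_{1}},\ldots,v_{i_{n}})$ in $G$ that spans every vertex, i.e.~a Hamiltonian cycle.

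For the reverse direction, I would simply run the same correspondence backwards. A Hamiltonian cycle in $G$ is a directed cycle $(v_{i_{1}},\ldots,v_{i_{n}})$ over all $n$ vertices; by \Lecref{lemma:coNPHLemma1} this produces the trading cycle $(a_{i_{1}},b_{i_{1}},\ldots,a_{i_{n}},b_{i_{n}})$ in $P_{1}(G)$ with respect to $p_{n}$, and this cycle contains exactly the agents of $A_{n}$. Hence $A_{n}$ admits a trading cycle.

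I do not anticipate any genuine obstacle here, as the entire content is already packaged in \Lecref{lemma:coNPHLemma1}. The only point demanding a moment of care is the interpretation of ``$A_{n}$ admits a trading cycle'': one must note that, because we require \emph{all} agents (and no others) to occur in the cycle, the corresponding directed cycle in $G$ is forced to be spanning, which upgrades the lemma's generic ``directed cycle'' to a Hamiltonian cycle. Everything else is a direct substitution.
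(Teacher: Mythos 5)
Your proposal is correct and follows essentially the same route as the paper's own proof: both directions are obtained by specializing Lemma~\ref{lemma:coNPHLemma1} to cycles that span all $n$ vertices (equivalently, trading cycles containing all of $A_{n}$). Your additional care in unfolding Definition~\ref{def:tradingCycleGroupInLayer} to justify why the trading cycle must be spanning is a harmless elaboration of what the paper leaves implicit.
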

\begin{proof}
$\Rightarrow$: Suppose that $A_{n}$ admits a trading cycle in $P_{1}(G)$ with respect to $p_{n}$. Assume that this trading cycle is $(a_{i_{1}},b_{i_{1}},\ldots,a_{i_{n}},b_{i_{n}})$. By \Lecref{lemma:coNPHLemma1}, $G$ contains the cycle $(v_{i_{1}},\ldots,v_{i_{n}})$, which is a Hamiltonian cycle.

\smallskip
\noindent
$\Leftarrow$: Suppose that $G$ contains a Hamiltonian cycle $(v_{i_{1}},\ldots,v_{i_{n}})$. Then, by \Lecref{lemma:coNPHLemma1}, $P_{1}(G)$ contains the trading cycle $(a_{i_{1}},a_{i_{1}},\ldots,a_{i_{n}},b_{i_{n}})$ with respect to $p$. Observe that this is a trading cycle over all the agents.\qed
\end{proof}

For every $n$, the second preference profile, $P_{2}(n)$, will be used to enforce the size of subsets of agents admitting trading cycles in all the layers to be equal to exactly $n$. This will help us to prove hardness results for the \vDAssign\ problem.

\begin{figure}[t]
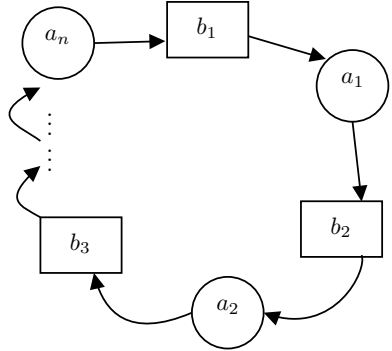

\center \scalebox{0.9}{
\tikzfig{Figures/P2G}
}
\caption{The trading graph of $P_{2}(n)$ is a single cycle consisting of all the agents and items from $A_{n} \cup I_{n}$. This is because in $P_{2}(n)$, for each $i \in [n-1]$, $a_{i}$ prefers only $b_{i+1}$ over $b_{i}$, and $a_{n}$ prefers only $b_{1}$ over $b_{n}$.}\label{P2G}
\end{figure}

Informally speaking, we will construct the second preference profile, $P_{2}(n)$, so that its trading graph with respect to $p_{n}$ will contain a single trading cycle consisting of all the agents and items from $A_{n} \cup I_{n}$ (see \Ficref{P2G}). It is formally constructed as follows.
\begin{framed}[0.45\columnwidth]
\begin{itemize}
  \item $a_{1}\ $: $\ b_{2}\ >\ b_{1}$
  \item $a_{2}\ $: $\ b_{3}\ >\ b_{2}$
  \item $\ldots$
  \item $a_{n-1}\ $: $\ b_{n}\ >\ b_{n-1}$
  \item $a_{n}\ $: $\ b_{1}\ >\ b_{n}$
\end{itemize}
\end{framed}

\begin{observation}
\label{obs:coNPHObs1}
The only trading cycle in $P_{2}(n)$ with respect to $p_{n}$ is $(a_{1},b_{1},\ldots,a_{n},b_{n})$.
\end{observation}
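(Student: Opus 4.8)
The plan is to argue directly from the definition of a trading cycle together with the very rigid structure of $P_{2}(n)$. First I would record that, with respect to $p_{n}$, each agent $a_{i}$ is assigned $b_{i}$ and prefers exactly one item over it, namely $b_{i+1}$ (indices taken cyclically, so $a_{n}$ prefers $b_{1}$). Consequently, in the trading graph of $P_{2}(n)$ with respect to $p_{n}$, each agent vertex $a_{i}$ has a unique outgoing two-step path $a_{i} \rightarrow b_{i+1} \rightarrow a_{i+1}$; that is, the only possible ``successor'' of $a_{i}$ inside any trading cycle is $a_{i+1 \bmod n}$.

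For existence, I would simply verify that the claimed sequence satisfies the trading-cycle condition: for every $r$ we have $p_{n}(a_{r}) = b_{r}$ and $b_{r} <_{a_{r}} b_{r+1 \bmod n}$, so $(a_{1},b_{1},\ldots,a_{n},b_{n})$ is indeed a trading cycle. For uniqueness, I would take an arbitrary trading cycle $(a_{i_{0}},b_{j_{0}},\ldots,a_{i_{k-1}},b_{j_{k-1}})$. Since $p_{n}(a_{i_{r}}) = b_{i_{r}}$, we must have $j_{r} = i_{r}$, and the requirement $b_{i_{r}} <_{a_{i_{r}}} b_{i_{r+1 \bmod k}}$ forces $i_{r+1 \bmod k} = i_{r} + 1 \pmod n$ by the observation above. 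Iterating this successor relation gives $i_{r} = i_{0} + r \pmod n$, and closing the cycle requires $i_{0} = i_{0} + k \pmod n$, hence $k \equiv 0 \pmod n$; since $1 \leq k \leq n$, this yields $k = n$ and $\lbrace i_{0},\ldots,i_{n-1} \rbrace = [n]$. Up to cyclic shift this is precisely $(a_{1},b_{1},\ldots,a_{n},b_{n})$.

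The argument is entirely mechanical and I do not expect a genuine obstacle; the only point requiring a little care is the modular bookkeeping in the uniqueness step, namely ensuring that the closure condition is applied correctly so that no shorter cycle can slip through. As an alternative I would note that $P_{2}(n)$ coincides with $P_{1}(C_{n})$ for the directed cycle $C_{n}$ on $v_{1},\ldots,v_{n}$, so that \Lecref{lemma:coNPHLemma1} applies directly: since the only directed cycle in $C_{n}$ is the one through all its vertices, the only trading cycle in $P_{2}(n)$ is the full one. I would present the direct argument and mention this shortcut as a remark.
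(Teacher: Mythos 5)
Your proposal is correct and rests on exactly the same observation as the paper's proof, namely that each $a_{i}$ prefers precisely one item ($b_{i+1}$, cyclically) over its assigned $b_{i}$, so every agent has a unique successor in the trading graph; the paper states just this one sentence and stops, while you additionally spell out the modular bookkeeping for uniqueness. The extra detail (and the remark that $P_{2}(n)=P_{1}(C_{n})$) is fine but not a different argument.
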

\begin{proof}
For each $i \in [n-1]$, the only item that $a_{i}$ prefers over its assigned item $p_{n}(a_{i}) = b_{i}$ is $b_{i+1}$; and $a_{n}$ only prefers $b_{1}$ over its assigned item $p_{n}(a_{n}) = b_{n}$.\qed
\end{proof}
 
Intuitively, we defined the second preference profile to ensure that if there exists a subset of agents that admits trading cycles in both $P_{1}(G)$ and $P_{2}(n)$ with respect to $p_{n}$, this subset must be equal to $A_{n}$. Thus, by \Cocref{lemma:coNPHLemma2}, we will conclude that $G$ contains a Hamiltonian cycle.

\begin{lemma}\label{lemma:coNPHLemma3}
A set $K \subseteq A_{n}$ admits trading cycles in both $P_{1}(G)$ and $P_{2}(n)$ with respect to $p_{n}$ if and only if $K = A_{n}$ and $G$ contains a Hamiltonian cycle. 
\end{lemma}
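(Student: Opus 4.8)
The plan is to prove both directions directly, leaning almost entirely on the three facts already established: \Lecref{lemma:coNPHLemma1}, \Cocref{lemma:coNPHLemma2}, and \Obcref{obs:coNPHObs1}. The crucial observation driving the whole argument is that $P_{2}(n)$ is extremely rigid: by \Obcref{obs:coNPHObs1}, its trading graph with respect to $p_{n}$ contains a \emph{single} trading cycle, namely $(a_{1},b_{1},\ldots,a_{n},b_{n})$, whose agent set is all of $A_{n}$. Since the definition of ``$K$ admits a trading cycle'' requires the agents of $K$ to appear \emph{together with no additional agents} in some trading cycle, the only subset that can admit a trading cycle in $P_{2}(n)$ is $A_{n}$ itself. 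This is the key step that pins down $K$, and everything else is routine.

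For the forward direction ($\Rightarrow$), I would assume $K \subseteq A_{n}$ admits trading cycles in both $P_{1}(G)$ and $P_{2}(n)$ with respect to $p_{n}$. From the admission of a trading cycle in $P_{2}(n)$, combined with \Obcref{obs:coNPHObs1}, I conclude immediately that $K = A_{n}$ (there is no other candidate). Then, using that $K = A_{n}$ admits a trading cycle in $P_{1}(G)$, I invoke \Cocref{lemma:coNPHLemma2} to conclude that $G$ contains a Hamiltonian cycle. This yields both claimed consequences.

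For the reverse direction ($\Leftarrow$), I would assume $K = A_{n}$ and that $G$ contains a Hamiltonian cycle. Applying \Cocref{lemma:coNPHLemma2} in the other direction shows that $A_{n} = K$ admits a trading cycle in $P_{1}(G)$ with respect to $p_{n}$. Separately, \Obcref{obs:coNPHObs1} tells us that $A_{n} = K$ admits the trading cycle $(a_{1},b_{1},\ldots,a_{n},b_{n})$ in $P_{2}(n)$. Hence $K$ admits trading cycles in both profiles, as required.

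I do not anticipate a genuine obstacle here, since the lemma is essentially the conjunction of the two previously proved characterizations together with the rigidity of $P_{2}(n)$. The only point that warrants care is the exactness requirement in the definition of \Dfcref{def:tradingCycleGroupInLayer}: I must use that a set \emph{admits} a trading cycle only when its members coincide \emph{exactly} with the agents of some trading cycle, so that \Obcref{obs:coNPHObs1} truly forces $K = A_{n}$ rather than merely $K \subseteq A_{n}$. Once this subtlety is respected, the proof is immediate.
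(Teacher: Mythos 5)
Your proposal is correct and follows essentially the same route as the paper's own proof: both directions rest on the rigidity of $P_{2}(n)$ (Observation~\ref{obs:coNPHObs1} forcing $K = A_{n}$) combined with Corollary~\ref{lemma:coNPHLemma2} to translate between a trading cycle over $A_{n}$ in $P_{1}(G)$ and a Hamiltonian cycle in $G$. Your explicit remark that ``admits a trading cycle'' requires the agent set to coincide exactly with the agents of the cycle is precisely the point that makes the argument go through, and it matches the paper's (more terse) reasoning.
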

\begin{proof}

$\Rightarrow$: Suppose that $K$ admits trading cycles in both $P_{1}(G)$ and $P_{2}(n)$ with respect to $p_{n}$. By \Obcref{obs:coNPHObs1}, we have that $K = A_{n}$. Thus, by \Cocref{lemma:coNPHLemma2}, $G$ contains a Hamiltonian cycle.

\smallskip
\noindent
$\Leftarrow$: Suppose that $K = A_{n}$ and that $G$ contains a Hamiltonian cycle. Then, by \Cocref{lemma:coNPHLemma2}, $K$ admits a trading cycle in $P_{1}(G)$, and by \Obcref{obs:coNPHObs1}, $K$ admits a trading cycle in $P_{2}(n)$.\qed
\end{proof}

We now use these results to prove that \vAAssign, \vDAssign, and \vCAssign\ are \paraCoH\ for the parameter $(n - k) + \ell + d$. We will rely on the result of Plesńik \cite{PLESNIK1979199}, who proved that \hc\ is \NPH\ on directed graphs with maximum degree $3$. In particular, the degree bound will help us bound the maximum length of a preference list in the preference profile $P_{1}(G)$.

\begin{theorem}\label{theorem:coNPH}
The problems \vAAssign, \vDAssign\ and \vCAssign\ are \CONPH\ when $k=n$, $\ell = 1$ ($\ell = 2$ for \vDAssign), $\alpha=1$, and $d = 3$.
\end{theorem}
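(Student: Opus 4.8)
The plan is to give a polynomial reduction from \hc\ on directed graphs of maximum degree $3$ (shown \NPH\ by Plesník~\cite{PLESNIK1979199}) to the complements \coVAAssign\ and \coVCAssign, which establishes that these complements are \NPH\ and hence that \vAAssign\ and \vCAssign\ are \CONPH; afterwards I would augment the construction with a single additional layer to obtain the same conclusion for \vDAssign. In all three cases the instance is built on the agent set $A_{n}$, the item set $I_{n}$, and the assignment $p_{n}$ introduced in this section, so that $p_{n}$ allocates every item and therefore \emph{no agent ever admits a self loop}; this makes the self-loop clauses of all three optimality notions vacuous and lets every argument proceed purely through trading cycles.

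For \vAAssign\ and \vCAssign\ I would use the single-layer instance $(A_{n},I_{n},P_{1}(G),\alpha=1,k=n,p_{n})$. Since $\ell=\alpha=1$ we have $\ell-\alpha+1=1$, and since $k=n$ the only subset of size $k$ is $A_{n}$ itself. By \Obcref{observation:notAOptimalA}, $p_{n}$ fails to be $(n,1)$-\aoptimal\ precisely when $A_{n}$ admits a trading cycle in the unique layer $P_{1}(G)$, which by \Cocref{lemma:coNPHLemma2} happens if and only if $G$ contains a Hamiltonian cycle. This yields the equivalence for \coVAAssign. To obtain \coVCAssign\ at no extra cost, I invoke \Cocref{col:nEquiv} (equivalently \Obcref{obs:eqAoptAndCopt}, noting that every trading cycle contains at most $n$ agents): for $k=n$, $(n,1)$-\aoptimality\ and $(n,1)$-\coptimality\ coincide, so the very same instance certifies that \vCAssign\ is \CONPH.

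To capture \vDAssign\ I would keep $P_{1}(G)$ as the first layer and add $P_{2}(n)$ as a second layer, retaining $\alpha=1$ and $k=n$, so now $\ell=2$ and $\ell-\alpha+1=2$. Because $p_{n}$ assigns every item, no self loops occur in either layer. Using the characterization of non-\doptimality\ (the analogue of \Obcref{observation:notAOptimalA} for subsets of size at most $k$), $p_{n}$ fails to be $(n,1)$-\doptimal\ iff some $K\subseteq A_{n}$ with $|K|\le n$ admits trading cycles in \emph{both} layers; by \Obcref{obs:coNPHObs1} and \Lecref{lemma:coNPHLemma3} this forces $K=A_{n}$ and is equivalent to $G$ being Hamiltonian. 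Hence the two-layer instance is a \yesinstance\ of \coVDAssign\ exactly when $G$ has a Hamiltonian cycle, and the role of $P_{2}(n)$ is precisely to guarantee that no \emph{smaller} subset can create a spurious two-layer conflict.

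The step needing the most care is the degree bookkeeping that yields $d=3$. Each agent $a_{i}$'s list in $P_{1}(G)$ is its out-neighbours' items followed by $b_{i}$, so its length equals $\mathrm{outdeg}(v_{i})+1$; Plesník's degree bound forces $\mathrm{outdeg}(v_{i})\le 2$ on all instances that matter (a vertex of out-degree $\ge 3$ would have in-degree $0$, making $G$ trivially non-Hamiltonian, in which case I output a fixed \no-instance), giving $d=3$, while $P_{2}(n)$ only uses lists of length $2$, so the bound $d=3$ is preserved. All remaining parameters are constants: $\alpha=1$, $\ell\in\{1,2\}$, and $n-k=0$. As the reduction is polynomial and these values are fixed, \coVAAssign, \coVCAssign, and \coVDAssign\ are \NPH\ for constant $(n-k)+\ell+d$, so \vAAssign, \vCAssign, and \vDAssign\ are \CONPH\ and in fact \paraCoH\ with respect to $(n-k)+\ell+d$. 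The main obstacle is reconciling the strict list-length bound $d=3$ with the hardness of \hc, together with verifying that adjoining $P_{2}(n)$ introduces no unintended short trading cycles --- which is exactly what \Obcref{obs:coNPHObs1} and \Lecref{lemma:coNPHLemma3} guarantee.
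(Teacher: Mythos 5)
Your proposal is correct and follows essentially the same route as the paper: the reduction from \hc\ on degree-$3$ digraphs via $P_{1}(G)$ with $k=n$, the transfer to \vCAssign\ through \Cocref{col:nEquiv}, and the extra $P_{2}(n)$ layer to force trading sets of size exactly $n$ for \vDAssign. Your explicit handling of a possible out-degree-$3$, in-degree-$0$ vertex is a small extra care the paper glosses over, but it does not change the argument.
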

\begin{proof}
We provide a polynomial reduction from \hc\ on directed graphs with maximum degree 3 (proved to be \NPH\ by Plesńik \cite{PLESNIK1979199}) to the complement of  \vAAssign\ (\coVAAssign) where $k=n$, $\alpha = \ell = 1$, and $d = 3$. We then explain why the reduction yields the same result for $\vCAssign$. Afterwards, we will extend the proof to have the same result also for $\vDAssign$.

Let $G = (V,E)$ be a directed graph of maximum degree 3 with $n$ vertices. We construct an instance of \coVAAssign\ consisting of one layer, $n = |V|$ agents and $n$ items as well, which is a \yes-instance\ if and only if $G$ contains a Hamiltonian cycle. By \Obcref{observation:notAOptimalA}, in order to prove that the resulting instance is a \yes-instance, we will show that the set of all agents admits a trading cycle in the only layer. Assume that $V = \lbrace v_{1},\ldots,v_{n} \rbrace$ is the vertex set of $G$. We construct the agent set $A_{n}$, the item set $I_{n}$ and the assignment $p_{n} : A_{n} \rightarrow I_{n}$ (as defined at the beginning of the section). The single layer contains the preference profile $P_{1}(G)$. The resulting instance of \coVAAssign\ is $D = (A_{n},I_{n},P_{1}(G),\alpha = 1, k = n , p_{n})$ which can be clearly constructed in polynomial time. Since the maximum degree of $G$ is 3, by the definition of $P_{1}(G)$, we have that the maximum length of a preference list in the resulting instance is $d = 3$. We now prove that $G$ contains a Hamiltonian cycle if and only if $D$ is a \yes-instance\ of  \coVAAssign.

\smallskip
\noindent
$\Rightarrow$: Assume that $G$ contains a Hamiltonian cycle. By \Cocref{lemma:coNPHLemma2}, the set $A_{n}$ admits a trading cycle in $P_{1}(G)$. Then, by \Obcref{observation:notAOptimalA}, $D$ is a \yes-instance.

\smallskip
\noindent
$\Leftarrow$: Assume that $D$ is a \yes-instance. Then, by \Obcref{observation:notAOptimalA}, $A_{n}$ admits a trading cycle in the only layer $P_{1}(G)$. By \Cocref{lemma:coNPHLemma2}, $G$ contains a Hamiltonian cycle.

By \Cocref{col:nEquiv}, \vCAssign\ is equivalent to \vAAssign\ when $k=n$. Therefore, the result holds for \vCAssign\ as well.

\smallskip
\noindent\textbf{\vDAssign.} We extend the proof to the problem \vDAssign. We add another layer to the constructed instance containing the preference profile $P_{2}(n)$. By \Obcref{obs:coNPHObs1}, since the only trading cycle in $P_{2}(n)$ contains exactly $n$ agents (by \Obcref{obs:coNPHObs1}), we conclude that $p_{n}$ is $(k,\alpha)$-\aoptimal\ for each $k < n$ (since each subset of less than $n$ agents does not admit trading cycles in $P_{2}(n)$). So, by \Obcref{lemma:aAndDEq}, $p_{n}$ is $(n,\alpha)$-\aoptimal\ if and only if it is $(n,\alpha)$-\doptimal. Thus, the reduction can be equivalently a polynomial reduction to \vDAssign. \qed
\end{proof}

Since $\ell+d+(n-k)$ is bounded by a constant in the resulting instances, we conclude the following.
\begin{corollary}\label{col:ellPlusKC}
The problems \vAAssign, \vDAssign\ and \vCAssign\ are \paraCoH\ with respect to the parameter $\ell+ d + (n-k)$.
\end{corollary}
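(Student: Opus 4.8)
The plan is to derive the corollary directly from \Thcref{theorem:coNPH} by inspecting the parameter values in the hard instances that its reduction produces. Recall from the Parameterized Complexity preliminaries that a problem is \paraCoH\ with respect to a parameter $\kappa$ as soon as one exhibits a single fixed value of $\kappa$ for which the problem remains \CONPH. Hence it suffices to verify that the reductions underlying \Thcref{theorem:coNPH} keep the combined parameter $\ell + d + (n-k)$ bounded by an absolute constant.

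First I would record the parameter values guaranteed by \Thcref{theorem:coNPH}. For \vAAssign\ and \vCAssign\ the constructed instance has $k = n$, $\ell = 1$, $\alpha = 1$, and $d = 3$; for \vDAssign\ one additional layer carrying the profile $P_{2}(n)$ is appended, so $\ell = 2$ while $k = n$ and $d = 3$ are preserved. In either case $n - k = 0$, and therefore $\ell + d + (n - k) = \ell + 3 \le 2 + 3 = 5$.

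Next I would simply invoke the definition of \paraCoH. Since \Thcref{theorem:coNPH} shows that each of \vAAssign, \vDAssign, and \vCAssign\ is \CONPH\ on a family of instances on which $\ell + d + (n-k)$ never exceeds the constant $5$, each of the three problems is \CONPH\ for a fixed value of $\ell + d + (n-k)$. By the definition recalled above, each problem is therefore \paraCoH\ with respect to $\ell + d + (n-k)$, which completes the argument.

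There is essentially no technical obstacle here, as the entire content already resides in \Thcref{theorem:coNPH}; the only point requiring care is that all three summands of the parameter stay bounded \emph{simultaneously}. Accordingly, the one fact worth stating explicitly is that the \vDAssign\ extension attains its increase of $\ell$ to $2$ without disturbing $k = n$ or $d = 3$, so that the uniform constant bound on $\ell + d + (n-k)$ holds across all three problems at once.
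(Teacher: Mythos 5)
Your proposal is correct and follows exactly the paper's own (one-line) derivation: the instances produced by the reduction of \Thcref{theorem:coNPH} have $k=n$, $d=3$, and $\ell\le 2$, so $\ell+d+(n-k)$ is bounded by the constant $5$, and \paraCoH ness follows from the definition. Your explicit check that the \vDAssign\ extension preserves $k=n$ and $d=3$ while only raising $\ell$ to $2$ is a welcome, if minor, elaboration of what the paper leaves implicit.
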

%!TEX root = Main.tex
\section{Non-Existence of Polynomial Kernels}
\label{sec:part2Kernelization}
In this section, We prove that the three problems are unlikely to admit polynomial kernels with respect to $n + m + \alpha$ and $n + m + (\ell - \alpha)$. So, considering $\alpha$ or $(\ell - \alpha)$ rather than $\ell$, even while considering the larger parameter $n + m$ rather than $\nalloc$, yields negative results. So, our classification is complete in this sense. To prove this, we will rely on the results in Section \ref{sec:conphardness} to provide an OR-cross-composition and an AND-cross-composition from $\hc$ on directed graphs with maximum degree 3 to the complements of the problems. We first define the polynomial equivalence relation $\RR$ that we will use. We say that two directed graphs are equivalent with respect to $\RR$ if they share the same number of vertices. This is known to be a polynomial equivalence relation, yet we present the proof for completeness.

\begin{proposition}[Folklore]\label{lemma:polyEqRelation}
$\RR$ is a polynomial equivalence relation.
\end{proposition}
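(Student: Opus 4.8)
The plan is to verify the two defining conditions of a polynomial equivalence relation directly, after first confirming that $\RR$ is an equivalence relation at all. To make this precise, I would fix a reasonable encoding of directed graphs as strings over $\Sigma$ and adopt the convention that all strings failing to encode a valid directed graph are collected into one separate class; two valid encodings are then declared $\RR$-related exactly when their vertex sets have equal cardinality. Reflexivity, symmetry, and transitivity are then immediate, since "having the same number of vertices" is pulled back from equality of natural numbers via the map sending a string to the number of vertices of the graph it encodes (with malformed strings all sent to a common sentinel value), and equality is an equivalence relation.

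For the first condition, I would exhibit a polynomial-time decision procedure. Given $x, y \in \Sigma^{*}$, parse each in time polynomial in its length to test whether it is a valid graph encoding; if the two strings disagree on validity they are not related, if both are invalid they lie in the common malformed class and are related, and if both are valid we extract and compare their vertex counts. Parsing, counting vertices, and comparing two integers each run in time polynomial in $|x| + |y|$, so the whole test does too.

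For the second condition, the key observation is that any string of length at most $n$ encodes a directed graph on at most $n$ vertices, since naming $N$ distinct vertices requires at least $N$ symbols. Hence the vertex count of any valid encoding in $\Sigma^{\leq n}$ lies in $\{0, 1, \ldots, n\}$, giving at most $n+1$ classes among valid encodings; adding the single class of malformed strings yields at most $n+2$ classes in total, which is a polynomial bound in $n$.

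I do not anticipate a genuine obstacle here, as this is folklore. The only mild subtlety is the bookkeeping of malformed strings, which must be gathered into a single class so that $\RR$ is a total equivalence relation on all of $\Sigma^{*}$ while keeping the number of classes polynomial; the sentinel-value convention introduced above handles this uniformly.
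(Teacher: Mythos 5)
Your proposal is correct and follows essentially the same route as the paper's proof: check polynomial-time decidability by comparing vertex counts, and bound the number of equivalence classes on $\Sigma^{\leq n}$ by observing that a graph encoded in at most $n$ symbols has at most $n$ vertices. The only difference is your explicit handling of malformed strings via a sentinel class, which the paper leaves implicit; this is harmless bookkeeping and does not change the argument.
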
 
\begin{proof}
It is clear that $\RR$ is an equivalence relation. Moreover, if we restrict $\RR$ on encodings of graphs of length at most $n$, we have that $\RR$ admits at most $n$ equivalence classes since every such graph can contain at most $n$ vertices. Given two graphs  $G_{1}$ and $G_{2}$, it can clearly be verified in time $\OO(|G_{1}|+|G_{2}|)$ whether the graphs are equivalent with respect to $\RR$ by checking if they have the same number of vertices.\qed
\end{proof}

\begin{theorem}\label{theorem:cc1}
The problems \vAAssign, \vDAssign, and \vCAssign\ do not admit polynomial kernels with respect to $n + m + \alpha$ and $n + m + (\ell - \alpha)$, unless \NP$\subseteq $\coNPpoly. 
\end{theorem}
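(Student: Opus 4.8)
The plan is to invoke the cross-composition framework (\Prcref{prop:noKern}) with the source problem \hc\ on directed graphs of maximum degree $3$, which is \NPH\ \cite{PLESNIK1979199}, composing into the \emph{complements} \coVAAssign, \coVDAssign, and \coVCAssign; since a parameterized problem admits a polynomial kernel iff its complement does, ruling out polynomial compressions for the complements rules out polynomial kernels for the three problems themselves (unless \NP$\,\subseteq\,$\coNPpoly). Throughout I use the polynomial equivalence relation $\RR$ of \Prcref{lemma:polyEqRelation}, so the inputs $G_1,\dots,G_t$ share a common vertex count $n$, and I reuse the profile $P_1(G)$ and the gadget $P_2(n)$ from Section~\ref{sec:conphardness}. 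The governing principle (\Obcref{observation:notAOptimalA}, \Obcref{observation:notAOptimalC}) is that an instance is a \yes-instance\ of a complement problem exactly when some admissible subset admits trading cycles in $\ell-\alpha+1$ layers; hence making $\alpha$ small (threshold $\ell-\alpha+1$ large) gives AND-type behaviour, and making $\ell-\alpha$ small (threshold small) gives OR-type behaviour. For each of the two parameters I supply whichever composition keeps that parameter bounded.

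For \vAAssign\ and \vCAssign\ both compositions are immediate. On the shared ground set $A_n,I_n,p_n$ I create $t$ layers $P_1(G_1),\dots,P_1(G_t)$ and set $k=n$. By \Cocref{lemma:coNPHLemma2}, $A_n$ admits a trading cycle in layer $i$ iff $G_i$ is Hamiltonian, and because the optimality condition ranges over subsets of size \emph{exactly} $k=n$, the only candidate subset is $A_n$ itself. Taking $\alpha=\ell=t$ (threshold $1$) makes the output a \yes-instance\ of \coVAAssign\ iff \emph{some} $G_i$ is Hamiltonian --- an OR-cross-composition with parameter $n+m+(\ell-\alpha)=2n$; taking $\alpha=1$ (threshold $t$) makes it a \yes-instance\ iff \emph{all} $G_i$ are Hamiltonian --- an AND-cross-composition with parameter $n+m+\alpha=2n+1$. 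Since no trading cycle has more than $n$ agents, \Obcref{obs:eqAoptAndCopt} and \Cocref{col:nEquiv} give the identical statements for \vCAssign.

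For \vDAssign\ the condition ranges over subsets of size \emph{at most} $k$, so I append the layer $P_2(n)$, whose only trading cycle is $A_n$ (\Obcref{obs:coNPHObs1}), to pin the trading subset to $A_n$. The AND-side is then clean: with $\ell=t+1$, $\alpha=1$, and threshold $\ell-\alpha+1=t+1$ (all layers), any subset reaching the threshold must trade in $P_2(n)$ and hence equal $A_n$; so the output is a \yes-instance\ of \coVDAssign\ iff every $G_i$ is Hamiltonian, an AND-cross-composition with parameter $n+m+\alpha=2n+1$.

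The main obstacle is the OR-composition for \vDAssign\ under $n+m+(\ell-\alpha)$. Here $\ell-\alpha$ must stay small, forcing a small threshold, so I can no longer use ``all layers'' to force membership in $P_2(n)$ (equivalently, to force the trading subset to be $A_n$). Consequently a subset of size $<n$ realizing one short directed cycle common to several inputs is counted in many $P_1$-layers and produces a false positive; replicating $P_2(n)$ does not help, since suppressing such a subset would require at least $t$ copies and blow up $\ell-\alpha$. One cannot instead make a single layer certify Hamiltonicity, as cycle detection is polynomial and this would place \hc\ in \POLY. My plan to break the correlation between distinct instance-layers is to attach $\OO(\log t)$ auxiliary ``selector'' agents and tag the layer of $G_i$ by the binary encoding $\mathrm{bin}(i)$, co-designing the tags with the size-forcing gadget so that \emph{every} trading cycle in that layer --- short ones included --- must traverse exactly the selector agents indexed by $\mathrm{bin}(i)$; then no subset trades in two distinct instance-layers, a constant threshold suffices, and $n+m+(\ell-\alpha)$ stays $\OO(n+\log t)$. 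Forcing even the short cycles to carry the instance tag is the delicate point, and is where the construction must work hardest.
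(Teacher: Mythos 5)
Your compositions for \vAAssign\ and \vCAssign\ (both parameters) and the AND-cross-composition for \vDAssign\ coincide exactly with the paper's: the same $t$ layers $P_{1}(G_{1}),\ldots,P_{1}(G_{t})$ over $A_{n},I_{n},p_{n}$ with $k=n$, the same choices $\alpha=\ell=t$ for the OR-side versus $\alpha=1$ for the AND-side, the same appeal to \Cocref{col:nEquiv} to transfer the result to \vCAssign, and the same appended $P_{2}(n)$ layer for the \vDAssign\ AND-side. That part is complete and correct.

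The gap is the OR-cross-composition for \vDAssign\ under $n+m+(\ell-\alpha)$: you correctly isolate it as the hard case but leave it as a plan whose crucial step is unresolved, and the specific plan you sketch --- making \emph{every} trading cycle in the layer of $G_{i}$, short ones included, traverse exactly the selector agents indexed by $\mathrm{bin}(i)$ --- does not look realizable. A trading cycle visits each agent at most once, and the short directed cycles of $G_{i}$ need not pass through any common vertex, so they cannot all be rerouted through one chain of $\floor{\log t}+1$ selector agents without destroying the correspondence between cycles of $G_{i}$ and trading cycles. The paper does something weaker and structurally different: it builds $2t$ layers, pairing each $Q_{2i-1}$ (a modification of $P_{1}(G_{i})$) with a companion $Q_{2i}$ (a modification of $P_{2}(n)$), both carrying the same $\floor{\log t}+1$ selector agents $\cc_{j}^{i}$ chosen by the bits of $i$ (the unused complements $\overline{\cc_{j}^{i}}$ are made inert by listing only their own assigned items). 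Only the cycles through $a_{n}$ are routed via the selector chain ($a_{n}$ points solely to $\dd_{1}^{i}$, and the last selector points back to the out-neighbours of $v_{n}$); then $k=n+\floor{\log t}+1$ and $\alpha=2t-1$, so the threshold is two layers, and since the unique trading cycle in $Q_{2i}$ is the full tagged set $A_{n}\cup\lbrace\cc_{j}^{i}\mid j\in[\floor{\log t}+1]\rbrace$, a subset reaching the threshold is pinned (via Claims~\ref{claim:or-cc1} and~\ref{claim:or-cc2}) to that set for a single $i$, which forces $G_{i}$ to be Hamiltonian by \Clcref{claim:or-cc-claim1}. I will add that the worry driving your stronger requirement is exactly the right one to press on: a small agent set realizing a short cycle present in both $G_{i}$ and $G_{j}$ would admit trading cycles in the two layers $Q_{2i-1}$ and $Q_{2j-1}$ even in the paper's construction, and this is precisely what its Claim~\ref{claim:or-cc2} must exclude; if you complete your write-up along the paper's lines, this is the point that needs explicit verification or a strengthened gadget, rather than the all-cycles-tagged route you propose.
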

\begin{proof}
We provide a cross-composition from \hc\ on directed graphs with maximum degree 3 to to complement of \vAAssign\  (\coVAAssign). So, this will prove that \vAAssign\ does not admit a polynomial kernel with respect to these parameters. It can be easily shown that a problem admits a polynomial kernel if and only if its complement admits a polynomial kernel (the same kernel can be used for both problems). So, for the parameter $n + m + \alpha$, we will treat the cross-composition as an AND-cross-composition, and for the parameter $n + m + (\ell - \alpha)$, we will treat it as an OR-cross-composition. Similarly to the proof of \Thcref{theorem:coNPH}, we will explain why they are also cross-compositions to \coVCAssign. After that, we will explain how to extend the cross-compositions to derive the same results on \vDAssign.

\begin{figure}[t]
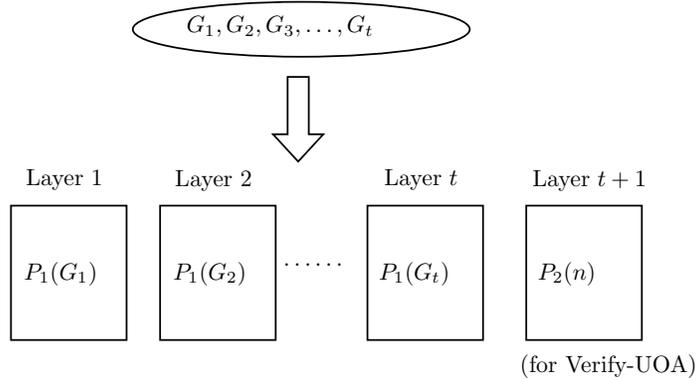

\center \scalebox{0.89}{\tikzfig{Figures/CC1Figure1}}
\caption{The cross-compositions construct an instance containing $P_{1}(G_{i})$ in layer $i$, for each $i \in [t]$. The AND-cross-composition for \vDAssign\ constructs an additional layer containing $P_{2}(n)$.}\label{CC1Figure1}
\end{figure}

Consider $t$ directed graphs with maximum degree 3 $(G_{1} = (V_{1},E_{1}),\ldots,G_{t}=(V_{t},E_{t}))$ that are equivalent with respect to $\RR$. We assume that each graph $G_{i}$ contains $n$ vertices, and $V_{i} = V = \lbrace v_{1},\ldots,v_{n} \rbrace$ for each $i \in [t]$. Informally speaking, the cross-compositions are ``compositions'' of the instances created by applying the reduction in the proof of \Thcref{theorem:coNPH} on each input graph (see \Ficref{CC1Figure1}). The agent set of the constructed instance is $A_{n} = \lbrace a_{1},\ldots,a_{n} \rbrace$, the item set is $I_{n} = \lbrace b_{1},\ldots,b_{n} \rbrace$, and there are $t$ layers, such that for each $i \in [t]$, layer $i$ contains the preference profile $P_{1}(G_{i})$. We set $k = n$, and we also use the assignment $p = p_{n}$. Notice that the resulting instance can clearly be constructed in time $poly(\sum_{i=1}^{t}{|G_{i}|})$. We complete the constructions for each parameter separately.

\smallskip
\noindent\textbf{The parameter $n + m + \alpha$.} We treat this reduction as an AND-cross-composition, and we set $\alpha = 1$. We prove that the set of agents, $A_{n}$, admits trading cycles in all the layers of the constructed instance with respect to $p_{n}$ if and only if all the graphs $G_{i}$, $i \in [t]$, admit a Hamiltonian cycle.

\smallskip
\noindent $\Rightarrow$: Suppose that the agents in $A_{n}$ admit trading cycles in all the layers of the constructed instance with respect to $p_{n}$. This implies that the agents in $A_{n}$ admit trading cycles in $P_{1}(G_{i})$ for all $i \in [t]$. By \Cocref{lemma:coNPHLemma2}, each $G_{i}$ admits a Hamiltonian cycle.

\smallskip 
\noindent $\Leftarrow$: Suppose that all the graphs $G_{i}$ admit a Hamiltonian cycle. Then, by \Cocref{lemma:coNPHLemma2}, the agents in $A_{n}$ admit trading cycles in $P_{1}(G_{i})$ for all $i \in [t]$. Then, the agents in $A_{n}$ admit trading cycles in all the layers of the constructed instance. 

Due to \Obcref{observation:notAOptimalA}, this proves that the resulting instance is a \yes-instance\ of \coVAAssign\ if and only if all the input instances are \yes-instance.  We also have that $n + m + \alpha \le 2n + 1 = \OO(\max_{i=1}^{t}{|G_{i}|})$. Thus, by \Prcref{prop:noKern}, \vAAssign\ does not admit a polynomial kernel with respect to the parameter $n + m + \alpha$, unless \NP$\subseteq $\coNPpoly. Since we set $k = n$, by \Cocref{col:nEquiv}, the resulting instance can also represent an equivalent instance of \coVCAssign. Then, the same result also holds for \vCAssign.

\smallskip
\noindent\textbf{The parameter $n + m + (\ell - \alpha)$.} For this parameter, we treat the reduction as an OR-cross-composition, and we set $\alpha = \ell = t$. We will prove that the resulting instance is a \yesinstance\ of \coVAAssign\ if and only if there exists $i \in [t]$, such that $G_{i}$ contains a Hamiltonian cycle. By \Obcref{observation:notAOptimalA}, we need to show that the set $A_{n}$ admits trading cycles in $\ell - \alpha + 1 = 1$ layer if and only if there exists $i \in [t]$ such that $G_{i}$ is a \yesinstance.

$\Rightarrow$: Suppose that $A_{n}$ admits a trading cycle in layer $i \in [t]$. Thus, $A_{n}$ admits a trading cycle in $P_{1}(G_{i})$. Then, by \Cocref{lemma:coNPHLemma2}, $G_{i}$ contains a Hamiltonian cycle.

$\Leftarrow$: Suppose that there exists $i \in [t]$ such that $G_{i}$ contains a Hamiltonian cycle. Then, by \Cocref{lemma:coNPHLemma2}, $A_{n}$ admits a trading cycle in $P_{1}(G_{i})$, which appears in layer $i$.

This proves the correctness of the OR-cross-composition. Notice that $n + m + (\ell - \alpha) = 2n = poly(\max_{i=1}^{t}{|G_{i}|})$ for the constructed instance. Then, by \Prcref{prop:noKern}, \vAAssign\ does not admit a polynomial kernel with respect to $n + m + (\ell - \alpha)$, unless \NP$\subseteq $\coNPpoly. Since we set $k = n$, by \Cocref{col:nEquiv}, the resulting instance can also represent an equivalent instance of \coVCAssign. Thus, the result holds for \vCAssign\ as well.

\smallskip
\noindent\textbf{\vDAssign.} In order to adapt the first AND-cross-composition to prove the same result for \vDAssign, we need to control the size of the subsets admitting trading cycles in all the layers. To do so, similarly to the proof of \Thcref{theorem:coNPH}, we add another layer to the constructed instance, which contains the preference profile $P_{2}(n)$. By \Lecref{lemma:coNPHLemma3}, a subset $K \subseteq A_{n}$ admits trading cycles in all the layers if and only if $K = A_{n}$ and each $G_{i}$ contains a Hamiltonian cycle. Namely, $p_{n}$ is $(n,1)$-\aoptimal\ if and only if it is $(n,1)$-\doptimal. Thus, we have that the resulting instance represents an equivalent instance of \vDAssign, and the result holds for this problem as well.

We now adapt the second OR-cross-composition to prove the same result for \vDAssign. On a high level, we insert $2 (\floor{\log{t}}+1)$ additional agents, $2 (\floor{\log{t}}+1)$ additional items, and $t$ additional layers. We insert into each layer of the original instance a unique set of preference lists that correspond to the new $2 (\floor{\log{t}}+1)$ agents, and we add another layer after it containing $P_{2}(n)$, appended with the same set of preference lists for the new agents.

Formally, we define the new agent set $C = \lbrace c_{i} | i \in [\floor{\log{t}}+1] \rbrace \cup \lbrace \overline{c_{i}} | i \in [\floor{\log{t}}+1] \rbrace$, and the new item set $D = \lbrace d_{i} | i \in [\floor{\log{t}}+1] \rbrace \cup \lbrace \overline{d_{i}} | i \in [\floor{\log{t}}+1] \rbrace$. The agent set and the item set of the constructed instance are $A = A_{n} \cup C$ and $I = I_{n} \cup D$, respectively. We define the assignment $p : A \rightarrow I$ by $p(a_{i}) = b_{i}$ for each $i \in [n]$; and $p(c_{i}) = d_{i}$, $p(\overline{c_{i}}) = \overline{d_{i}}$ for each $i \in [\floor{\log{t}}+1]$. Notice that the restriction of $p$ to $A_{n}$ is equal to $p_{n}$. We now construct $2t$ new layers as follows. Informally speaking, each input graph $G_{i}$ will have two corresponding layers, $2i-1$ and $2i$, which are compositions of the preference profiles $P_{1}(G_{i})$ or $P_{2}(n)$ for each graph $G_{i}$ (defined in Section \ref{sec:conphardness}) together with $2 (\floor{\log{t}}+1)$ unique preference lists for the agents in $C$. Intuitively, the goal of the agents and items in $C \cup D$ is to ensure that for each $i \in [t]$, there is a unique subset $C' \subseteq C$ of size $\floor{\log{t}}+1$ that is part of trading cycles in both layers $2i - 1$ and $2i$. This will imply that if there exists a subset of agents from $A_{n} \cup C$ admitting trading cycles in both of these layers, then this subset is a unique subset of size exactly $n + \floor{\log{t}}+1$ which does not admit trading cycles in the rest of the layers. If $G_{i}$ contains a Hamiltonian cycle, then we will have in layers $2i-1$ and $2i$ similar trading cycles as in $P_{1}(G_{i})$ and $P_{2}(n)$, but appended with a chain of $\floor{\log{t}}+1$ agents from $C$. For $i \in [t]$, we denote by $i[j]$ the $j$'th bit in the binary representation of $i$, for each $j \in [\floor{\log{t}}+1]$. Denote $\cc_{j}^{i} = c_{j}$ if $i[j] = 1$ and $\cc_{j}^{i} = \overline{c_{j}}$ if $i[j] = 0$. Similarly, $\dd_{j}^{i} = d_{j}$ if $i[j] = 1$ and $\dd_{j}^{i} = \overline{d_{j}}$ if $i[j] = 0$. Notice that $p(\cc_{j}^{i}) = \dd_{j}^{i}$. Assume that $\overline{\overline{c_{j}}} = c_{j}$ for each $j \in [\floor{\log{t}}+1]$.

We now construct the layers formally. For each $i \in [t]$, we create two preference profiles, $Q_{2i-1}$ and $Q_{2i}$, that appear in layers $2i-1$ and $2i$, respectively. The preference profile $Q_{2i-1}$ extends (and slightly modifies) $P_{1}(G_{i})$ as follows.

\begin{framed}[1\columnwidth]
\begin{itemize}
  \item $a_{i}\ $: $\lbrace b_{j} | (v_{i},v_{j}) \in E_{i} \rbrace\ \text{(in arbitrary order)}\ >\ b_{i}\ \ \forall i\in[n - 1]$
  \item $a_{n}\ $: $\ \dd_{1}^{i}\ >\ b_{n}$
  \item $\cc_{j}^{i}\ $: $\ \dd_{j+1}^{i}\ >\ \dd_{j}^{i}\ \ \forall j \in [\floor{\log{t}}]$
  \item $\cc_{\floor{\log{t}}+1}^{i}\ $: $\ \lbrace b_{j} | (v_{n},v_{j}) \in E_{i} \rbrace\ \text{(in arbitrary order)}\ >\ \dd_{\floor{\log{t}}+1}^{i}$
  \item $\overline{\cc_{j}^{i}} :\ \overline{\dd_{j}^{i}}\ \ \forall j \in [\floor{\log{t}}+1]$
\end{itemize}
\end{framed}

The preference profile $Q_{2i}$ extends (and slightly modifies) $P_{2}(n)$ as follows.

\begin{framed}[.7\columnwidth]
\begin{itemize}
  \item $a_{1}\ $: $\ b_{2}\ >\ b_{1}$
  \item $a_{2}\ $: $\ b_{3}\ >\ b_{2}$
  \item $\ldots$
  \item $a_{n-1}\ $: $\ b_{n}\ >\ b_{n-1}$
  \item $a_{n}\ $: $\ \dd_{1}^{i} >\ b_{n}$
    \item $\cc_{j}^{i}\ $: $\dd_{j+1}^{i}\ >\ \dd_{j}^{i}\ \ \forall j \in [\floor{\log{t}}]$
  \item $\cc_{\floor{\log{t}}+1}^{i}\ $: $\ b_{1}\ >\ \dd_{\floor{\log{t}}+1}^{i}$
  \item $\overline{\cc_{j}^{i}} : \overline{\dd_{j}^{i}}\ \ \forall j \in [\floor{\log{t}}+1]$
\end{itemize}
\end{framed}

\begin{figure*}[t]
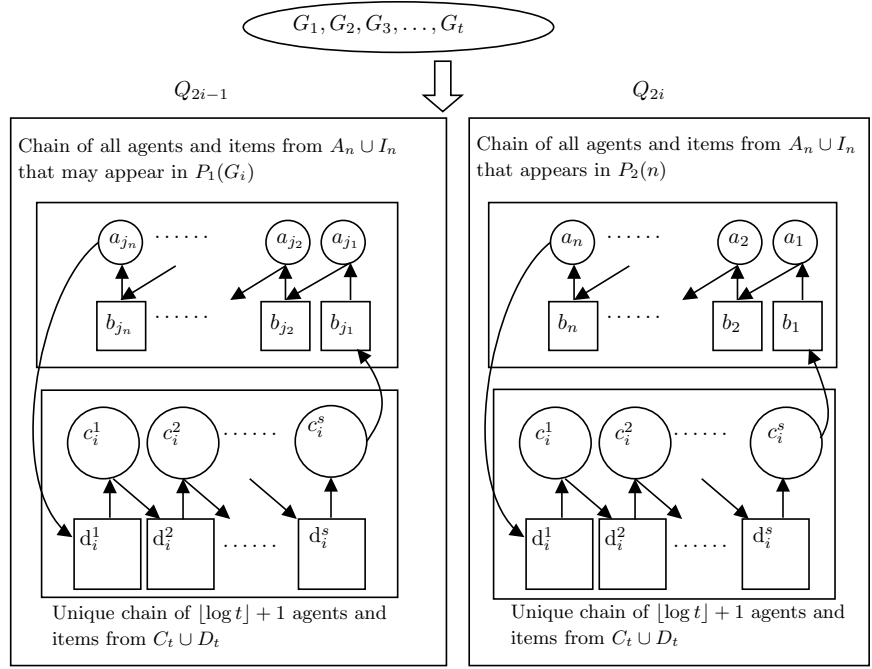

\center \scalebox{.82}{\tikzfig{Figures/CC2Figure1}}
\caption{$Q_{2i-1}$ is a composition of a modification of $P_{1}(G_{i})$ with $2s$ unique preference lists for the agents in $C$ (where $s = \floor{\log{t}} + 1$). $(a_{j_{1}},b_{j_{1}},\ldots,a_{j_{n}},b_{j_{n}})$ is a trading cycle in $P_{1}(G_{i})$ where $j_{n} = n$ (this occurs only when $G_{i}$ is a \yes-instance) if and only if $(a_{j_{1}},b_{j_{1}},\ldots,a_{j_{n}},b_{j_{n}},\cc_{1}^{i},\dd_{1}^{i},\ldots,\cc_{\floor{\log{t}}+1}^{i},\dd_{\floor{\log{t}}+1}^{i})$ is a trading cycle in $Q_{2i-1}$. The trading graph of $Q_{2i}$ contains the trading cycle $(a_{1},b_{1},\ldots,a_{n},b_{n},\cc_{1}^{i},\dd_{1}^{i},\ldots,\cc_{\floor{\log{t}}+1}^{i},\dd_{\floor{\log{t}}+1}^{i})$, which corresponds to the trading cycle $(a_{1},b_{1},\ldots,a_{n},b_{n})$ in $P_{2}(n)$.}
\label{CC2Figure1}
\end{figure*}

The possible trading cycles in $Q_{2i-1}$ and $Q_{2i}$ are shown in \Ficref{CC2Figure1}. We finally set $k = n + \floor{\log{t}}+1$ and $\alpha = \ell - 1 = 2t - 1$ (then $\ell - \alpha + 1 = 2$).
Let us now claim the following, to relate trading cycles in the original and new preference profiles $P_{1}(G_{i})$ and $Q_{2i-1}$

\begin{myclaim}\label{claim:or-cc-claim1}
Let $i\in[t]$ and let $W = (a_{j_{1}},b_{j_{1}},\ldots,a_{j_{n}},b_{j_{n}})$ be a sequence of agents and their assigned items with respect to $p_{n}$ such that $j_{n} = n$. Then, $W$ is a trading cycle in $P_{1}(G_{i})$ with respect to $p_{n}$ if and only if $W' = (a_{j_{1}},b_{j_{1}},\ldots,a_{j_{n}},b_{j_{n}},\cc_{1}^{i},\dd_{1}^{i},\ldots,\cc_{\floor{\log{t}}+1}^{i},\dd_{\floor{\log{t}}+1}^{i})$ is a trading cycle in $Q_{2i-1}$ with respect to $p$ (see \Ficref{CC2Figure1}).
\end{myclaim}

\begin{proof}

\smallskip
\noindent
$\Rightarrow$: Assume that $W$ is a trading cycle in $P_{1}(G_{i})$ with respect to $p_{n}$. By the construction of $Q_{2i-1}$, observe that the trading graph of $Q_{2i-1}$ contains the paths $b_{i_{1}} \rightarrow a_{i_{1}} \rightarrow \ldots \rightarrow b_{i_{n}} \rightarrow a_{i_{n}}$, and $b_{n} \rightarrow a_{n} \rightarrow \dd_{1}^{i} \rightarrow \cc_{1}^{i} \rightarrow \ldots \rightarrow \dd_{\floor{\log{t}}+1 }^{i} \rightarrow \cc_{\floor{\log{t}}+1 }^{i} \rightarrow b_{i_{1}}$. By concatenating these two paths, we have that $Q_{2i-1}$ contains the trading cycle $W'$.

\smallskip
\noindent
$\Leftarrow$: Assume that $W'$ is a trading cycle $Q_{2i-1}$ with respect to $p$. By the construction of $Q_{2i-1}$, note that the trading graph of $P_{1}(G_{i})$ contains the path $b_{i_{1}} \rightarrow a_{i_{1}} \rightarrow \ldots \rightarrow b_{i_{n}} \rightarrow a_{i_{n}}$. Since $\cc_{\floor{\log{t}}+1}^{i}$ prefers all the items that correspond to neighbors of $v_{n}$ to which $v_{n}$ points, we have that $(v_{n},v_{i_{1}}) \in E_{i}$, and $a_{n}$ prefers $b_{i_{1}}$ over $b_{n}$ in $P_{1}(G_{i})$. Thus, $P_{1}(G_{i})$ contains the trading cycle $W$ with respect to $p_{n}$.\qed
\end{proof}

We proceed to consider trading cycles in the new profiles, now for $Q_{2i}$.

\begin{myclaim}\label{claim:or-cc-claim2}
For each $i \in [t]$, the only trading cycle in $Q_{2i}$ with respect to $p$ is $(a_{1},b_{1},\ldots,a_{n},b_{n},\cc_{1}^{i},\dd_{1}^{i},\ldots,\cc_{\floor{\log{t}}+1}^{i} , \dd_{\floor{\log{t}}+1}^{i})$.
\end{myclaim}
\begin{proof}
By \Obcref{obs:coNPHObs1}, the only trading cycle in $P_{2}(n)$ with respect to $p_{n}$ is $(a_{1},b_{1},\ldots,a_{n},b_{n})$. Observe that the trading graph of $Q_{2i}$ contains the paths $b_{1} \rightarrow a_{1} \rightarrow \ldots \rightarrow b_{n} \rightarrow a_{n}$ and $a_{n} \rightarrow \dd_{1}^{i} \rightarrow \cc_{1}^{i} \rightarrow \ldots \rightarrow \dd_{\floor{\log{t}}+1}^{i} \rightarrow \cc_{\floor{\log{t}}+1}^{i}$, and the only trading cycle is constructed by concatenating these two paths.\qed 
\end{proof}

By these results, we conclude the following.

\begin{myclaim}\label{claim:or-cc1}
Let $i \in [t]$. The only subset of agents that can admit trading cycles in both $Q_{2i-1}$ and $Q_{2i}$ with respect to $p$ is $A_{n} \cup \lbrace \cc_{j}^{i} | j \in [\floor{\log{t}}+1] \rbrace$.
\end{myclaim}

\begin{myclaim}\label{claim:or-cc2}
Let $i,j \in [t]$ such that $i \neq j$, and let $K_{i},K_{j} \subseteq A$. Suppose that $K_{i}$ admits a trading cycle in $Q_{2i-1}$ or in $Q_{2i}$ and suppose that $K_{j}$ admits a trading cycle in $Q_{2j-1}$ or in $Q_{2j}$. Then, $K_{i} \neq K_{j}$.
\end{myclaim}

We now prove that the resulting instance is a \yes-instance\ of \coVDAssign\ if and only if there exists $i \in [t]$ such that $G_{i}$ admits a Hamiltonian cycle. By \Obcref{observation:notAOptimalA}, we will show that there exists subset $K \subseteq A$ with $|K| \leq n + \floor{\log{t}}+1$ such that there exist two layers $i,j$ where the agents in $K$ admit trading cycles if and only if there exists $i \in [t]$ such that $G_{i}$ admits a Hamiltonian cycle. 

\smallskip
\noindent
$\Rightarrow$: Suppose that there exists $K \subseteq A$, with $|K| \leq n + \floor{\log{t}}+1$, and $i , j \in [t]$ such that $i \neq j$, and the agents in $K$ admit trading cycles in both layers $i,j$. By Claims \ref{claim:or-cc1} and \ref{claim:or-cc2}, we have that there exists $\tilde{i} \in [t]$ such that $i = 2\tilde{i} - 1$, $j = 2\tilde{i}$, and $K = A_{n} \cup \lbrace \cc_{j}^{\tilde{i}} | j \in [\floor{\log{t}}+1] \rbrace$. By \Clcref{claim:or-cc-claim1}, $P_{1}(G_{\tilde{i}})$ contains a trading cycle that contains all the agents in $A_{n}$. Moreover, by \Obcref{obs:coNPHObs1}, notice that $P_{2}(n)$ also contains a trading cycle over all the agents from $A_{n}$. Then, by \Cocref{lemma:coNPHLemma2}, $G_{i}$ admits a Hamiltonian cycle. 

\vspace{0.5em}
\noindent $\Leftarrow$: Suppose that there exists $i \in [t]$ such that $G_{i}$ admits a Hamiltonian cycle. Then, by \Lecref{lemma:coNPHLemma1}, $P_{1}(G_{i})$ contains some trading cycle $(a_{i_{1}},b_{i_{1}},\ldots,a_{i_{n}},b_{i_{n}})$ over all the agents in $A_{n}$. By \Clcref{claim:or-cc-claim1}, $Q_{2i-1}$ contains a trading cycle over all the agents from $K = A_{n} \cup \lbrace \cc_{j}^{i} | j \in [\floor{\log{t}}+1] \rbrace$. In addition, by \Clcref{claim:or-cc-claim2}, $K$ also admits a trading cycle in $Q_{2i}$ with respect to $p$. Then, we have that $K$ admits trading cycles in both layers $2i-1$ and $2i$, thus the resulting instance is a \yes-instance\ by \Obcref{observation:notAOptimalA}. 
The construction can be clearly be implemented in polynomial time in $\sum_{i=1}^{t}{|G_{i}|}$. Notice that $n + m + (\ell - \alpha) = 2n + 4(\floor{\log{t}}+1) + 1  = 2n + 4\floor{\log{t}} + 5 = poly(\max_{i=1}^{t}{|G_{i}|} + \log(t))$ for the constructed instance. So, by \Prcref{prop:noKern}, \vDAssign\ does not admit a polynomial kernel with respect to $n + m + (\ell - \alpha)$, unless \NP$\subseteq $\coNPpoly.\qed
\end{proof}
\vspace{-1em}

%!TEX root = Main.tex
\section{Conclusion and Future Research}
In this paper, we introduced a generalization of the verification variant of the general assignment problem where each agent is equipped with multiple incomplete preference lists. We defined three natural concepts of optimality, we considered several natural parameters and we presented an almost comprehensive picture of the parameterized complexity of the corresponding problems with respect to them. We proved that the problems are \paraCoH\ with respect to $\ell + d + (n - k)$. We also proved that the three problems admit polynomial kernels when parameterized by $\nalloc + \ell$, but that they are unlikely to admit polynomial kernels with respect to $n + m + \alpha$ and $n + m + (\ell - \alpha)$. Additionally, we proved that the problems are \coWOH\ with respect to $k + \ell$. However, we showed that \vAAssign\ and \vDAssign\ admit \XP\ algorithms with respect to $k$, and even \FPT\ algorithms with respect to $k + d$. We also provided $\OO^{*}(2^{\nalloc})$-time algorithms for the three problems. This proved that the problems are \FPT\ with respect to the parameters $\nalloc$, $n$, and $m$. Still, two questions remained open: 
\begin{enumerate}
\item Is it possible to obtain an $\OO^{*}((2 - \varepsilon)^{\nalloc})$-time algorithm, for some fixed $\varepsilon > 0$, for each one of the problems?
\item Does \vDAssign\ admit \XP\ algorithms with respect to the parameters $k$, $k + \ell$, and $k + d$?
\end{enumerate}

\smallskip
\noindent{\bf Additional Directions for Future Research.}
Continuing our research, it may be interesting to consider a new concept of optimality: We introduce the new notion of  {\em $(k,\alpha)$-\boptimality}, which weakens the notion of $(k,\alpha)$-optimality as follows. Consider an instance $(A,I,P_{1},\dots,P_{\ell},k,\alpha,p)$ of $\vAAssignFull$ for which $p$ is not $(k,\alpha)$-\aoptimal. Thus, there exists a group of agents $K$ of size $k$ which admits trading cycles (where agents may appear in any order) in some $\ell - \alpha + 1$ layers. Since the trading cycles in these layers are not necessarily the same, there may not exist one ``strategy'' that solves all these conflicts and improves the status of the agents in $K$ at once. In particular, if the agents in $K$ perform a possible beneficial trade in one layer, their status may get worse in other layers. Thus, one can claim that $p$ can be ``optimal'' since each ``small'' group cannot benefit in some $\ell - \alpha + 1$ layers in parallel. Informally, $(k,\alpha)$-\boptimality\ considers the {\em order} of the trading cycles and requires that for each subset of agents of size $k$, there exist some $\alpha$ layers where the agents in the subset cannot perform {\em the same} beneficial trade. It is formally defined as follows:

\begin{definition} [{\em $(k,\alpha)$-\boptimality}] \label{def:BOptimality}
An assignment $p$ is {\em $(k,\alpha)$-\boptimal} for an instance $(A,I,P_{1},\ldots,P_{\ell})$ if it satisfies the following conditions:
\begin{enumerate}
\item For each subset of agents $K \subseteq A$ such that $|K| = k$, there exist $\alpha$ layers $i_{1},\ldots,i_{\alpha}$ such that there does not exist a trading cycle $C$ containing all the agents in $K$ (with no additional agents) such that $C$ appears in layer $i_{j}$, for each $j \in [\alpha]$.
\item If $k = 1$, then for each $a \in A$, there exist $\alpha$ layers $i_{1},\ldots,i_{\alpha}$ such that there does not exist a self loop $L$ such that $a$ admits $L$ in layer $i_{j}$, for each $j \in [\alpha]$.
\end{enumerate}
\end{definition}

Thus, as a direction for future research we propose to study the new decision and verification problems: \bAssignFull\ and \vBAssignFull\ (which correspond to the notion of $(k,\alpha)$-\boptimality). 

Another direction is to extend the optimality notions to support more complicated ``inter-list interactions''. To explain this, suppose that some assignment $p$ is not $(k, \alpha)$-\boptimal. So, there exists a subset of $k$ agents and $\alpha$ layers where they can trade, using the same trading cycle, and benefit. However, performing the trading cycle may make the assignment worse for them in a large number (potentially $\ell - \alpha$) of other layers. So, perhaps such assignments should still be considered optimal. 

We remark that no notion of optimality is better or worse, but the choice depends on the scenario at hand. For example, will a subset of $k$ agents rebel if it finds many layers where it is dissatisfied, or only if it actually has a strategy that improves its situation? More philosophically, how do we know if our assignment is good or bad? From a public opinion point of view, the unordered variant may make more sense, but from a practitioner's point of view (who should actually improve an assignment if need be), the ordered variant might make more sense. We also remark that when $k = n$, the unordered version corresponds to global optimality, while the ordered version does not.

Another direction is to consider weighted versions of the problems. In this paper, we considered the basic ``unweighed'' model of the problems (since this is the first study of this kind). That is, all the criteria (layers) have the same importance. There are cases where some criteria may have higher importance than others, and we would like to give them a higher weight. A straightforward way to model these cases is by having several copies of layers. However, if weights are high and varied, this might lead to inefficiency.

Another approach is to generalize our model and concepts to allow ties in the preference lists. It is interesting to see how adding this feature will affect our results.

Lastly, we suggest to test our results practically, i.e.~implementing the algorithms for the problems, and testing them on real data sets. Besides the analysis of running times in practice, we find it interesting to see how much effect does using different notions of optimality has on the solutions, in particular, how do the solutions vary.
\newpage
\bibliographystyle{splncs04}
\noindent\bibliography{Refs}
\end{document}